\documentclass[english,A4paper, margin=1in,11pt]{article}
\usepackage{geometry}
\geometry{ 
	a4paper,
	total={160mm,257mm},
	left=30mm,
	top=20mm,  
}  
\usepackage{todonotes}
\usepackage{lmodern}

\usepackage[T1]{fontenc}
\usepackage[latin9]{inputenc}
\usepackage{babel}
\usepackage{textcomp}
\usepackage{amsmath} 
\usepackage{amsthm} 
\usepackage{amssymb} 
\usepackage{dsfont} 

\usepackage{graphicx}
\usepackage{epstopdf}
\usepackage{todonotes}
\usepackage{comment}
\usepackage[unicode=true,pdfusetitle,
bookmarks=true,bookmarksnumbered=false,bookmarksopen=false,
breaklinks=false,pdfborder={0 0 0},pdfborderstyle={},backref=false,colorlinks=false]
{hyperref}
\usepackage{breakurl}

\makeatletter 
\numberwithin{equation}{section}
\theoremstyle{plain} 
\newtheorem{thm}{\protect\theoremname}[section]
\theoremstyle{definition}
\newtheorem{definition}[thm]{\protect\definitionname}

\theoremstyle{remark}
\newtheorem{remark}[thm]{\protect\remarkname}
\theoremstyle{plain}
\newtheorem{lemma}[thm]{\protect\lemmaname}
\theoremstyle{plain}
\newtheorem{proposition}[thm]{\protect\propositionname}
\theoremstyle{definition}
\newtheorem{example}[thm]{\protect\examplename}
\theoremstyle{plain}
\newtheorem{corollary}[thm]{\protect\corollaryname}

\usepackage[round]{natbib}

\makeatother
\providecommand{\corollaryname}{Corollary}
\providecommand{\definitionname}{Definition}
\providecommand{\examplename}{Example}
\providecommand{\lemmaname}{Lemma}
\providecommand{\propositionname}{Proposition}
\providecommand{\remarkname}{Remark}
\providecommand{\theoremname}{Theorem}

\newtheorem{assumption}{Condition}
\newtheorem{problem}{Problem}

\setlength{\parskip}{1mm} 
\renewcommand{\geq}{\geqslant}
\renewcommand{\leq}{\leqslant}

\setlength{\marginparwidth}{0pt}
\setlength{\marginparsep}{0pt}

\usepackage{setspace}
\begin{document}
	\global\long\def\Pa{\mathbb{P}^{\ast}}
	
	\global\long\def\P{\mathbb{P}}
	
	\global\long\def\R{\mathbb{R}}
	
	\global\long\def\Q{\mathbb{Q}}
	
	\global\long\def\PinP{\mathbb{P}\in\mathcal{P}}
	
	\global\long\def\Xa{X^{\ast}}
	
	\global\long\def\ella{\ell^{\ast}}
	
	\global\long\def\E{E}

	\title{\bf Cost-efficient Payoffs under Model Ambiguity}
	\author{Carole Bernard\thanks{\underline{Corresponding author:} 
			Carole Bernard, Department of Accounting, Law and Finance at Grenoble
			Ecole de Management (email: \texttt{carole.bernard@grenoble-em.com}) and Department of Economics and Political Sciences at Vrije
			Universiteit Brussel (VUB). },\\
		Gero Junike\thanks{Institute of Mathematics, Carl von Ossietzky University, 26111 Oldenburg, Germany (e-mail: \texttt{gero.junike@uol.de}). }\\
		Thibaut Lux\thanks{Helvetia Insurance Group (e-mail: \texttt{thibaut.lux@helvetia.ch}).}\\
		Steven Vanduffel\thanks{ Steven Vanduffel, Department of Economics and Political Sciences at Vrije
			Universiteit Brussel (VUB). (e-mail: \texttt{steven.vanduffel@vub.ac.be}). } \thanks{We thank Ruodu Wang for constructive comments on an earlier version. The views expressed in this manuscript do not represent the opinions of Helvetia Insurance Group. This work was supported by the FWO  projects G015320N and S006721N. }
	}
	\date{\today} 
	\maketitle

\begin{abstract}
\cite{dybvigJoB,dybvigRFS} solves in a complete market setting the problem of finding a payoff that is cheapest possible in reaching a given target distribution  (``cost-efficient payoff''). In the presence of ambiguity, the distribution of a payoff is, however, no longer known with certainty. We study the problem of finding the cheapest possible payoff whose  worst-case distribution stochastically dominates a given target distribution (``robust cost-efficient payoff'') and determine solutions under certain conditions. We study the link between ``robust cost-efficiency'' and the maxmin expected utility setting of Gilboa and Schmeidler, as well as more generally with robust preferences in a possibly non-expected utility setting. Specifically, we show that solutions to maxmin robust expected utility are necessarily robust cost-efficient. We illustrate our study with examples involving uncertainty both on the drift and on the volatility of the
risky asset. 
\\

\noindent KEYWORDS: Cost-efficient payoffs, model ambiguity, maxmin utility,
robust preferences, drift and volatility uncertainty\\
JEL classification: C02, C63, D80\\
AMS classification: 91B30, 62E17 
\end{abstract}
\newpage

\section{Introduction}

In a (complete) market without ambiguity, \cite{dybvigJoB,dybvigRFS}
characterizes optimal payoffs for agents having law-invariant increasing
preferences (e.g., expected utility maximizers). His result is based on the observation that any optimal
payoff $X$ must be cost-efficient in the sense that there cannot exist
another payoff with the same probability distribution that is strictly
cheaper than $X$. He then derives, for a given target distribution
of terminal wealth, the payoff that achieves this target distribution
at the lowest possible cost (cost-efficient payoff). 
Optimal portfolios are thus driven by distributional constraints rather than appearing as a solution to some optimal expected utility problem.  
In this regard, \cite{brennan1981optimal} note that 
\textit{``from a practical point of view it may well prove easier for the investor to choose directly his optimal payoff function than it would be for him to communicate his utility function to a portfolio manager.''} 
\cite{sharpe2000distribution} and  \cite{goldstein2008choosing} introduce a tool called the distribution builder, which makes it possible for investors to analyze distributions of terminal wealth and to choose their preferred one among
alternatives with equal cost; see also  
\cite{sharpe2011investors} and \cite{monin2014dynamic}.  Moreover, \cite{goldstein2008choosing} argue that such a tool makes it possible to better elicit investor's preferences. 

Our main objective is to extend Dybvig's results when there is 
uncertainty on the real-world probability measure. Uncertainty has become a prime issue in many academic domains, from
economics to environmental science and psychology. 
Model ambiguity refers to random phenomena or outcomes whose probabilities
are themselves unknown. For instance, the random outcome of a coin
toss is subject to model uncertainty when the probability of the coin
showing either a head or a tail is not or is at most partially known.
This notion of model ambiguity goes back to \cite{knight1921risk}
and is therefore commonly referred to as Knightian uncertainty.

In the presence of ambiguity, the probability distribution of a payoff is not anymore determined. Thus,  looking for a minimum cost payoff with a given probability distribution is no longer possible. However, investors may still determine a desired distribution function that they would like to achieve ``at least''.  In this paper, we look for  a minimum cost payoff that dominates  a target distribution for a chosen stochastic integral order under any plausible real-world probability distribution.  Our contributions are three-fold. First, we solve this problem explicitly for a general stochastic ordering under certain assumptions. Solutions to this problem are called ``robust cost-efficient.'' Second, we draw connections between such a minimum cost payoff and the problem of finding an optimal portfolio under ambiguity for general sets of robust preferences. Third, we present a number of examples, including one on the robust portfolio choice in the presence of volatility uncertainty.


Our results generalize the results on cost-efficiency
given in \cite{dybvigJoB,dybvigRFS}, \cite{cox2000dynamic} and \cite{bernard2013explicit,bernard2015rationalizing}. When there is no ambiguity on the real-world probability, the robust cost efficient payoffs coincide with the cost-efficient payoffs studied in the literature.  
To derive our results, we build on  the so-called quantile approach to solve the optimization of a law invariant increasing functional, e.g., \cite{schied2004neyman}, \cite{carlier2006law, carlier2008two, carlier2011optimal}, \cite{JZ08},  \cite{he2011portfolio,he2011portfoliob}, \cite{bernard2013explicit}, \cite{xu2016note} and \cite{ruschendorf2020construction}.

Specifically, we consider a static setting but we have incomplete markets and
are thus able to address uncertainty about volatility. We show that, under certain conditions, the solution to a general robust portfolio
maximization problem is equal to the solution of a classical portfolio
maximization problem under a \emph{least favorable measure} $\mathbb{P}^{\ast}$
with respect to some stochastic ordering. This was already shown by \cite{schied2005optimal} for the case of robust expected utility theory and using first order stochastic dominance ordering; here, however, we show that these results extend to the case of more general preferences. Specifically, we focus on the case of first order and second order stochastic dominance. 

Furthermore, we show that there is a natural correspondence between
optimal portfolios in the maxmin utility setting of \cite{gilboa1989maxmin}, with a concave increasing utility, and robust cost-efficient payoffs:
for any robust cost-efficient payoff $X^{\ast}$, there is a 
utility function such that $X^{\ast}$ solves the maxmin expected
utility maximization problem. We further show that the solution to
a robust maximization problem with respect to a general family of
preferences is cost-efficient. This result implies that instead of
solving a robust maximization problem with respect to a general family
of preferences one could solve an expected utility maximization problem
under the single measure $\mathbb{P}^{\ast}$ for a suitable concave
utility function.

	The literature on optimal payoff choice under ambiguity includes the seminal setting of  \cite{gilboa1989maxmin} that is, the so-called  ``maxmin expected utility,'' which was later referred to as robust utility functional by \cite{schied2009robust}.  Specifically, these authors
characterize preferences that have a robust utility numerical
representation $\underset{\mathbb{P}\in\mathcal{P}}{\min}\,E_{\mathbb{P}}[u(X)]$ for some set of probabilities $\mathcal{P}$.
\cite{gundel2005robust} provides a dual characterization of the
solution for robust utility maximization in both a complete and an
incomplete market model. \cite{klibanoff2005smooth} distinguish between
subjective beliefs, e.g., the definition of the set of possible or
plausible subjective probability measures, and ambiguity attitude,
i.e., a characterization of the agent's behavior toward ambiguity.
Based on \cite{klibanoff2005smooth}, \cite{gollier2011portfolio}
analyzes the effect of ambiguity aversion on the demand for the uncertain
asset in a portfolio choice problem.


\cite{schied2005optimal} solves the maximization problem of maxmin
expected utility of \cite{gilboa1989maxmin} in a general complete market model with dynamic trading,
provided there is a least favorable measure with respect
to FSD ordering. Specifically, he finds that the optimum for the maxmin
utility setting of \cite{gilboa1989maxmin} can be derived in the
standard expected utility setting under the least favorable measure.
\cite{schied2005optimal} works with a complete market model and mainly
in a static setting; dynamics only come into play when the martingale
method is applied to the static solutions. A survey on robust preferences and robust portfolio choice can be found in \cite{schied2009robust}.

The paper is organized as follows. The robust cost-efficiency problem is described in Section~\ref{subsec:Market-setting}. In Section~\ref{sec:Robust-cost-efficiency},
we  solve the robust cost-efficiency problem, and we
include two examples in a log-normal market with uncertainty
on the drift and the volatility along with another example in a L\'evy
market in which the physical measure is obtained by the Esscher transform.
In Section~\ref{sec:Robust-Portfolio-Selection}, we develop the correspondence
between robust cost-efficient payoffs and strategies that solve a
robust optimal portfolio problem, including the maxmin utility setting
of \cite{gilboa1989maxmin} as a special case. In Section~\ref{sec:Rationalizing-robust-cost-effici},
we show that the solution to a general robust optimal portfolio problem
can also be obtained as a solution to the maximization of the maxmin
utility setting of \cite{gilboa1989maxmin} for a well-chosen concave
utility function. Section~\ref{sec:Summary} concludes.

%

\section{Problem statement \label{subsec:Market-setting}}

We assume a static market setting in which trading only takes place today
and at the end of the planning horizon $T>0$. There is a bank account earning the continuously
compounded risk-free interest rate $r\in\mathbb{R}$. Let $\mathbb{R}_+=[0,\infty)$. Let $S_{T}:\Omega\rightarrow\mathbb{R}_+$
represent the random value of a risky asset at maturity. We denote by $S_0 >0$ its current value and by $\mathcal{F}$ the $\sigma$-algebra $S_T$ generates. 
Let $\mathcal{P}$ be a set of equivalent real-world probability measures on $(\Omega,\mathcal{F})$. The set
$\mathcal{P}$ can be thought of as a collection of probability measures that the investor deems plausible for the market. We define the set of \emph{payoffs} 
$\mathcal{X}=\left\{ g(S_{T}),\,g:\mathbb{R}_+\to\mathbb{R}_+\text{ is measurable and }  E_{\mathbb{Q}}[|g(S_{T})|]<\infty    \right\},$
where $\mathbb{Q}$ is a martingale
measure, equivalent to all $\mathbb{P}\in\mathcal{P}.$ Furthermore, for any $X\in\mathcal{X}$, its price is given by $e^{-rT}E_{\mathbb{Q}}[X]$. 

\begin{remark}\label{breeden}
Under the assumption that all call options $(S_T-K)^{+}$, $K \geq 0$ are traded and under the condition that the function mapping every $K\geq0$ to the price of $(S_T-K)^{+}$ is twice
differentiable with respect to $K$, \cite{breeden1978} show that any $X \in \mathcal{X}$ can almost surely be replicated using a static portfolio of calls. 
\end{remark}

Consider an investor with a finite budget and planning horizon $T>0$ who wishes to invest in the market whilst having ambiguous views on the real-world probability measure. How can she find her optimal investment strategy?  As in \cite{schied2005optimal}, she could maximize some robust expected utility \`{a} la  \cite{gilboa1989maxmin}. The basic idea is then to look for a payoff that maximizes the worst case expected utility, reflecting the idea that the investor aims to protect against the worst whilst hoping for the best. However, it seems  easier for investors to specify the desired probability distribution of the terminal wealth
rather than a utility function (\cite{brennan1981optimal}, \cite{sharpe2000distribution}, \cite{goldstein2008choosing}). 


As in \cite{dybvig1988inefficient}, \cite{sharpe2000distribution}, \cite{vrecko2013investors}, and  \cite{bernard2013explicit}, 
we thus assume in this paper that the investor specifies a desired (cumulative)
distribution function $F_{0}$ of future terminal wealth.
Once the investor understands
which distribution function $F_{0}$ is acceptable to her, the natural
question arises as to how to find, under ambiguity, the cheapest portfolio with a distribution
function at maturity that is ``at least as good'' as $F_{0}$. This is the robust
cost-efficiency problem formalized hereafter. 
In this regard, we need to recall the concept of integral stochastic
ordering, see e.g., \cite{denuit2005}. In this paper, we denote by $\mathbb{F}$ a set of measurable functions from $\mathbb{R}_+$ to $\mathbb{R}$.
\begin{definition}[Stochastic integral ordering]
\label{def:integral_ordering}Let $G$ and $F$ be two distribution functions with support on $\mathbb{R_+}$.
The distribution function $G$ dominates the distribution function $F$ in \emph{integral stochastic ordering with respect to the set $\mathbb{F}$},
in notation $F\preceq_{\mathbb{F}}G$, if 
\begin{equation*}
\forall f\in\mathbb{F},\quad\int_{\mathbb{R_+}}f(x)dF\leq\int_{\mathbb{R_+}}f(x)dG,
\end{equation*}
such that expectations are finite.

\end{definition}

Let $\mathbb{F}_{FSD}$ denote the set of all non-decreasing functions from $\mathbb{R}_+$ to $\mathbb{R}$.
The corresponding stochastic integral ordering is called \emph{first
order stochastic dominance} (FSD). Furthermore, let $\mathbb{F}_{SSD}$
denote the set of all non-decreasing and concave functions from $\mathbb{R}_+$ to $\mathbb{R}$. The corresponding
stochastic integral ordering is called  \emph{second order
stochastic dominance} (SSD). It is well-known that FSD reflects the common agreement of all investors with law-invariant increasing preferences \cite[Theorem 1]{bernard2015rationalizing}, whereas SSD order reflects  the common agreement of those who have law-invariant increasing and diversification loving preferences (risk averse investors) \cite[Corollary 2.6]{BernardSturm2023}.        
\begin{problem}[Robust cost-efficiency problem]
\label{def:The-robust-cost-efficiency}The $\mathbb{F}-$\emph{robust
cost-efficiency problem for a distribution function $F_{0}$} is defined
as 
\begin{equation}
\underset{X\in\mathcal{B}_{F_{0}}^{\mathbb{F}}}{\inf}e^{-rT}E_{\mathbb{Q}}[X],\label{eq:rob_cost_eff_sproblem_class}
\end{equation}
in which $\mathcal{B}_{F_{0}}^{\mathbb{F}}$ denotes a class of admissible
payoffs defined as 
\[
\mathcal{B}_{F_{0}}^{\mathbb{F}}=\left\{ X\in\mathcal{X}\,:\,\forall\mathbb{P}\in\mathcal{P}:\,F_{0}\preceq_{\mathbb{F}}F_{X}^{\mathbb{P}}\right\} .
\]
A solution to \eqref{eq:rob_cost_eff_sproblem_class} is called a
\emph{$\mathbb{F}-$robust cost-efficient payoff}.
\end{problem}

As discussed above, the target distribution function of the investor
is $F_{0}$. That is, we are interested in all payoffs that have
a distribution function at maturity that is at least as good as $F_{0}$ under all plausible
scenarios $\mathbb{P\in\mathcal{P}}$. For example, when $\mathbb{F}= \mathbb{F}_{FSD}$, we care about payoffs having distribution functions $F_{X}^\mathbb{P},$ $\mathbb{P} \in \mathcal{P}$ that dominate  $F_0$ in FSD. 
In order not to ``throw away
investors' money'', see \cite{dybvig1988inefficient}, we then aim to determine 
the cheapest payoff among
the payoffs in the admissible set  $\mathcal{B}_{F_{0}}^{\mathbb{F}}$.  
In Theorem~\ref{thm:robust-cost-payoff} we provide solutions to the $\mathbb{F}-$robust cost-efficiency problem \eqref{eq:rob_cost_eff_sproblem_class}  
under regularity conditions on the set $\mathbb{F}$ and $F_0$. 


\cite{dybvigJoB,dybvigRFS} introduced the standard cost-efficiency
problem without ambiguity on the set of physical measures; that is, when $\mathcal{P}=\{\mathbb{P}\}$. Specifically, for some fixed $\mathbb{P}\in\mathcal{P}$, the problem he considered reads as 

	\begin{equation}
		\underset{X\in\mathcal{A}_{F_{0}}^{\mathbb{P}}}{\inf}e^{-rT}E_{\mathbb{Q}}[X],\label{eq:cost_eff_sproblem_class}
	\end{equation}
where
\[
\mathcal{A}_{F_{0}}^{\mathbb{P}}=\left\{ X\in\mathcal{X}\,:\,F_{0}=F_{X}^{\mathbb{P}}\right\} .
\]
We refer to this problem as the \emph{standard cost-efficiency problem}. Furthermore, we say that a payoff $X$ that is distributed with $F_X^\mathbb{P}$ is \emph{$\mathbb{P}-$cost-efficient} if $X$
solves the standard cost-efficiency problem (\ref{eq:cost_eff_sproblem_class})
under $\mathbb{P}$ w.r.t. $F_0=F_X^\mathbb{P}$. By \cite{bernard2013explicit}, a payoff $X$ is $\mathbb{P}-$cost-efficient if and only if $X$ is non-increasing in the state price $\mathbb{P}-$a.s., see also \citet[Proposition 2.5]{schied2004neyman}.


The standard cost-efficiency problem (\ref{eq:cost_eff_sproblem_class}) 
has been solved in \cite{dybvigJoB,dybvigRFS} and in \cite{bernard2013explicit},
see also Lemma~\ref{lem:N_cost_eff_P} 
in the Appendix. 
In Corollary~\ref{cor:singleton} we show that in the case in which  $\mathcal{P}$
is a singleton, the solution to the $\mathbb{F}-$robust cost-efficiency problem is unique and coincides with the solution to the standard cost-efficiency problem. 

\begin{remark}
	\label{rem:allPriors}For a fixed $X\in\mathcal{X}$, we have that in general $F_{X}^{\mathbb{P}}$ cannot be equal to $F_{0}$ for
	all $\mathbb{P}\in\mathcal{P}$. Therefore, we replace the condition
	$F_{0}=F_{X}^{\mathbb{P}}$ in the standard cost-efficiency problem
	with $F_{0}\preceq_{\mathbb{F}}F_{X}^{\mathbb{P}}$ in the robust setting.
\end{remark}


The next example anticipates Sections~\ref{subsec:Example:-Uncertainty-about-drift}
and~\ref{subsec:Examples} and is designed to help distinguish
between the standard and the robust cost-efficiency problem.
\begin{example}(Cost-efficient payoffs).
\label{exa:standard_vs_robust}Assume the real-world distribution
function of $S_{T}$ is log-normal. There are three investors: one
investor assumes that the drift of $S_{T}$ under the physical measure,
denoted by $\mathbb{P}^{\mu_{1}}$, is equal to $\mu_{1}>r$. Another
investor assumes that the drift is given by $\mu_{2}>\mu_{1}$ under the
physical measure, denoted by $\mathbb{P}^{\mu_{2}}$. A third investor
has ambiguity and assumes that the drift lies in the interval $[\mu_{1},\mu_{2}]$,
and thus considers a set $\mathcal{P}=\{\mathbb{P}^{\mu},\,\mu\in[\mu_{1},\mu_{2}]\}$
as the set of all plausible probability measures on $(\Omega,\mathcal{F})$.
The cheapest payoffs to obtain a fixed target distribution function
$F_{0}$ are well-known for investor one and two and are given by
\[
X_{1}^{\ast}:=F_{0}^{-1}\left(F_{S_{T}}^{\P^{\mu_{1}}}(S_{T})\right)\quad\text{and}\quad X_{2}^{\ast}:=F_{0}^{-1}\left(F_{S_{T}}^{\P^{\mu_{2}}}(S_{T})\right),
\]
respectively; see Proposition 3 in \cite{bernard2013explicit}. Within
the set $\mathcal{P}$, $\mathbb{P}^{\mu_{1}}$ corresponds to a pessimistic view of the stock price behavior, and
we will see in Section~\ref{subsec:Example:-Uncertainty-about-drift}
that $X_{1}^{\ast}$ also solves the robust cost-efficiency problem
for $F_{0}$ if $\mathbb{F}=\mathbb{F}_{FSD}$. In the case in which $\mathbb{F}=\mathbb{F}_{SSD}$,
$X_{1}^{\ast}$ also solves the robust cost-efficiency problem if additionally
$F_{0}^{-1}\circ F_{S_{T}}^{\P^{\mu_{1}}}$ is concave; see Section
\ref{subsec:Examples}.  This example
illustrates that the solution to the standard cost-efficiency problem
for arbitrarily $\mathbb{P}\in\mathcal{P}$ and the solution
to the robust cost-efficiency problem generally do not coincide.
\end{example}
%

\begin{remark}[{Uncertainty on target distribution $F_0$}]
\label{rem:ambi_F0}As in \cite{ruschendorf2016method}, we could
also consider uncertainty on the target distribution function $F_{0}.$
Specifically, in \cite{ruschendorf2016method} it is assumed that the
investor specifies finitely many acceptable distribution functions
$F^{1}_0,...,F^{N}_0$. As all $N$ distribution functions are acceptable
to the investor, she could solve the robust cost-efficiency problem
$N$ times and buy the cheapest solution among the $N$ solutions. 
It is also possible to consider a continuum
of acceptable distribution functions: generalizing \cite{ruschendorf2016method}
slightly, we could consider the set 
\[
\mathcal{G}=\left\{ F\ | F\ \text {is a cdf},\quad F_{0}\preceq_{\mathbb{F}}F\right\} ,
\]
as the set of distribution functions that are acceptable to the investor or
client. We could then consider the cost-efficiency problem with uncertainty
on the physical measure \emph{and} the target distribution by
\begin{equation}
\underset{X\in\mathcal{B}_{\mathcal{G}}^{\mathbb{F}}}{\inf}e^{-rT}E_{\mathbb{Q}}[X],\quad\mathcal{B}_{\mathcal{G}}^{\mathbb{F}}=\left\{ X\in\mathcal{X}\,| \, \ \forall\mathbb{P}\in\mathcal{P}\,\  \exists F\in\mathcal{G},\,F\preceq_{\mathbb{F}}F_{X}^{\mathbb{P}}\right\} .\label{eq:robsut_uncer_F0}
\end{equation}
However, if $\preceq_{\mathbb{F}}$ is transitive, it holds that $\mathcal{B}_{\mathcal{G}}^{\mathbb{F}}=\mathcal{B}_{F_{0}}^{\mathbb{F}}$,
i.e., 
problem (\ref{eq:rob_cost_eff_sproblem_class})
without uncertainty on the target distribution and problem (\ref{eq:robsut_uncer_F0})
are equivalent.
\end{remark}

\subsection{\label{subsec:Technical-assumptions} Assumptions}

In order to solve the robust cost 
efficiency problem (\ref{eq:rob_cost_eff_sproblem_class}), we need some regularity conditions on the set $\mathbb{F}$ and on the target distribution $F_0$. In this regard, we define some concepts.

Recall first the concept of a \emph{least favorable measure} introduced
by \cite{schied2005optimal} for the case $\mathbb{F}=\mathbb{F}_{FSD}.$ For $\mathbb{P}\in\mathcal{P}$, we define the corresponding \emph{likelihood ratio}\footnote{
		The random variable $\frac{e^{-rT}}{\ell^{\mathbb{P}}}$ is also called
		\emph{state price} because the price of a payoff $X\in\mathcal{X}$
		can be expressed by
		\[
		e^{-rT}E_{\mathbb{Q}}[X]=E_{\mathbb{P}}\left[\frac{e^{-rT}}{\ell^{\mathbb{P}}}X\right],\quad\mathbb{P}\in\mathcal{P}.
		\]
	} by $\ell^{\mathbb{P}}=\frac{d\mathbb{P}}{d\mathbb{Q}}$.

\begin{definition}[Least favorable measure with respect to $\mathbb{F}$]
\label{Least-favorable-measure}A measure $\mathbb{P}^{\ast}\in\mathcal{P}$ with corresponding
likelihood ratio $\ell^{\ast}:=\frac{d\mathbb{P}^\ast}{d\mathbb{Q}}$ is called a \emph{least favorable
measure with respect to $\mathbb{F}$ }if $F_{\ell^{\ast}}^{\mathbb{P}^{\ast}}\preceq_{\mathbb{F}}F_{\ell^{\ast}}^{\mathbb{P}}$
for all $\mathbb{P}\in\mathcal{P}$. 
\end{definition}

Definition~\ref{Least-favorable-measure} generalizes Definition 2.1
of \cite{schied2005optimal}, who assumed the existence of a least favorable measure  w.r.t.\, $\mathbb{F}_{FSD}$ to determine payoffs that solve the robust expected utility problem of Gilboa-Schmeidler. We also need the following definition.

\begin{definition}[Composition-consistency of $\mathbb{F}$]
The set $\mathbb{F}$ is said to be \emph{composition-consistent}
if for $f,g\in\mathbb{F}$ also $f\circ g\in\mathbb{F}$. 
\end{definition}

Note that the sets $\mathbb{F}_{FSD}$ and $\mathbb{F}_{SSD}$ are composition-consistent.
This follows from the fact that the composition of non-decreasing
(resp. non-decreasing and concave) functions is again non-decreasing
(resp. non-decreasing and concave). 

The following proposition provides conditions that guarantee the existence of a least favorable measure and turns out to be very useful for applications.  
\begin{proposition}[Sufficient conditions for the existence of a least favorable measure]
\label{cor:for examples}Assume that $\mathbb{F}$ is composition
consistent. If $F_{S_{T}}^{\mathbb{P}^{\prime}}\preceq_{\mathbb{F}}F_{S_{T}}^{\mathbb{P}}$
for some $\mathbb{P}^{\prime}\in\mathbb{P}$ and all $\mathbb{P}\in\mathcal{P},$
and $\ell^{\mathbb{P}^{\prime}}=f(S_{T})$ for some $f\in\mathbb{F}$,
then $\mathbb{P}^{\prime}$ is a least favorable measure w.r.t. $\mathbb{F}$. If, additionally,
$S_{T}$ is continuously distributed under $\mathbb{P}^{\prime}$
and $f$ is strictly increasing, then $\ell^{\mathbb{P}^{\prime}}$
is continuously distributed under $\mathbb{P}^{\prime}$.
\end{proposition}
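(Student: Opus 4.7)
The plan is direct: unpack the definition of the least favorable measure, rewrite expectations of the likelihood ratio in terms of $S_T$, and then invoke composition-consistency to reduce the inequality in the $\ell$-variable to the hypothesized stochastic order on $S_T$.

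First I would fix an arbitrary $\mathbb{P}\in\mathcal{P}$ and an arbitrary $g\in\mathbb{F}$. By Definition~\ref{def:integral_ordering}, the condition $F_{\ell^{\mathbb{P}'}}^{\mathbb{P}'}\preceq_{\mathbb{F}}F_{\ell^{\mathbb{P}'}}^{\mathbb{P}}$ to be established is
\[
E_{\mathbb{P}'}\bigl[g(\ell^{\mathbb{P}'})\bigr]\;\leq\;E_{\mathbb{P}}\bigl[g(\ell^{\mathbb{P}'})\bigr].
\]
Substituting $\ell^{\mathbb{P}'}=f(S_T)$ turns both sides into expectations of $(g\circ f)(S_T)$ under $\mathbb{P}'$ and $\mathbb{P}$ respectively. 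Composition-consistency of $\mathbb{F}$ guarantees $g\circ f\in\mathbb{F}$, so the assumption $F_{S_T}^{\mathbb{P}'}\preceq_{\mathbb{F}}F_{S_T}^{\mathbb{P}}$ applied to the test function $g\circ f$ yields
\[
E_{\mathbb{P}'}\bigl[(g\circ f)(S_T)\bigr]\;\leq\;E_{\mathbb{P}}\bigl[(g\circ f)(S_T)\bigr],
\]
which is exactly what we need. Since $g\in\mathbb{F}$ and $\mathbb{P}\in\mathcal{P}$ were arbitrary, $\mathbb{P}'$ is a least favorable measure with respect to $\mathbb{F}$.

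For the second statement, I would argue that if $f$ is strictly increasing then $f^{-1}(\{y\})$ consists of at most one point for every $y\in\mathbb{R}$, so $\mathbb{P}'\bigl(f(S_T)=y\bigr)\leq\mathbb{P}'(S_T=x)$ for a single $x$ (or is $0$ if no such $x$ exists). Because $S_T$ is continuously distributed under $\mathbb{P}'$, every such atom probability vanishes, and hence $\ell^{\mathbb{P}'}=f(S_T)$ has a continuous distribution under $\mathbb{P}'$.

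The only subtle point I anticipate is making sure the expectations in the definition of $\preceq_{\mathbb{F}}$ are well defined (finiteness of $E_{\mathbb{P}'}[(g\circ f)(S_T)]$ and $E_{\mathbb{P}}[(g\circ f)(S_T)]$); this is implicit in applying the ordering to the test function $g\circ f$ and mirrors the tacit finiteness clause in Definition~\ref{def:integral_ordering}. Apart from that caveat, the proof is essentially a single line once composition-consistency is invoked.
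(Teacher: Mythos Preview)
Your proof is correct and matches the paper's argument: both substitute $\ell^{\mathbb{P}'}=f(S_T)$ and invoke composition-consistency so that the hypothesis $F_{S_T}^{\mathbb{P}'}\preceq_{\mathbb{F}}F_{S_T}^{\mathbb{P}}$ applied to $g\circ f\in\mathbb{F}$ yields the required ordering on $\ell^{\mathbb{P}'}$. For the continuity claim the paper instead writes $F_{\ell^{\mathbb{P}'}}^{\mathbb{P}'}(x)=F_{S_T}^{\mathbb{P}'}(f^{-1}(x))$ via the continuity of the generalized inverse on the range of $f$ (citing Embrechts--Hofert), but your direct atomlessness argument is equally valid and slightly more elementary.
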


\begin{proof}
Let $\mathbb{P},\mathbb{P}^{\prime}\in\mathcal{P}$. Let $X$ be a
payoff and $f\in\mathbb{F}$. Note
that $F_{X}^{\mathbb{P}^{\prime}}\preceq_{\mathbb{F}}F_{X}^{\mathbb{P}}$
if and only if $E_{\mathbb{P}^{\prime}}[g(X)]\leq E_{\mathbb{P}}[g(X)]$
for all $g\in\mathbb{F}$ such that expectations are finite. Because
$\mathbb{F}$ is composition-consistent, it follows that
\begin{equation}
F_{X}^{\mathbb{P}^{\prime}}\preceq_{\mathbb{F}}F_{X}^{\mathbb{P}}\Rightarrow F_{f(X)}^{\mathbb{P}^{\prime}}\preceq_{\mathbb{F}}F_{f(X)}^{\mathbb{P}}.\label{eq:F_X =00003D> F_f(X)}
\end{equation}
The expression $\ensuremath{F_{\ell^{\mathbb{P}^{\prime}}}^{\mathbb{P}^{\prime}}\preceq_{\mathbb{F}}F_{\ell^{\mathbb{P}^{\prime}}}^{\mathbb{P}}}$
then follows by Equation (\ref{eq:F_X =00003D> F_f(X)}). Let $f$ be strictly increasing. By \cite{embrechts2013note},
the generalized inverse $f^{-1}$ of $f$ is continuous on the range of $f$.
Thus, it holds that 
\begin{align*}
\mathbb{P}^{\prime}\left(\ell^{\mathbb{P}^{\prime}}\leq x\right) & =\mathbb{P}^{\prime}\left(S_{T}\leq f^{-1}(x)\right)=F_{S_{T}}^{\mathbb{P}^{\prime}}\left(f^{-1}(x)\right),\quad x\in\mathbb{R},
\end{align*}
which implies that $\ell^{\mathbb{P}^{\prime}}$ is continuously distributed
under $\mathbb{P}^{\prime}$.
\end{proof}

We also need a definition that is, to the best of our knowledge, new
to the literature.
\begin{definition}[Cost-consistency of $\mathbb{F}$]
\label{def:cost_consistant}The set $\mathbb{F}$
is called \emph{cost-consistent }if for all $X,Y\in\mathcal{X}$ and
all $\mathbb{P}\in\mathcal{P}$ such that $X,Y$ are $\mathbb{P}-$cost-efficient,
$F_{X}^{\mathbb{P}}\preceq_{\mathbb{F}}F_{Y}^{\mathbb{P}}$ implies
$E_{\mathbb{Q}}[X]\leq E_{\mathbb{Q}}[Y]$ and, additionally, $F_{X}^{\mathbb{P}}\neq F_{Y}^{\mathbb{P}}$
implies $E_{\mathbb{Q}}[X]<E_{\mathbb{Q}}[Y]$. 
\end{definition}

As the set $\mathbb{F}_{SSD}$ is contained in $\mathbb{F}_{FSD}$,
the following proposition implies that $\mathbb{F}_{FSD}$ and $\mathbb{F}_{SSD}$
are cost-consistent. In Example~\ref{lem:TSD} we discuss a set that
is not cost-consistent.
\begin{proposition}
\label{lem:cost-consistent}
$\mathbb{F}_{SSD}$ is cost-consistent. Moreover, if $\mathbb{F}_{SSD}\subset\mathbb{F}$, then $\mathbb{F}$
is cost-consistent. 
\end{proposition}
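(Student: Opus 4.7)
The plan is to exploit the quantile representation of the cost of $\mathbb{P}$--cost-efficient payoffs and to reduce the claim to a Stieltjes integration-by-parts argument. Since both $X$ and $Y$ are $\mathbb{P}$--cost-efficient, they are anti-comonotonic with the state price $\xi:=e^{-rT}/\ell^{\mathbb{P}}$; by the Hardy--Littlewood--P\'olya rearrangement identity (the content of Lemma~\ref{lem:N_cost_eff_P} in the Appendix) their costs admit the explicit representation
\begin{equation*}
e^{-rT}E_{\mathbb{Q}}[X]=E_{\mathbb{P}}[\xi X]=\int_{0}^{1} F_{\xi}^{\mathbb{P},-1}(1-p)\, F_{X}^{\mathbb{P},-1}(p)\, dp,
\end{equation*}
and the analogous identity for $Y$. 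Setting $g(p):=F_{\xi}^{\mathbb{P},-1}(1-p)\ge 0$ (non-increasing in $p$) and $h(p):=F_{Y}^{\mathbb{P},-1}(p)-F_{X}^{\mathbb{P},-1}(p)$, the first assertion boils down to showing $\int_{0}^{1}g(p)h(p)\,dp\ge 0$.

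Next, I would invoke the standard quantile characterisation of $\mathbb{F}_{SSD}$--dominance: $F_{X}^{\mathbb{P}}\preceq_{\mathbb{F}_{SSD}}F_{Y}^{\mathbb{P}}$ is equivalent to $H(t):=\int_{0}^{t}h(p)\,dp\ge 0$ for every $t\in[0,1]$, with $H(0)=0$ and $H(1)\ge 0$. Stieltjes integration by parts (allowed since $g$ is of bounded variation and $H$ is absolutely continuous) then gives
\begin{equation*}
\int_{0}^{1} g(p)h(p)\,dp = g(1^{-})H(1)+\int_{[0,1]}H(p)\,d(-g)(p),
\end{equation*}
and both summands are non-negative: $g(1^{-})H(1)\ge 0$ because the state price is non-negative and $H(1)\ge 0$, while $\int H\,d(-g)\ge 0$ because $-g$ is non-decreasing (so $d(-g)$ is a positive measure) and $H\ge 0$. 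This yields $E_{\mathbb{Q}}[X]\le E_{\mathbb{Q}}[Y]$. For the strict inequality when $F_{X}^{\mathbb{P}}\neq F_{Y}^{\mathbb{P}}$, the function $h$ is not Lebesgue-a.e.\ zero, so the continuous function $H$ is strictly positive on some open subinterval; under the standing regularity of the state-price distribution (atomless with full support, so that $F_{\xi}^{\mathbb{P},-1}$ is strictly increasing), this forces $\int H\,d(-g)>0$.

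The second assertion is purely structural. If $\mathbb{F}_{SSD}\subset \mathbb{F}$, then the condition $F_{X}^{\mathbb{P}}\preceq_{\mathbb{F}}F_{Y}^{\mathbb{P}}$ in Definition~\ref{def:integral_ordering}, tested only against functions in $\mathbb{F}_{SSD}$, implies $F_{X}^{\mathbb{P}}\preceq_{\mathbb{F}_{SSD}}F_{Y}^{\mathbb{P}}$; the first part then applies verbatim, for both the weak and the strict inequality. The main obstacle I expect is precisely the strict-inequality step: the integration-by-parts bound can degenerate if flats of $F_{\xi}^{\mathbb{P}}$ cover the region where $h$ concentrates its sign-change, so a mild regularity assumption on the state-price distribution has to be either imposed or verified in the specific market models at hand.
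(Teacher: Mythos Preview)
Your argument is correct and supplies the details that the paper outsources: the paper merely states that the first assertion ``can be proven along the lines of the proof of Lemma~2 in \cite{bernard2019optimal}'' and then handles the second assertion exactly as you do, via the implication $F_X^{\mathbb{P}}\preceq_{\mathbb{F}}F_Y^{\mathbb{P}}\Rightarrow F_X^{\mathbb{P}}\preceq_{\mathbb{F}_{SSD}}F_Y^{\mathbb{P}}$ when $\mathbb{F}_{SSD}\subset\mathbb{F}$. Your quantile representation of the cost of a $\mathbb{P}$--cost-efficient payoff combined with the lower-partial-integral characterisation of SSD and Stieltjes integration by parts is precisely the standard route (and almost certainly the content of the cited external lemma), so there is no methodological divergence to report.

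Your caveat about the strict-inequality step is well placed and honest. As written, Definition~\ref{def:cost_consistant} quantifies over all $\mathbb{P}\in\mathcal{P}$ with no regularity hypothesis on $\ell^{\mathbb{P}}$, so strictly speaking the strict part needs an assumption such as atomlessness of $\xi$ (equivalently of $\ell^{\mathbb{P}}$). In the paper, however, cost-consistency is only ever invoked at the least favorable measure $\mathbb{P}^{\ast}$ (see the proof of Theorem~\ref{thm:robust-cost-payoff}), where Condition~\ref{assu:l*} supplies exactly the continuity of $F_{\ell^{\ast}}^{\mathbb{P}^{\ast}}$ that makes $g$ strictly decreasing and hence $d(-g)$ of full support. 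So the gap you identify is real at the level of the abstract definition but immaterial for the paper's applications; the paper's own proof, being a bare citation, does not address it either.
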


\begin{proof}
The cost-consistency of $\mathbb{F}_{SSD}$ can be proven along the lines of the proof of Lemma 2 in \cite{bernard2019optimal}. Furthermore,
$F_{X}^{\mathbb{P}}\preceq_{\mathbb{F}}F_{Y}^{\mathbb{P}}$ implies
$F_{X}^{\mathbb{P}}\preceq_{\mathbb{F}_{SSD}}F_{Y}^{\mathbb{P}}$,
which finishes the proof. 
\end{proof}

We now list a series of conditions that we often use to derive our main results: 

\begin{assumption}
	\label{assu:F0^-1}$F_{0}^{-1}$ is square integrable, i.e., $\int_{0}^{1}(F_{0}^{-1}(u))^{2}du<\infty$, and $F_{0}(x)=0$ for $x<0$.
\end{assumption}

\begin{assumption}
	\label{assu:setF}The set $\mathbb{F}$ is composition-consistent
	and cost-consistent.
\end{assumption}

\begin{assumption}
	\label{assu:least_fav} The set $\mathcal{P}$ contains a least favorable measure with respect to $\mathbb{F}.$ We denote this least favorable measure by $\mathbb{P}^{\ast}.$  
\end{assumption}

\begin{assumption}
	\label{assu:l*}Denote by $\ell^{\ast}$ the likelihood ratio of the
	least favorable measure $\mathbb{P}^{\ast}.$ We assume that 
	$x\mapsto F_{\ell^{\ast}}^{\mathbb{P}^{\ast}}(x)$ is continuous and
	that $\frac{1}{\ell^{\ast}}$ has finite variance under $\mathbb{P}^{\ast}$.
\end{assumption}

Condition~\ref{assu:F0^-1} is technical and ensures that the robust cost-efficiency
problem is well-posed, i.e., $\mathcal{B}_{F_{0}}^{\mathbb{F}}$ is
not empty, see Theorem~\ref{thm:robust-cost-payoff}. 

Condition~\ref{assu:least_fav} can also be found in \cite{schied2005optimal} for the case $\mathbb{F}=\mathbb{F}_{FSD}.$ Note that when $\mathbb{F}$ becomes larger, the condition~\ref{assu:least_fav} becomes stronger. Specifically, requiring a least favorable measure $\mathbb{P}^{\ast}\in\mathcal{P}$
with respect to $\mathbb{F}=\mathbb{F}_{FSD}$ is more stringent than in the case  $\mathbb{F}=\mathbb{F}_{SSD}$. 
In particular, Proposition
\ref{cor:for examples} provides sufficient conditions for the existence of a least favorable measure ${P}^{\ast} \in \mathcal{P}$ with respect to $\mathbb{F}$. 

The condition in~\ref{assu:l*} that $x\mapsto F_{\ell^{\ast}}^{\mathbb{P}^{\ast}}(x)$ is continuous 
distribution function under $\mathbb{P}^{\ast}$ 
is also made in a
setting without ambiguity in e.g., \cite{JZ08},  \cite{he2011portfolio,he2011portfoliob},  \cite{bernard2013explicit} and \cite{xu2016note} among many others. It is a strong assumption in the sense that we essentially exclude discrete settings.


\section{\label{sec:Robust-cost-efficiency}Solution of the robust cost-efficiency
problem}

In the next theorem, we make the assumption that $F_{0}^{-1}\circ F_{\ell^{\ast}}^{\mathbb{P}^{\ast}}\in\mathbb{F}$.
Note that this assumption is always true if $\mathbb{F}=\mathbb{F}_{FSD}$.
The assumption is also true if $\mathbb{F}=\mathbb{F}_{SSD}$, provided
that $F_{0}^{-1}\circ F_{\ell^{\ast}}^{\mathbb{P}^{\ast}}$ is concave.
\begin{thm}[$\mathbb{F}-$robust cost-efficient payoff]
\label{thm:robust-cost-payoff} Assume that the conditions 
\ref{assu:F0^-1},~\ref{assu:setF},~\ref{assu:least_fav} and~\ref{assu:l*} hold and that $F_{0}^{-1}\circ F_{\ell^{\ast}}^{\mathbb{P}^{\ast}}\in\mathbb{F}.$
Then, the $\mathbb{F}-$robust cost-efficiency problem for $F_{0}$
has a $\mathbb{P}^{\ast}-$a.s. unique solution given by 
\[
F_{0}^{-1}\left(F_{\ell^{\ast}}^{\mathbb{P}^{\ast}}\left(\ell^{\ast}\right)\right).
\]
\end{thm}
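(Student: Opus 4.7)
The plan is to verify that $X^{\ast}:=F_{0}^{-1}(F_{\ell^{\ast}}^{\mathbb{P}^{\ast}}(\ell^{\ast}))$ lies in $\mathcal{B}_{F_{0}}^{\mathbb{F}}$, then show it has minimal $\mathbb{Q}$-expectation among all admissible payoffs, and finally argue $\mathbb{P}^{\ast}$-a.s. uniqueness. Well-posedness (finiteness of $E_{\mathbb{Q}}[X^{\ast}]$) will follow from writing $E_{\mathbb{Q}}[X^{\ast}]=E_{\mathbb{P}^{\ast}}[X^{\ast}/\ell^{\ast}]$ and applying Cauchy--Schwarz together with Condition~\ref{assu:F0^-1} (square-integrability of $F_{0}^{-1}$) and Condition~\ref{assu:l*} (finite variance of $1/\ell^{\ast}$).

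For admissibility, since $\ell^{\ast}$ is continuously distributed under $\mathbb{P}^{\ast}$ (Condition~\ref{assu:l*}), the random variable $F_{\ell^{\ast}}^{\mathbb{P}^{\ast}}(\ell^{\ast})$ is uniform on $[0,1]$ under $\mathbb{P}^{\ast}$, so $F_{X^{\ast}}^{\mathbb{P}^{\ast}}=F_{0}$. For any other $\mathbb{P}\in\mathcal{P}$, I write $X^{\ast}=h(\ell^{\ast})$ with $h:=F_{0}^{-1}\circ F_{\ell^{\ast}}^{\mathbb{P}^{\ast}}\in\mathbb{F}$. Fix any $g\in\mathbb{F}$: by composition-consistency, $g\circ h\in\mathbb{F}$, so by the least-favorable property (Condition~\ref{assu:least_fav}) applied to $g\circ h$,
\[
E_{\mathbb{P}^{\ast}}[g(X^{\ast})]=E_{\mathbb{P}^{\ast}}[(g\circ h)(\ell^{\ast})]\le E_{\mathbb{P}}[(g\circ h)(\ell^{\ast})]=E_{\mathbb{P}}[g(X^{\ast})],
\]
which means $F_{X^{\ast}}^{\mathbb{P}^{\ast}}\preceq_{\mathbb{F}}F_{X^{\ast}}^{\mathbb{P}}$. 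Since $F_{X^{\ast}}^{\mathbb{P}^{\ast}}=F_{0}$ and $\preceq_{\mathbb{F}}$ is transitive, $F_{0}\preceq_{\mathbb{F}}F_{X^{\ast}}^{\mathbb{P}}$ for every $\mathbb{P}\in\mathcal{P}$, so $X^{\ast}\in\mathcal{B}_{F_{0}}^{\mathbb{F}}$.

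For optimality, pick any competitor $Y\in\mathcal{B}_{F_{0}}^{\mathbb{F}}$ and replace it by its Dybvig rearrangement under $\mathbb{P}^{\ast}$, namely $\widetilde{Y}:=F_{Y}^{\mathbb{P}^{\ast},-1}(F_{\ell^{\ast}}^{\mathbb{P}^{\ast}}(\ell^{\ast}))$. By Lemma~\ref{lem:N_cost_eff_P}, $\widetilde{Y}$ is $\mathbb{P}^{\ast}$-cost-efficient, $F_{\widetilde{Y}}^{\mathbb{P}^{\ast}}=F_{Y}^{\mathbb{P}^{\ast}}$, and $E_{\mathbb{Q}}[\widetilde{Y}]\le E_{\mathbb{Q}}[Y]$. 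Observe that $X^{\ast}$ is also $\mathbb{P}^{\ast}$-cost-efficient, because it is a non-decreasing function of $\ell^{\ast}$, equivalently non-increasing in the state price. Since $Y\in\mathcal{B}_{F_{0}}^{\mathbb{F}}$ gives $F_{0}\preceq_{\mathbb{F}}F_{Y}^{\mathbb{P}^{\ast}}=F_{\widetilde{Y}}^{\mathbb{P}^{\ast}}$ and $F_{X^{\ast}}^{\mathbb{P}^{\ast}}=F_{0}$, we get $F_{X^{\ast}}^{\mathbb{P}^{\ast}}\preceq_{\mathbb{F}}F_{\widetilde{Y}}^{\mathbb{P}^{\ast}}$. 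Now apply cost-consistency (Condition~\ref{assu:setF}) to the two $\mathbb{P}^{\ast}$-cost-efficient payoffs $X^{\ast}$ and $\widetilde{Y}$ to conclude $E_{\mathbb{Q}}[X^{\ast}]\le E_{\mathbb{Q}}[\widetilde{Y}]\le E_{\mathbb{Q}}[Y]$.

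For uniqueness, suppose $Y$ also attains the infimum. Then every inequality above is an equality. The strict part of cost-consistency forces $F_{X^{\ast}}^{\mathbb{P}^{\ast}}=F_{\widetilde{Y}}^{\mathbb{P}^{\ast}}$, and the $\mathbb{P}^{\ast}$-a.s.\ uniqueness of the Dybvig cost-efficient payoff with a prescribed distribution (Lemma~\ref{lem:N_cost_eff_P}) yields $X^{\ast}=\widetilde{Y}$ $\mathbb{P}^{\ast}$-a.s. Finally, $E_{\mathbb{Q}}[\widetilde{Y}]=E_{\mathbb{Q}}[Y]$ together with the same uniqueness statement applied to the common $\mathbb{P}^{\ast}$-distribution $F_{Y}^{\mathbb{P}^{\ast}}$ forces $\widetilde{Y}=Y$ $\mathbb{P}^{\ast}$-a.s., so $Y=X^{\ast}$ $\mathbb{P}^{\ast}$-a.s. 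The main obstacle is the admissibility step: it is the only place where composition-consistency, the least-favorable property, and transitivity of $\preceq_{\mathbb{F}}$ all need to be combined, and it is what ultimately forces the hypothesis $F_{0}^{-1}\circ F_{\ell^{\ast}}^{\mathbb{P}^{\ast}}\in\mathbb{F}$ to be imposed.
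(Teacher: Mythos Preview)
Your proof is correct and follows essentially the same route as the paper's: the same Cauchy--Schwarz argument for $E_{\mathbb{Q}}[X^{\ast}]<\infty$, the same use of composition-consistency and the least-favorable property for admissibility, the same Dybvig rearrangement $\widetilde{Y}$ combined with cost-consistency for optimality, and the same appeal to the strict part of cost-consistency plus uniqueness of the standard cost-efficient payoff for the uniqueness step. The only cosmetic difference is that your uniqueness argument spells out the intermediate equality $\widetilde{Y}=Y$ explicitly, whereas the paper argues by a direct case split on whether $F_{\hat{X}}^{\mathbb{P}^{\ast}}=F_{0}$ or not; the content is the same.
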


\begin{proof}
Recall that $\ell^{\ast}$ denotes the likelihood ratio that corresponds to
$\mathbb{P}^{\ast}$. Let 
\begin{align*}
X^{\ast}=F_{0}^{-1}\left(F_{\ell^{\ast}}^{\mathbb{P}^{\ast}}(\ell^{\ast})\right).
\end{align*}
As $F_{\ell^{\ast}}^{\mathbb{P}^{\ast}}\left(\ell^{\ast}\right)$
is uniformly distributed under $\mathbb{P}^{\ast}$ (Condition ~\ref{assu:l*}), it follows by Lemma \ref{lem:N_cost_eff_P} that $F_{X^{\ast}}^{\mathbb{P}^{\ast}}=F_{0}$;
and, by condition~\ref{assu:F0^-1} it holds that
\[
E_{\mathbb{P}^{\ast}}\left[\left(F_{0}^{-1}\left(F_{\ell^{\ast}}^{\mathbb{P}^{\ast}}\left(\ell^{\ast}\right)\right)\right)^{2}\right]=\int_{0}^{1}(F_{0}^{-1}(u))^{2}du<\infty.
\]
Condition~\ref{assu:l*} thus implies that $E_{\mathbb{Q}}[X^{\ast}]<\infty$
because
\begin{align*}
E_{\mathbb{Q}}[|X^{\ast}|] & =E_{\mathbb{P}^{\ast}}\left[\frac{1}{\ell^{\ast}}F_{0}^{-1}\left(F_{\ell^{\ast}}^{\mathbb{P}^{\ast}}\left(\ell^{\ast}\right)\right)\right]\\
 & \leq\sqrt{E_{\mathbb{P}^{\ast}}\left[\frac{1}{(\ell^{\ast})^{2}}\right]E_{\mathbb{P}^{\ast}}\left[\left(F_{0}^{-1}\left(F_{\ell^{\ast}}^{\mathbb{P}^{\ast}}\left(\ell^{\ast}\right)\right)\right)^{2}\right]}\\
 & <\infty.
\end{align*}
Therefore, $X^{\ast}\in\mathcal{X}$. By conditions~\ref{assu:setF} and~\ref{assu:least_fav}
and as $F_{0}^{-1}\circ F_{\ell^{\ast}}^{\mathbb{P}^{\ast}}\in\mathbb{F}$,
it follows from Equation (\ref{eq:F_X =00003D> F_f(X)}) that $F_{0}=F_{X^{\ast}}^{\mathbb{P^{\ast}}}\preceq_{\mathbb{F}}F_{X^{\ast}}^{\mathbb{P}}$
for all $\mathbb{P}\in\mathcal{P}$; hence, $X^{\ast}\in\mathcal{B}_{F_{0}}^{\mathbb{F}}$.
Let $Y\in\mathcal{B}_{F_{0}}^{\mathbb{F}}$ and define
\[
Y^{\ast}=\left[F_{Y}^{\mathbb{P}^{\ast}}\right]^{-1}\left(F_{\ell^{\ast}}^{\mathbb{P}^{\ast}}(\ell^{\ast})\right).
\]
Then $Y^{\ast}$ is $\mathbb{P}^{\ast}-$cost-efficient for $F_{Y}^{\mathbb{P}^{\ast}}$
and we have $F_{X^{\ast}}^{\mathbb{P}^{\ast}}=F_{0}\preceq_{\mathbb{F}}F_{Y}^{\mathbb{P}^{\ast}}=F_{Y^{\ast}}^{\mathbb{P}^{\ast}}$.
By condition~\ref{assu:setF}, $\mathbb{F}$ is cost-consistent,
which implies that $E_{\mathbb{Q}}[X^{\ast}]\leq E_{\mathbb{Q}}[Y^{\ast}]\leq E_{\mathbb{Q}}[Y]$.
Hence, every admissible payoff is more expensive than $X^{\ast}$.
We now show uniqueness. Let $\hat{X}$ be another solution to the
robust cost-efficiency problem. It holds that $F_{0}\preceq_{\mathbb{F}}F_{\hat{X}}^{\mathbb{P}^{\ast}}$.
If $F_{\hat{X}}^{\mathbb{P}^{\ast}}=F_{0}$ and $E_{\mathbb{Q}}[X^{\ast}]=E_{\mathbb{Q}}[\hat{X}]$,
then $X^{\ast}=\hat{X}$, $\mathbb{P}^{\ast}-$a.s. by Lemma~\ref{lem:N_cost_eff_P}
because the solution $X^{\ast}$ corresponds to the solution of the
standard $\mathbb{P}^{\ast}-$cost-efficiency problem for $F_{0}$,
which has a unique solution. If $F_{0}\neq F_{\hat{X}}^{\mathbb{P}^{\ast}}$,
then $E_{\mathbb{Q}}[X^{\ast}]<E_{\mathbb{Q}}[\hat{X}]$ because $\mathbb{F}$
is cost-consistent. Hence, $X^{\ast}$ is the unique solution to the
robust cost-efficiency problem. 
\end{proof}
\begin{remark}
\label{rem:Instead-of-requiring}Instead of requiring that $F_{0}^{-1}$
is square integrable, the proof of Theorem~\ref{thm:robust-cost-payoff}
shows that it is sufficient to assume that $F_{0}^{-1}\left(F_{\ell^{\ast}}^{\mathbb{P}^{\ast}}\left(\ell^{\ast}\right)\right)$
has finite price.
\end{remark}

\begin{remark}
\label{rem:AB}Does ambiguity increase costs? Let the assumptions
of Theorem~\ref{thm:robust-cost-payoff} be in force. Let us compare
two investors. Investor A has ambiguity and considers the set $\mathcal{P}$
as the set of possible real-world measures. Investor B has, e.g., based
on a deep market analysis or insider knowledge, no ambiguity and knows
that $\mathbb{P}\in\mathcal{P}$ is the true real-world measure. Both
investors consider $F_{0}$ as the target distribution function. Investor
A buys $X^{\ast}=F_{0}^{-1}\left(F_{\ell^{\ast}}^{\mathbb{P}^{\ast}}\left(\ell^{\ast}\right)\right)$
according to Theorem~\ref{thm:robust-cost-payoff}, whereas investor B buys
$X=F_{0}^{-1}(F_{\ell^{\mathbb{P}}}^{\mathbb{P}}(\ell^{\mathbb{P}}))$
(see Lemma~\ref{lem:N_cost_eff_P}). As $X^{\ast}\in\mathcal{B}_{F_{0}}^{\mathbb{F}}$,
it holds that $F_{X}^{\mathbb{P}}=F_{0}\leq_{\mathbb{F}}F_{X^{\ast}}^{\mathbb{P}}$.
As the set $\mathbb{F}$ is cost-consistent, it follows that $E_{\mathbb{Q}}[X]\leq E_{\mathbb{Q}}[X^{\ast}]$.
If we additionally have $F_{X}^{\mathbb{P}}\neq F_{X^{\ast}}^{\mathbb{P}}$,
then it follows that $E_{\mathbb{Q}}[X]<E_{\mathbb{Q}}[X^{\ast}]$. In
the robust setting we end up with a payoff $X^\ast$ whose distribution $F_{X^\ast}^\mathbb{P}$, $\mathbb{P} \in \mathcal P$ dominates $F_{0}$ in stochastic
ordering for all $\mathbb{P}$. That is, under ambiguity, the preferred payoff has a (strictly)
higher price and the optimal robust choice $X^\ast$ typically will not match the
choice $X$ without uncertainty. 
\end{remark}

\begin{remark}
\label{remark}
The condition in Theorem~\ref{thm:robust-cost-payoff} that the function
$F_{0}^{-1}\circ F_{\ell^{\ast}}^{\mathbb{P}^{\ast}}$ must be concave
in the case in which $\mathbb{F}=\mathbb{F}_{SSD}$ means that the target distribution
function $F_{0}$ is required to be lighter-tailed than the distribution
function $F_{\ell^{\ast}}^{\mathbb{P}^{\ast}}$. Specifically, $F_{\ell^{\ast}}^{\mathbb{P}^{\ast}}$
must dominate $F_{0}$ in the sense of transform convex order (\cite{shaked2007stochastic}).
\end{remark}

The next corollary shows that the standard and the robust cost-efficiency
problem coincide in a setting without uncertainty. Note that we do
not require $F_{0}^{-1}\circ F_{\ell^{\ast}}^{\mathbb{P}^{\ast}}\in\mathbb{F}$
as in Theorem~\ref{thm:robust-cost-payoff}.
\begin{corollary}
\label{cor:singleton}Assume that the conditions ~\ref{assu:F0^-1} and~\ref{assu:l*} 
hold and that $\mathcal{P}=\{\mathbb{P}\}$ is a singleton. 
The solutions to the $\mathbb{F}-$robust cost-efficiency problem
for $F_{0}$  and the standard cost-efficiency problem for $F_{0}$
are unique and identical.
\end{corollary}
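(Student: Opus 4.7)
The plan is to exploit the fact that in the singleton case most of the machinery of Theorem~\ref{thm:robust-cost-payoff} becomes vacuous. When $\mathcal{P}=\{\mathbb{P}\}$, Condition~\ref{assu:least_fav} holds automatically with $\mathbb{P}^{\ast}=\mathbb{P}$ and $\ell^{\ast}=\ell^{\mathbb{P}}$, since the least-favorable inequality is trivially satisfied against a single element. Moreover, the extra hypothesis $F_{0}^{-1}\circ F_{\ell^{\ast}}^{\mathbb{P}^{\ast}}\in\mathbb{F}$ used in Theorem~\ref{thm:robust-cost-payoff} can be dropped here: its sole role in that proof is to transport $F_{0}=F_{X^{\ast}}^{\mathbb{P}^{\ast}}\preceq_{\mathbb{F}}F_{X^{\ast}}^{\mathbb{P}}$ across all $\mathbb{P}\in\mathcal{P}$, which in the singleton case collapses to the tautology $F_{0}\preceq_{\mathbb{F}}F_{0}$ by reflexivity.

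I would then take the candidate $X^{\ast}=F_{0}^{-1}(F_{\ell^{\mathbb{P}}}^{\mathbb{P}}(\ell^{\mathbb{P}}))$, which by Lemma~\ref{lem:N_cost_eff_P} is the unique $\mathbb{P}$-cost-efficient payoff with distribution $F_{0}$, and verify $X^{\ast}\in\mathcal{X}$ using Conditions~\ref{assu:F0^-1} and~\ref{assu:l*} through exactly the Cauchy--Schwarz estimate appearing in the proof of Theorem~\ref{thm:robust-cost-payoff}. Reflexivity gives $F_{0}\preceq_{\mathbb{F}}F_{X^{\ast}}^{\mathbb{P}}$, hence $X^{\ast}\in\mathcal{B}_{F_{0}}^{\mathbb{F}}$. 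For optimality, I pick an arbitrary $Y\in\mathcal{B}_{F_{0}}^{\mathbb{F}}$ and replace it by its comonotonic rearrangement $Y^{\ast}:=[F_{Y}^{\mathbb{P}}]^{-1}(F_{\ell^{\mathbb{P}}}^{\mathbb{P}}(\ell^{\mathbb{P}}))$, which by Lemma~\ref{lem:N_cost_eff_P} is $\mathbb{P}$-cost-efficient, has the same $\mathbb{P}$-distribution as $Y$, and satisfies $E_{\mathbb{Q}}[Y^{\ast}]\leq E_{\mathbb{Q}}[Y]$. The condition $F_{0}\preceq_{\mathbb{F}}F_{Y}^{\mathbb{P}}=F_{Y^{\ast}}^{\mathbb{P}}$ then reduces the task to comparing two payoffs that are both comonotonic with $\ell^{\mathbb{P}}$.

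The main obstacle is precisely this last comparison. Setting $U:=F_{\ell^{\mathbb{P}}}^{\mathbb{P}}(\ell^{\mathbb{P}})$, uniform on $[0,1]$ by the continuity assumption in Condition~\ref{assu:l*}, both costs admit the representation $\int_{0}^{1}G^{-1}(u)/[F_{\ell^{\mathbb{P}}}^{\mathbb{P}}]^{-1}(u)\,du$ with $G=F_{0}$ and $G=F_{Y}^{\mathbb{P}}$ respectively. Deriving $E_{\mathbb{Q}}[X^{\ast}]\leq E_{\mathbb{Q}}[Y^{\ast}]$ from $F_{0}\preceq_{\mathbb{F}}F_{Y^{\ast}}^{\mathbb{P}}$ requires that the weight $1/[F_{\ell^{\mathbb{P}}}^{\mathbb{P}}]^{-1}(u)$ interact well with $\mathbb{F}$---immediately for $\mathbb{F}_{FSD}$, since $\preceq_{FSD}$ gives $F_{0}^{-1}\leq[F_{Y}^{\mathbb{P}}]^{-1}$ pointwise and the weight is non-negative, and for $\mathbb{F}_{SSD}$ via integration by parts, exploiting that this weight is non-increasing in $u$. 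Uniqueness is then immediate: any other minimizer $\hat{X}$ must satisfy $F_{\hat{X}}^{\mathbb{P}}=F_{0}$ by the strict version of the comparison, whence $\hat{X}=X^{\ast}$ $\mathbb{P}$-a.s.\ by Lemma~\ref{lem:N_cost_eff_P}.
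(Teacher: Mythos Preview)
Your proposal is correct and follows essentially the same route as the paper: both take $X^{\ast}=F_{0}^{-1}(F_{\ell^{\mathbb{P}}}^{\mathbb{P}}(\ell^{\mathbb{P}}))$, establish $X^{\ast}\in\mathcal{B}_{F_{0}}^{\mathbb{F}}$ directly from $F_{X^{\ast}}^{\mathbb{P}}=F_{0}$, and then handle optimality and uniqueness by the cost-consistency step from the proof of Theorem~\ref{thm:robust-cost-payoff}. The only difference is that the paper simply writes ``as in the proof of Theorem~\ref{thm:robust-cost-payoff}'' at that last step (implicitly invoking Condition~\ref{assu:setF}), whereas you unpack it by reproving the needed inequality via the quantile representation for $\mathbb{F}_{FSD}$ and $\mathbb{F}_{SSD}$.
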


\begin{proof}
Note that when $\mathcal{P}=\{\mathbb{P}\}$, then $\mathbb{P}$ is a least favorable measure. By Lemma~\ref{lem:N_cost_eff_P}, $X^{\ast}=F_{0}^{-1}(F_{\ell^{\mathbb{P}}}^{\mathbb{P}}(\ell^{\mathbb{P}})$)
is the unique solution to the standard cost-efficiency problem. As
in the proof of Theorem~\ref{thm:robust-cost-payoff}, one can show
that $X^{\ast}\in\mathcal{X}$. Then $X^{\ast}\in\mathcal{B}_{F_{0}}^{\mathbb{F}}$
follows immediately because $F_{X^{\ast}}^{\mathbb{P}}=F_{0}$. As
in the proof of Theorem~\ref{thm:robust-cost-payoff}, one can show
that $X^{\ast}$ is the only admissible payoff solving  the $\mathbb{F}-$robust cost-efficiency problem.
\end{proof}

The sets $\mathbb{F}_{FSD}$ and $\mathbb{F}_{SSD}$ are cost- and
composition-consistent. We provide examples of sets $\mathbb{F}$
of functions that are not cost- or composition-consistent so that
Theorem~\ref{thm:robust-cost-payoff} cannot be applied to find robust
optimal payoffs.
\begin{example}
\label{lem:TSD} Third order stochastic dominance is the stochastic
integral ordering that arises from the set $\mathbb{F}_{TSD}$, containing
all functions $f:\mathbb{R}_+\to\mathbb{R}$ such that $f^{\prime}>0$, $f^{\prime\prime}\leq0$
and $f^{\prime\prime\prime}\geq0$. 
The set $\mathbb{F}_{TSD}$ is composition-consistent but is in general
not cost-consistent: see Appendix \ref{sec:Proof-of-TSD2}.
\end{example}

\begin{example}
\cite{muller2017between} introduced the $(1+\gamma)-$stochastic
dominance order for $\gamma\in(0,1)$, which lies between FSD and
SSD ordering. The set induced by $(1+\gamma)-$stochastic dominance
order is in general not composition-consistent, but it is cost-consistent
in light of Proposition~\ref{lem:cost-consistent}. 
\end{example}

\begin{example}
\cite{rothschild1970increasing} introduced concave stochastic order,
which is defined via the set of all concave (but not necessarily non-decreasing)
functions. Concave stochastic order coincides with SSD if we compare
two payoffs with the same mean \cite[Remark 2.63]{follmer2011stochastic}.
The set of all concave functions is cost-consistent but not composition-consistent.
\end{example}

In the following section we illustrate Theorem~\ref{thm:robust-cost-payoff}
in a log-normal market setting with uncertainty on the drift \emph{and} volatility,
whereas in Section~\ref{subsec:Example:-General-market} we deal with a more general market setting.

\subsection{\label{subsec:Example:-Uncertainty_log_normal} Robust cost-efficient
payoffs in lognormal markets}

We assume that 
under the pricing measure $\Q$, $S_{T}$ has a log-normal
distribution function with parameters $\log(S_{0})+(r-\frac{{s}^{2}}{2})T$
and ${s}^{2}T$ with stock price $S_{0}>0$ today, interest rate
$r\in\mathbb{R}$, time horizon $T>0$ and volatility $s>0$. Under
$\mathbb{Q}$, $S_{T}$ is log-normally distributed with density $f^{r,s}$,
where for $m\in\mathbb{R}$ and $\varsigma>0$ we define
\begin{equation}
f^{m,\varsigma}(x)=\frac{1}{x\varsigma\sqrt{T}\sqrt{2\pi}}\exp\left(-\frac{\left(\ln\left(x\right)-\ln(S_{0})-\left(m-\frac{\varsigma^{2}}{2}\right)T\right)^{2}}{2\varsigma^{2}T}\right),\quad x>0.\label{eq:f^m,sig}
\end{equation}

\subsubsection{\label{subsec:Example:-Uncertainty-about-drift}Drift uncertainty:
$\mathbb{F}_{FSD}-$robust cost-efficient payoff \citep{schied2005optimal}}

The real-world distribution function of $S_{T}$ is assumed to be
log-normal with parameters $\log(S_{0})+(\mu-\frac{s^{2}}{2})T$ and
$s^{2}T$, but there is uncertainty about the precise level of the
drift parameter $\mu$. In particular, the agent only expects the
true drift parameter $\mu$ to lie in the interval $\mathcal{D}^{\mu_{1}}=\left\{ \mu\in\mathbb{R}\,:\,\mu\geq\mu_{1}\right\} $
for $\mu_{1}>r$, and thus she considers $\mathcal{P}=\left(\mathbb{P}^{\mu}\right)_{\mu\in\mathcal{D}^{\mu_{1}}}$
as the set of all plausible probability measures on $(\Omega,\mathcal{F})$.
Under $\P^{\mu}$, $S_{T}$ is log-normal with density $f^{\mu,s}$.
It follows that $F_{S_{T}}^{\P^{\ast}}\preceq_{\mathbb{F}_{FSD}}F_{S_{T}}^{\P^{\mu}}$
for all $\mu\geq\mu_{1}$, where $\mathbb{P}^{\ast}:=\P^{\mu_{1}}$.
Let $h^{\mu,\varsigma}(x)=\frac{f^{\mu,\varsigma}(x)}{f^{r,s}(x)}$,
$x>0$. A straightforward computation shows that 
\begin{equation}
h^{\mu,s}(x)=\left(\frac{x}{S_{0}}\right)^{\frac{\left(\mu-r\right)}{s^{2}}}\exp\left(\frac{r^{2}-\mu^{2}+s^{2}(\mu-r)}{2s^{2}}T\right),\quad x>0.\label{eq:h^mu,s}
\end{equation}
As $\ell^{\P^{\mu_{1}}}= h^{\mu_{1},s}(S_{T})$ and $\mu_{1}>r,$
$\ell^{\P^{\mu_{1}}}$ is a strictly increasing function
of $S_{T}$. Furthermore, $\frac{1}{\ell^{\P^{\mu_{1}}}}$
has finite variance. By Proposition~\ref{cor:for examples}, $\mathbb{P}^{\ast}$
is a least favorable measure with corresponding likelihood ratio $\ell^{\ast}:=\ell^{\P^{\mu_{1}}}$,
i.e., conditions~\ref{assu:least_fav} and~\ref{assu:l*} are satisfied.
Theorem~\ref{thm:robust-cost-payoff} shows that the $\mathbb{F}_{FSD}-$robust cost-efficient
payoff for a distribution function $F_{0}$ satisfying condition
\ref{assu:F0^-1} is given by 
\[
X^{\ast}=F_{0}^{-1}\left(F_{\ell^{\ast}}^{\mathbb{P}^{\ast}}\left(\ell^{\ast}\right)\right)=F_{0}^{-1}\left(F_{S_{T}}^{\P^{\mu_{1}}}(S_{T})\right).
\]

The second equality follows from the increasingness of $\ell^{\P^{\mu_{1}}}$
in $S_{T}$. The agent thus chooses the optimal payoff as if she believes
that the worst-case plausible value for the drift parameter $\mu$,
i.e., $\mu_{1}$, will materialize. This finding is consistent with
the results obtained by \citet[Section 3.1]{schied2005optimal} on the
impact of drift uncertainty on optimal payoff choice in a Black-Scholes
setting.
\begin{example}
We consider next the exponential distribution for the distribution
function $F_{0}$, i.e., $F_{0}(x)=1-e^{-x}$, $x\geq 0$, which satisfies condition
\ref{assu:F0^-1}. Panel A of Figure~\ref{fig:Price-of-the} displays
the price of the robust cost-efficient payoff for varying levels of
the parameter $\mu_{1}$, which describes the ambiguity that the agent
faces (consistently with Remark~\ref{rem:AB}). The higher $\mu_{1}$, the smaller the set $\mathcal{D}^{\mu_{1}}$, i.e., the lower the degree of ambiguity,
and the cheaper $X^{\ast}$. In panel B of Figure~\ref{fig:Price-of-the}
we display, for several values of $\mu_{1}$, the robust cost-efficient
payoff normalized for its initial price as a function of realizations $s$ of $S_{T}$;
i.e., we display the curve 
\[
s\mapsto\frac{F_{0}^{-1}\left(F_{S_{T}}^{\P^{\mu_{1}}}(s)\right)}{\pi_{\mu_{1}}},
\]
where $\pi_{\mu_{1}}=e^{-rT}E_{\mathbb{Q}}\left[F_{0}^{-1}\left(F_{S_{T}}^{\P^{\mu_{1}}}(S_{T})\right)\right]$.
We observe that the curve is flatter when $\mu_{1}$ is smaller, i.e.,
more ambiguity gives rise to payoffs that reflect a higher degree of conservatism. 
\end{example}

\begin{figure}[!h]
\begin{centering}
\begin{tabular}{cc}
\includegraphics[width=8.125cm,height=7.5cm]{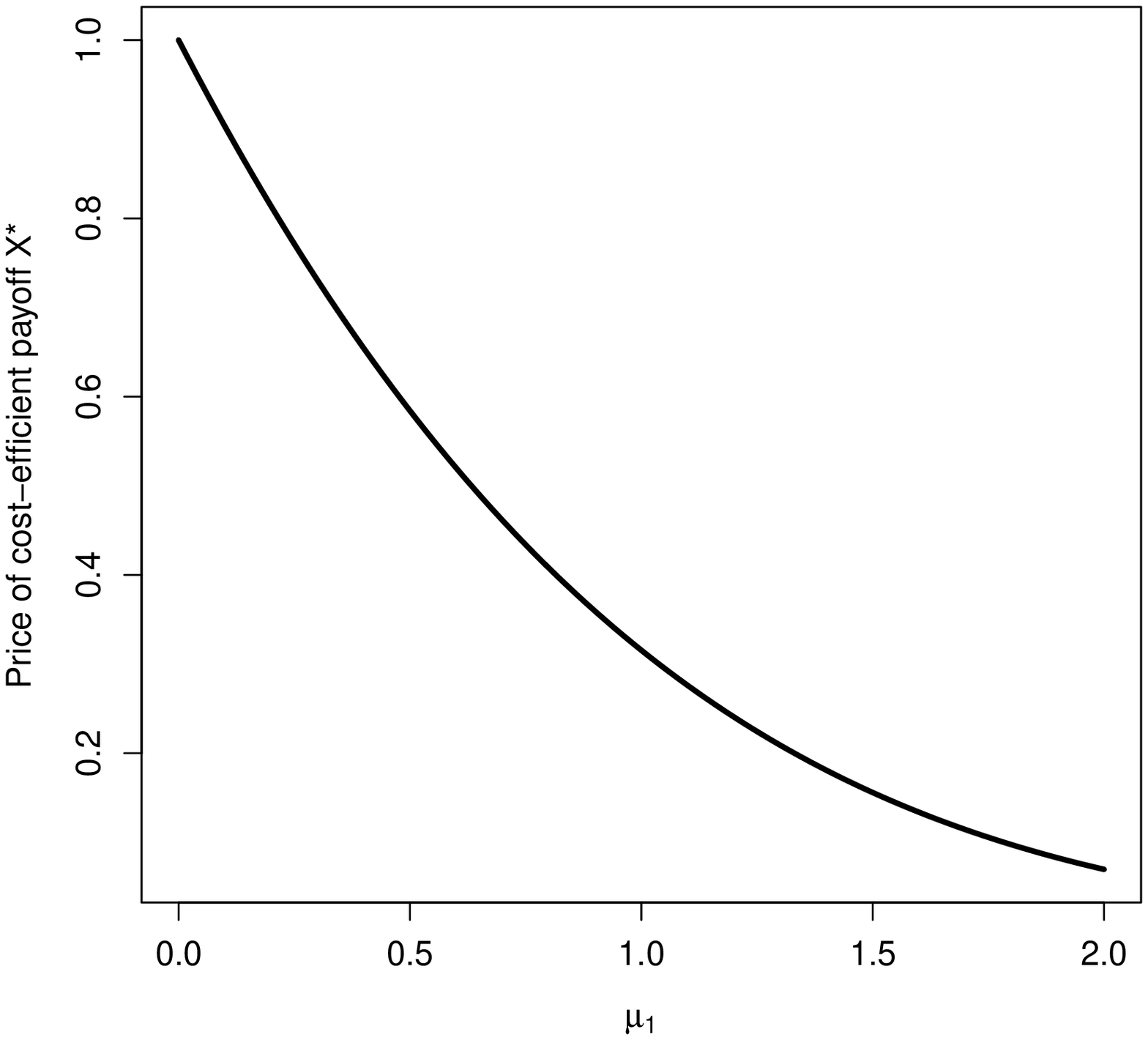} & \includegraphics[width=8.125cm,height=7.5cm]{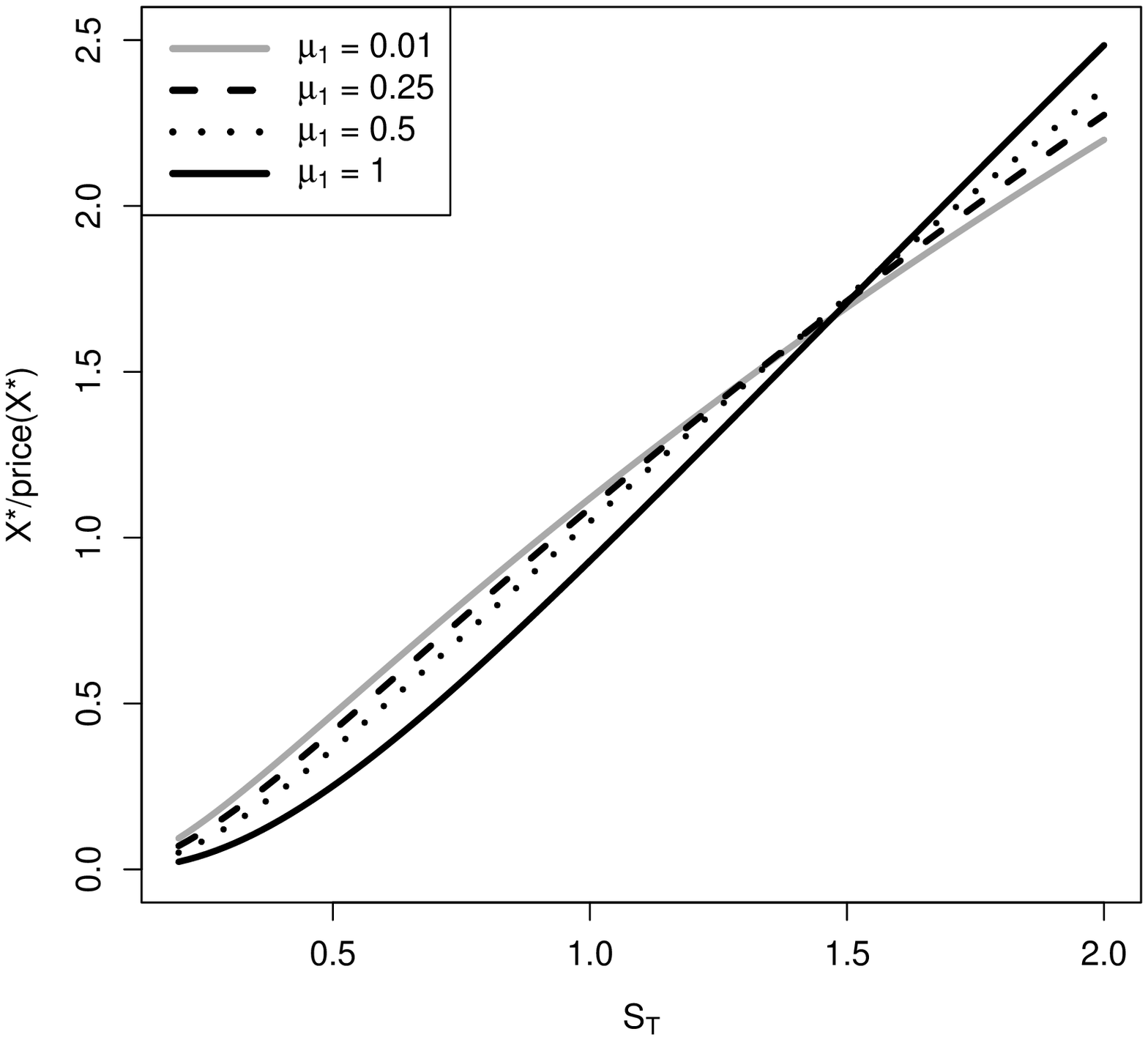}\tabularnewline
Panel A & Panel B\tabularnewline
\end{tabular}
\par\end{centering}
\caption{\label{fig:Price-of-the} We use the parameters $S_{0}=1$, $r=0$,
$T=1$ and $s=0.9$. The reference distribution function is the exponential
distribution function $F_{0}(x)=1-e^{-x}$. Panel A: Price of the
cost-efficient payoff $X^{\ast}$ depending on the value of the ambiguity
parameter $\mu_{1}$. Panel B: Cost-efficient payoff per unit of investment
for various values of $\mu_{1}$.}
\end{figure}


\subsubsection{\label{subsec:Examples}Drift and volatility uncertainty: $\mathbb{F}_{SSD}-$robust
cost-efficient payoff}

The real-world distribution function of $S_{T}$ is assumed to be
log-normal with parameters $\log(S_{0})+(\mu-\frac{{\sigma}^{2}}{2})T$
and $\sigma^{2}T$, but now the agent faces uncertainty about the
precise level of the parameters $\mu$ and $\sigma$. In particular,
the agent only expects the true parameters to lie within the cube
\[
\mathcal{D}^{\mu_{1},\mu_{2},\sigma_{1},s}=\left\{ (\mu,\sigma)\subset\mathbb{R}^{2}\,:\,\mu_{1}\leq\mu\leq\mu_{2},\,\sigma_{1}\leq\sigma\leq s\right\} 
\]
for $r<\mu_{1}<\mu_{2}$ and $0<\sigma_{1}\leq s$, and thus she considers
$\mathcal{P}=\left(\mathbb{P}^{\mu,\sigma}\right)_{(\mu,\sigma)\in\mathcal{D}^{\mu_{1},\mu_{2},\sigma_{1},s}}$
as the set of all plausible probability measures on $(\Omega,\mathcal{F})$.
Under $\P^{\mu,\sigma}$, $S_{T}$ is log-normal with density $f^{\mu,\sigma}$,
defined in Equation (\ref{eq:f^m,sig}). In this regard, note that
while $r<\mu_{1}$ is a natural assumption, there is some empirical
evidence for the hypothesis that $\sigma\leq s$; see Table 1 in \cite{christensen1998relation}
and Table 1 in \cite{christensen2002new}. 
\begin{remark}
In contrast to the dynamic Black-Scholes model, in which the stock
price $S_{T}$ is also log-normally distributed, we work in a static
market setting. In a dynamic Black-Scholes framework where continuous
trading is allowed at zero transaction cost, the absence of arbitrage
opportunities implies that the volatility of the stock does not change
when moving from the real-world measure to the risk-neutral measure,
i.e., there does not exist uncertainty about the volatility in a dynamic
Black-Scholes model. Here, however, we do not assume dynamic trading.
Hence, even when call option prices reflect a risk neutral distribution
function for $S_{T}$ that is log-normally distributed, the agent
may have a view on the real-world distribution that is different from
a log-normal and, in particular, may be unsure about the exact values
for drift and volatility. 
\end{remark}

In the next proposition, we assume that 
\begin{equation}
\ensuremath{\frac{\mu_{1}-r}{s^{2}}\in(0,1]}.\label{eq:mu_1-r}
\end{equation}
For example, $s\in[0.2,\infty)$ and $(\mu_{1}-r)\in(0,0.04]$ or
$s\in[0.35,\infty)$ and $(\mu_{1}-r)\in(0,0.1]$ imply Equation (\ref{eq:mu_1-r}),
that is: there are economically reasonable environments such that
Equation (\ref{eq:mu_1-r}) holds.
\begin{proposition}[$\mathbb{F}_{SSD}-$robust cost-efficient payoff]
\label{cor:log_normal}If $\frac{\mu_{1}-r}{s^{2}}\in(0,1]$, then
it holds that $F_{S_{T}}^{\mathbb{P}^{\mu_{1},s}}\preceq_{\mathbb{F}_{SSD}}F_{S_{T}}^{\mathbb{P}^{\mu,\sigma}}$,
$(\mu,\sigma)\in\mathcal{D}^{\mu_{1},\mu_{2},\sigma_{1},s}$ and conditions
\ref{assu:least_fav} and~\ref{assu:l*} are satisfied for the set
$\mathbb{F}_{SSD}$. The least favorable measure is $\mathbb{P}^{\ast}=\mathbb{P}^{\mu_{1},s}$
with corresponding likelihood ratio $\ell^{\ast}=\ell^{\P^{\mu_{1},s}}$.
The $\mathbb{F}_{SSD}-$robust cost-efficient payoff for the distribution
function $F_{0}$ satisfying condition~\ref{assu:F0^-1} such that
$F_{0}^{-1}\circ F_{\ell^{\ast}}^{\mathbb{P}^{\ast}}$ is concave
is then given by 
\begin{equation}
X^{\ast}:=F_{0}^{-1}\left(F_{S_{T}}^{\P^{\mu_{1},s}}(S_{T})\right).\label{eq:X_ast_log_normal}
\end{equation}
\end{proposition}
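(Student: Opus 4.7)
The plan is to verify the hypotheses of Proposition~\ref{cor:for examples} for $\mathbb{F}=\mathbb{F}_{SSD}$ with candidate $\mathbb{P}^{\prime}=\mathbb{P}^{\mu_{1},s}$, and then to invoke Theorem~\ref{thm:robust-cost-payoff}. The main analytical step is the SSD comparison $F_{S_{T}}^{\mathbb{P}^{\mu_{1},s}}\preceq_{\mathbb{F}_{SSD}}F_{S_{T}}^{\mathbb{P}^{\mu,\sigma}}$; the remaining items will then follow from earlier results in the paper together with routine log-normal calculations.

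For the SSD comparison, I would factor out the mean. Under $\mathbb{P}^{\mu,\sigma}$ one has $S_{T}\stackrel{d}{=}S_{0}e^{\mu T}Y_{\sigma}$ with $Y_{\sigma}:=e^{\sigma\sqrt{T}Z-\sigma^{2}T/2}$ and $Z$ standard normal, so that $E[Y_{\sigma}]=1$. For $\sigma\leq s$ I would realize $Y_{s}$ as $Y_{\sigma}\cdot e^{\sqrt{s^{2}-\sigma^{2}}\sqrt{T}\,W-(s^{2}-\sigma^{2})T/2}$ with $W$ standard normal and independent of $Y_{\sigma}$; then $E[Y_{s}\mid Y_{\sigma}]=Y_{\sigma}$, and Jensen's inequality for conditional expectations yields $E[\phi(Y_{s})]\leq E[\phi(Y_{\sigma})]$ for every non-decreasing concave $\phi$, i.e.\ $Y_{s}\preceq_{\mathbb{F}_{SSD}}Y_{\sigma}$. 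Multiplying by the positive scalar $S_{0}e^{\mu T}$ preserves SSD. Because $\mu_{1}\leq\mu$ gives $(S_{0}e^{\mu_{1}T})Y_{s}\preceq_{\mathbb{F}_{FSD}}(S_{0}e^{\mu T})Y_{s}$, and FSD implies SSD, chaining the two inequalities yields the desired ordering.

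To apply Proposition~\ref{cor:for examples}, I would then note that $\ell^{\mathbb{P}^{\mu_{1},s}}=h^{\mu_{1},s}(S_{T})$, where from~(\ref{eq:h^mu,s}) the function $h^{\mu_{1},s}(x)$ is a positive constant times $x^{(\mu_{1}-r)/s^{2}}$. The hypothesis $(\mu_{1}-r)/s^{2}\in(0,1]$ makes $h^{\mu_{1},s}$ strictly increasing and concave on $\mathbb{R}_{+}$, so $h^{\mu_{1},s}\in\mathbb{F}_{SSD}$. Since $\mathbb{F}_{SSD}$ is composition-consistent, Proposition~\ref{cor:for examples} identifies $\mathbb{P}^{\mu_{1},s}$ as a least favorable measure (Condition~\ref{assu:least_fav}) and also delivers continuity of $F_{\ell^{\ast}}^{\mathbb{P}^{\ast}}$, using that $S_{T}$ is log-normal under $\mathbb{P}^{\mu_{1},s}$ and that $h^{\mu_{1},s}$ is strictly increasing. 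Finite variance of $1/\ell^{\ast}$ under $\mathbb{P}^{\ast}$ reduces to finite expectation of $S_{T}^{-2(\mu_{1}-r)/s^{2}}$, which is immediate from log-normal moment formulas, so Condition~\ref{assu:l*} is fully verified.

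To assemble the conclusion, Proposition~\ref{lem:cost-consistent} supplies cost-consistency of $\mathbb{F}_{SSD}$ (composition-consistency was already noted), so Condition~\ref{assu:setF} holds; Condition~\ref{assu:F0^-1} is assumed, and concavity of $F_{0}^{-1}\circ F_{\ell^{\ast}}^{\mathbb{P}^{\ast}}$ places this composition in $\mathbb{F}_{SSD}$ as required. Theorem~\ref{thm:robust-cost-payoff} then gives the unique $\mathbb{P}^{\ast}$-a.s.\ solution $X^{\ast}=F_{0}^{-1}(F_{\ell^{\ast}}^{\mathbb{P}^{\ast}}(\ell^{\ast}))$, and strict monotonicity of $h^{\mu_{1},s}$ yields $F_{\ell^{\ast}}^{\mathbb{P}^{\ast}}(\ell^{\ast})=F_{S_{T}}^{\mathbb{P}^{\mu_{1},s}}(S_{T})$ almost surely, producing~(\ref{eq:X_ast_log_normal}). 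I expect the SSD comparison to be the main obstacle, because the two laws differ simultaneously in location and scale; the cleanest resolution is the coupling/mean-preserving-spread argument combined with the intermediate FSD step sketched above.
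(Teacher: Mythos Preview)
Your proof is correct and follows the same overall architecture as the paper: verify the SSD ordering of $S_T$, check that $\ell^{\mathbb{P}^{\mu_1,s}}=h^{\mu_1,s}(S_T)$ with $h^{\mu_1,s}$ strictly increasing and concave under the hypothesis $(\mu_1-r)/s^2\in(0,1]$, invoke Proposition~\ref{cor:for examples}, and then apply Theorem~\ref{thm:robust-cost-payoff}. The difference lies entirely in how the SSD comparison $F_{S_T}^{\mathbb{P}^{\mu_1,s}}\preceq_{\mathbb{F}_{SSD}}F_{S_T}^{\mathbb{P}^{\mu,\sigma}}$ is established. The paper computes the lower partial quantile integral of a log-normal in closed form, obtaining $\int_0^q [F_{S_T}^{\mathbb{P}^{\mu,\sigma}}]^{-1}(p)\,dp$ proportional to $e^{\mu T}\Phi(\Phi^{-1}(q)-\sigma\sqrt{T})$, and then checks monotonicity in $\mu$ and in $\sigma$ directly. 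Your argument instead builds a coupling: you write $Y_s$ as $Y_\sigma$ times an independent mean-one log-normal factor, so that $E[Y_s\mid Y_\sigma]=Y_\sigma$ exhibits $Y_s$ as a mean-preserving spread of $Y_\sigma$ (giving the SSD in the volatility parameter), and then handle the drift shift by the trivial FSD inequality $S_0e^{\mu_1 T}Y_s\leq S_0e^{\mu T}Y_s$. Your route is more probabilistic and explains structurally why the ordering holds without any special-function computation; the paper's route is purely analytic but yields the explicit expressions for free. Both are equally valid, and your version also makes the finite-variance check for $1/\ell^{\ast}$ explicit, which the paper leaves implicit.
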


\begin{proof}
For a log-normal distribution function $F$ with parameters $M$ and
$V$, it holds that 
\[
\ensuremath{\int_{0}^{q}F^{-1}(p)dp=\frac{e^{M+\frac{V}{2}}}{q}\ \Phi\left(\Phi^{-1}(q)-\sqrt{V}\right),\quad q\in(0,1),}
\]
where $\Phi$ denotes the distribution function of a standard normal
random variable. It follows that 
\begin{align*}
\int_{0}^{q}\left[F_{S_{T}}^{\mathbb{P}^{\mu_{1},s}}\right]^{-1}(p)dp & =\frac{e^{\mu_{1}T}}{q}\ \Phi\left(\Phi^{-1}(q)-s\sqrt{T}\right)\\
 & \leq\frac{e^{\mu T}}{q}\ \Phi\left(\Phi^{-1}(q)-\sigma\sqrt{T}\right),\,\,\,q\in(0,1),\,\,\,\mu_{1}\leq\mu,\,\,\,\sigma\leq s.
\end{align*}
Hence, $F_{S_{T}}^{\mathbb{P}^{\mu_{1},s}}\preceq_{\mathbb{F}_{SSD}}F_{S_{T}}^{\mathbb{P}^{\mu,\sigma}}$,
$(\mu,\sigma)\in\mathcal{D}^{\mu_{1},\mu_{2},\sigma_{1},s}$. As in
Section~\ref{subsec:Example:-Uncertainty-about-drift}, let 
\[
\ell^{\P^{\mu_{1},s}}=\frac{f^{\mu_{1},s}(S_{T})}{f^{r,s}(S_{T})}=h^{\mu_{1},s}(S_{T}).
\]
Hence, the likelihood ratio $\ell^{\P^{\mu_{1},s}}$ is strictly increasing and concave in $S_{T}$ if  (\ref{eq:mu_1-r})
is satisfied. By Proposition~\ref{cor:for examples}, conditions
\ref{assu:least_fav} and~\ref{assu:l*} are satisfied for the set
$\mathbb{F}_{SSD}$ with least favorable measure $\Pa=\mathbb{P}^{\mu_{1},s}$
with likelihood ratio $\ell^{\ast}=\ell^{\P^{\mu_{1},s}}.$ As in Section~\ref{subsec:Example:-Uncertainty-about-drift},
some simple calculations and Theorem~\ref{thm:robust-cost-payoff} show
that the robust cost-efficient payoff for the distribution function
$F_{0}$ is given by (\ref{eq:X_ast_log_normal}).
\end{proof}
We provide an example for $F_0$ that make it possible to apply Proposition ~\ref{cor:log_normal} to determine ${\mathbb{F}_{SSD}}-$robust cost-efficient payoffs. 
In this regard, observe that $F_{\ell^{\ast}}^{\mathbb{P}^{\ast}}$
in Proposition~\ref{cor:log_normal} is the log-normal distribution with
parameters $\frac{1}{2}\theta^{2}$ and $\theta$ for $\theta:=\sqrt{T}\frac{\mu_{1}-r}{s}>0$.

\begin{example}
\label{exa:F0^-1 Fell* lognormal}If $F_{0}$ is the log-normal distribution
with parameters $M\in\mathbb{R}$ and $V>0$, then $F_{0}^{-1}\circ F_{\ell^{\ast}}^{\mathbb{P}^{\ast}}$
in Proposition~\ref{cor:log_normal} is concave if $V\leq\theta$ because
\[
F_{0}^{-1}\circ F_{\ell^{\ast}}^{\mathbb{P}^{\ast}}(x)=x^{\frac{V}{\theta}}\exp(-\frac{1}{2}\theta V+M),\quad x>0.
\]
\end{example}

\subsection{\label{subsec:Example:-General-market}Robust cost-efficient payoffs
in general markets using Esscher transform}

Inspired by \cite{corcuera2009implied}, let $S_{0}>0$ and $s>0$
and $Z$ be a payoff with mean zero and variance one. Under $\mathbb{Q}$,
assume that $Z$ has density $f_{Z}^{\mathbb{Q}}(x)>0$, $x\in\mathbb{R}$
and model the future stock price at date $T$ by 
\[
S_{T}=S_{0}e^{(r+\omega)T+s\sqrt{T}Z},
\]
where 
$\omega\in\mathbb{R}$ is a mean
correcting term, i.e., $\omega$ is chosen such that 
\begin{equation*}
e^{-rT}E_{\mathbb{Q}}\left[S_{T}\right]=S_{0}.
\end{equation*}
The density of $X=\log(S_{T})$ under $\mathbb{Q}$ is 
\[
f_{X}^{\mathbb{Q}}(x)=\frac{1}{s\sqrt{T}}f_{Z}^{\mathbb{Q}}\left(\frac{x-\log(S_{T})-(r+\omega)T}{s\sqrt{T}}\right),\quad x>0.
\]
The corresponding density of $S_{T}$ under $\mathbb{Q}$ is denoted
by $f_{S_{T}}^{\mathbb{Q}}$, and it holds that 
\[
f_{S_{T}}^{\mathbb{Q}}(x)=f_{X}^{\mathbb{Q}}(\log(x))\frac{1}{x},\quad x>0.
\]
Let $h^{\ast}>0$ and $\mathcal{H}\subset[h^{\ast},\infty)$ be a
set containing $h^{\ast}$ such that $E_{\mathbb{Q}}\left[(S_{T})^{h}\right]$
exists for all $h\in\text{\ensuremath{\mathcal{H}}}$. 
 Define a family
of probability measures $\mathcal{P}=\left(\mathbb{P}^{h}\right)_{h\in\mathcal{H}}$
as follows: $\mathbb{P}^{h}$ is a measure such that $X$ has density
$f_{X}^{\mathbb{P}^{h}}$ under $\mathbb{P}^{h}$, where $f_{X}^{\mathbb{P}^{h}}$
is obtained from $f_{X}^{\mathbb{Q}}$ by applying the Esscher transform.
The use of the Esscher transform can be supported by a utility maximizing
argument; see \cite{gerber1996actuarial}. In particular, we define
$\mathbb{P}^{h}$ such that 
\[
f_{X}^{\mathbb{P}^{h}}(x)=\frac{e^{hx}f_{X}^{\mathbb{Q}}(x)}{\int_{\mathbb{R}}e^{hy}f_{X}^{\mathbb{Q}}(y)dy}=\frac{e^{hx}f_{X}^{\mathbb{Q}}(x)}{E_{\mathbb{Q}}\left[S_{T}^{h}\right]},\quad x>0.
\]
It follows that 
\begin{equation*}
f_{S_{T}}^{\mathbb{P}^{h}}(x)=\frac{x^{h}}{x}\frac{f_{X}^{\mathbb{Q}}(\log(x))}{E_{\mathbb{Q}}\left[S_{T}^{h}\right]},\quad x>0.
\end{equation*}
The density $f_{S_{T}}^{\mathbb{P}^{h^{\ast}}}$ crosses $f_{S_{T}}^{\mathbb{P}^{h}}$
only once from above for $h^{\ast}<h$; hence, by \citet[Property 3.3.32]{denuit2005},
it follows that 
\begin{equation*}
F_{S_{T}}^{\mathbb{P}^{h^{\ast}}}\preceq_{\mathbb{F}_{FSD}}F_{S_{T}}^{\mathbb{P}^{h}}\Rightarrow F_{S_{T}}^{\mathbb{P}^{h^{\ast}}}\preceq_{\mathbb{F}_{SSD}}F_{S_{T}}^{\mathbb{P}^{h}},\quad h\in\mathcal{H}.
\end{equation*}
For the likelihood ratio it, holds that
\begin{equation*}
\ell^{\mathbb{P}^{h^{\ast}}}=\frac{f_{S_{T}}^{\mathbb{P}^{h^{\ast}}}(S_{T})}{f_{S_{T}}^{\mathbb{Q}}(S_{T})}=\frac{(S_{T})^{h^{\ast}}}{E_{\mathbb{Q}}\left[(S_{T})^{h^{\ast}}\right]},
\end{equation*}
which is strictly increasing in $S_{T}$ as $h^{\ast}>0$ and concave
if $h^{\ast}\in(0,1]$. We can apply Proposition~\ref{cor:for examples}
to show that conditions~\ref{assu:least_fav} and~\ref{assu:l*}
are satisfied for the sets $\mathbb{F}_{FSD}$ and $\mathbb{F}_{SSD}$
with least favorable measure $\Pa=\mathbb{P}^{h^{*}}$ and corresponding
likelihood ratio $\ell^{\ast}=\ell^{\P^{h^{*}}}.$ We can use Theorem
~\ref{thm:robust-cost-payoff} to compute the cost-efficient payoff
of a distribution function $F_{0}$.

\section{\label{sec:Robust-Portfolio-Selection}Robust portfolio selection}

\cite{gilboa1989maxmin} provide axioms that justify a maxmin expected
utility framework to make robust decisions when there is ambiguity on the
probability measure $\P$, i.e., when $\#{\mathcal{P}}>1.$
In this framework, \cite{schied2005optimal} shows that when a least
favorable measure $\Pa\in\mathcal{P}$ with respect to FSD ordering
(e.g., the stochastic integral ordering induced by the set $\mathbb{F}_{FSD}$
as defined in Section~\ref{sec:Robust-cost-efficiency}) exists, an
optimal portfolio can be derived. In this section, we extend the work
of \cite{schied2005optimal} in two different ways. First, we account for preferences beyond expected utility. Specifically, we derive optimal portfolios for robust preferences that are in accord with expected utility theory, rank dependent utility theory and Yaari's dual theory. Second, assuming the existence of a least
favorable measure $\Pa$ with respect to a general stochastic integral
ordering induced by some set $\mathbb{F}$, not necessarily identical to $\mathbb{F}_{FSD}$, we derive the optimal portfolio. Specifically, we derive optimal portfolios when a least favorable measure $\Pa\in\mathcal{P}$ with respect to SSD ordering exists (see Proposition~\ref{cor:for examples} for a sufficient condition) and the target distribution $F_0$ is sufficiently light tailed (see Remark~\ref{remark}).

\subsection{\label{subsec:Family-consistent-Preferences}Family consistent preferences}

A \emph{preference} $W$ is defined as a functional from the set of payoffs
$\mathcal{X}$ to the real line (\cite{he2017rank}, \cite{assa2018preferences}). %
Under preference $W$, the payoff $Y$ is preferred to $X$ if $W(X)\leq W(Y)$. In general,
$W$ may depend on the different measures $\mathbb{P}\in\mathcal{P}$
in a complicated way. In what follows, we denote a preference that
depends solely on some $\mathbb{P}\in\mathcal{P}$ by $W_{\mathbb{P}}$. 
\begin{definition}
\label{def:law_invariant}Let $\left(W_{\mathbb{P}}\right)_{\mathbb{P}\in\mathcal{P}}$
be a family of preferences. The preference $W_{\mathbb{P}}$, $\mathbb{P}\in\mathcal{P}$
is called \emph{$\mathbb{P}-$law invariant} if $F_{X}^{\mathbb{P}}=F_{Y}^{\mathbb{P}}$
implies that $W_{\mathbb{P}}(X)=W_{\mathbb{P}}(Y)$. The family $\left(W_{\mathbb{P}}\right)_{\mathbb{P}\in\mathcal{P}}$
is called \emph{law invariant} if each individual preference $W_{\mathbb{P}}$
is \emph{$\mathbb{P}-$}law invariant. 
\end{definition}


\begin{example}
A standard example of a $\P-$law invariant preference is $W_{\P}(X)=E_{\P}[u(X)]$
for some increasing utility function $u$. In this case, $W(X)=\underset{\mathbb{P}\in\mathcal{P}}{\inf}\,W_{\mathbb{P}}(X)$
amounts to the worst-case expected utility, commonly called \emph{robust
expected utility}, which was introduced in \cite{gilboa1989maxmin}. It is also referred to a \emph{robust utility functional} in \cite{schied2009robust}.
\end{example}

To the best of our knowledge, the next definition is new to the literature.
It will be helpful in solving robust portfolio choice problems.
\begin{definition}
\label{def:FSD-family consistent}Let $(W_{\mathbb{P}})_{\mathbb{P}\in\mathcal{P}}$
be a family of preferences. Let $\mathcal{Y}\subset\mathcal{X}$. The family of preferences $(W_{\mathbb{P}})_{\mathbb{P}\in\mathcal{P}}$
is called \emph{$\mathbb{F}-$family consistent on} $\mathcal{Y}$
\emph{with respect to} $\mathbb{P}^{\ast}\in\mathcal{P}$ if for all $Y\in\mathcal{Y}$
the inequality 
\begin{equation*}
F_{Y}^{\mathbb{P}^{\ast}}\preceq_{\mathbb{F}}F_{Y}^{\mathbb{P}},\quad\mathbb{P}\in\mathcal{P}
\end{equation*}
implies that
\[
W_{\mathbb{P}^{\ast}}(Y)\leq W_{\mathbb{P}}(Y),\quad\mathbb{P}\in\mathcal{P}.
\]
$\mathbb{F}-$family consistency of $(W_{\mathbb{P}})_{\mathbb{P}\in\mathcal{P}}$
with respect to some $\mathbb{P}^{\ast}\in\mathcal{P}$ has the following
interpretation: if a measure $\mathbb{P}^{\ast}$ yields the most
pessimistic view of any payoff $Y$ w.r.t. the stochastic ordering
induced by some set $\mathbb{F}$, then the preference under that
measure is the lowest as well.
\end{definition}

Next, we discuss some examples. Let $\mathcal{Y}\subset\mathcal{X}$ be a set of payoffs and let $\mathcal{D}$ be the set of cumulative distribution
functions induced by $\mathcal{Y}$, i.e.,
\[
\mathcal{D}=\{F_{Y}^{\mathbb{P}}:\,Y\in\mathcal{Y},\,\mathbb{P}\in\mathcal{P}\}.
\]
Let us consider an agent taking into account a family of law invariant preferences
$(W_{\mathbb{P}})_{\mathbb{P}\in\mathcal{P}},$  
i.e.,
\begin{equation}
W_{\mathbb{P}}(Y)=w(F_{Y}^{\mathbb{P}})\label{eq:W=00003Dw}
\end{equation}
 for some well defined $w:\mathcal{D}\rightarrow\mathbb{R}$. If $w$ respects integral
stochastic ordering, i.e., 
\begin{equation}
F\preceq_{\mathbb{F}}G\Rightarrow w(F)\leq w(G),\quad F,G\in\mathcal{D},\label{eq:wP(F) leq wP(G)}
\end{equation}
then $(W_{\mathbb{P}})_{\mathbb{P}\in\mathcal{P}}$ is $\mathbb{F}-$family
consistent on $\mathcal{Y}$ with respect to $\mathbb{P}^{\ast} \in \mathcal{P}$. 
We provide some specific examples in the contexts
of expected utility theory, Yaari's dual theory of choice and rank-dependent
expected utility theory:
\begin{example}
\label{exa:EUT_Yaari_RDU}Let $u:\mathbb{R}_+\to\mathbb{R}$. Let $\phi:[0,1]\rightarrow[0,1]$
with $\phi(0)=0$ and $\phi(1)=1$. For a given distribution function
$F$, define
\begin{align*}
w^{\text{EUT}}(F) & =\int_{\mathbb{R_+}}u(x)dF\\
w^{\text{Yaari}}(F) & =\int_{\mathbb{R_+}}\phi(1-F(x))dx\\
w^{\text{RDEU}}(F) & =\int_{\mathbb{R_+}}u(x)d(1-\phi(1-F(x)),
\end{align*}
where we tacitly assume that all integrals exist. It is straightforward to show that when $u$ and $\phi$ are non-decreasing, it holds that 
the family of preferences induced by $w^{\text{EUT}}$, $w^{\text{Yaari}}$
or $w^{\text{RDEU}}$ as in (\ref{eq:W=00003Dw}) is  \emph{
	$\mathbb{F}_{FSD}-$}family consistent on $\mathcal{Y} $ where $\mathcal{Y}$ is restricted to contain random variables such
that all relevant integrals exist. Furthermore, if $u$ is strictly increasing and concave and $\phi$
is strictly increasing, continuously differentiable and convex, we obtain that such a family is a \emph{
	$\mathbb{F}_{SSD}-$}family consistent on $\mathcal{Y}$; see   
\cite{yaari1987dual}, \cite{wang1998ordering},
\cite{he2017rank} and 
\cite{ryan2006risk}.

\end{example}

\begin{remark}
	
One could allow the function $w$ in (\ref{eq:W=00003Dw})
to depend on $\mathbb{P}$, i.e., define $W_{\mathbb{P}}(Y)=w_{\mathbb{P}}(F_{Y}^{\mathbb{P}})$
for some $w_{\mathbb{P}}:\mathcal{D}\rightarrow\mathbb{R}$, $\mathbb{P}\in\mathcal{P}$.
The family of preferences is then $\mathbb{F}-$family consistent
on $\mathcal{Y}$ with respect to $\mathbb{P}^{\ast}\in\mathcal{P}$
if both
\begin{equation*}
w_{\mathbb{P}^{\ast}}(F)\leq w_{\mathbb{P}}(F),\quad F\in\mathcal{D},\quad\mathbb{P}\in\mathcal{P}
\end{equation*}
and (\ref{eq:wP(F) leq wP(G)}) holds. To see this, let $Y\in\mathcal{Y}$
with $F_{Y}^{\mathbb{P}^{\ast}}\leq F_{Y}^{\mathbb{P}}$ for all $\mathbb{P}\in\mathcal{P}$.
It follows that 
\[
W_{\mathbb{P}^{\ast}}(Y)=w_{\mathbb{P}^{\ast}}\left(F_{Y}^{\mathbb{P}^{\ast}}\right)\leq w_{\mathbb{P}}\left(F_{Y}^{\mathbb{P}^{\ast}}\right)\leq w_{\mathbb{P}}\left(F_{Y}^{\mathbb{P}}\right)=W_{\mathbb{P}}(Y).
\]
\end{remark}

\subsection{\label{subsec:Optimal-portfolio-for robust preferences}Optimal portfolio
for robust preferences}

Inspired by \cite{gilboa1989maxmin} and \cite{schied2005optimal},
we consider the following problem:
\begin{problem}
\label{def:maximization_problem}Let $x_{0}>0$ be the initial wealth.
Let $\left(W_{\mathbb{P}}\right)_{\mathbb{P}\in\mathcal{P}}$ be a
family of preferences. We consider the robust maximization problem
\begin{equation}
\max_{X\in\mathcal{Y}_{(W_{\mathbb{P}})_{\mathbb{P}\in\mathcal{P}}}^{x_{0}}}\inf_{\mathbb{P}\in\mathcal{P}}W_{\mathbb{P}}(X),\label{eq:robust maximization problem}
\end{equation}
where $\mathcal{Y}_{(W_{\mathbb{P}})_{\mathbb{P}\in\mathcal{P}}}^{x_{0}}=\bigcap_{\mathbb{P}\in\mathcal{P}}\mathcal{Y}_{W_{\mathbb{P}}}^{x_{0}}$ and
\begin{equation}
\mathcal{Y}_{W_{\mathbb{P}}}^{x_{0}}:=\left\{ X\in\mathcal{X}\,:W_{\mathbb{P}}[X]\in\mathbb{R},\,e^{-rT}E_{\mathbb{Q}}\left[X\right]\leq x_{0}\right\} ,\quad\mathbb{P}\in\mathcal{P}.  
\end{equation}
\end{problem}

It turns out that under certain conditions a solution to the robust optimization problem \eqref{eq:robust maximization problem} can be found as a solution to a \emph{maximization problem under a single measure} $\mathbb{P}\in\mathcal{P}$.

\begin{problem} Let $x_{0}>0$ be the initial wealth.
	Let $W_{\mathbb{P}}$, ${\mathbb{P}\in\mathcal{P}}$ be a preference. We consider the \emph{maximization problem}
\label{def:maximization_problem2}

\begin{equation}
	\max_{X\in\mathcal{Y}_{W_{\mathbb{P}}}^{x_{0}}}W_{\mathbb{P}}(X).\label{eq:singleMaximization}
\end{equation}
\end{problem}

Under the assumption of the existence of a least favorable measure
with respect to FSD ordering, \cite{schied2005optimal} showed that
in order to solve the robust maximization problem \eqref{eq:robust maximization problem}
for preferences $\left(W_{\mathbb{P}}\right)_{\mathbb{P}\in\mathcal{P}}$,
${W}_{\P}(x)=E_{\P}[u(X)]$, it actually suffices to solve the single
measure maximization problem \eqref{eq:singleMaximization}. The following
theorem generalizes this result beyond the expected utility setting
to a general law invariant family of preferences  $\left(W_{\mathbb{P}}\right)_{\mathbb{P}\in\mathcal{P}}$.  
The theorem is illustrated in
Section~\ref{subsec:RDU+lognormal}, where we consider a robust rank-dependent
expected utility maximization problem for an investor with ambiguity on the trend and/or volatility of the risky asset.
\begin{thm}
\label{(Theorem-4.5).-Assume} Let $\mathbb{F}=\mathbb{F}_{FSD}$.
Given conditions  
\ref{assu:least_fav} and~\ref{assu:l*}, 
assume that $\left(W_{\mathbb{P}}\right)_{\mathbb{P}\in\mathcal{P}}$
is a law invariant and $\mathbb{F}_{FSD}-$consistent family of preferences
on $\mathcal{Y}_{(W_{\mathbb{P}})_{\mathbb{P}\in\mathcal{P}}}^{x_{0}}$
with respect to $\mathbb{P}^{\ast} \in\mathcal{P}$. Assume that the maximization problem
(\ref{eq:singleMaximization}) under $\mathbb{P}^{\ast}$ has a solution
$\tilde{X}\in\mathcal{Y}_{(W_{\mathbb{P}})_{\mathbb{P}\in\mathcal{P}}}^{x_{0}}.$
 Then it holds that 
\[
\underset{X\in\mathcal{Y}_{(W_{\mathbb{P}})_{\mathbb{P}\in\mathcal{P}}}^{x_{0}}}{\max}\,\underset{\mathbb{P}\in\mathcal{P}}{\inf}\,W_{\mathbb{P}}(X)=\underset{X\in\mathcal{Y}_{W_{\mathbb{P^{\ast}}}}^{x_{0}}}{\max}\,W_{\mathbb{P}^{\ast}}(X).
\]
\end{thm}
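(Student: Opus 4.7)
My plan is to establish the equality by proving two inequalities. The direction $\sup_{X}\inf_{\mathbb{P}}W_{\mathbb{P}}(X)\leq\sup_{X}W_{\mathbb{P}^{\ast}}(X)$ is immediate from the inclusion $\mathcal{Y}_{(W_{\mathbb{P}})_{\mathbb{P}\in\mathcal{P}}}^{x_{0}}\subseteq\mathcal{Y}_{W_{\mathbb{P}^{\ast}}}^{x_{0}}$ combined with the trivial pointwise bound $\inf_{\mathbb{P}\in\mathcal{P}}W_{\mathbb{P}}(X)\leq W_{\mathbb{P}^{\ast}}(X)$. The substance of the theorem lies in the reverse inequality, which I will obtain by exhibiting an explicit robust-feasible payoff whose worst-case preference value already equals $W_{\mathbb{P}^{\ast}}(\tilde{X})$.

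The candidate is the $\mathbb{P}^{\ast}$-cost-efficient twin of $\tilde{X}$. Setting $F_{0}:=F_{\tilde{X}}^{\mathbb{P}^{\ast}}$, I define
\[
X^{\ast}\;:=\;F_{0}^{-1}\bigl(F_{\ell^{\ast}}^{\mathbb{P}^{\ast}}(\ell^{\ast})\bigr).
\]
Condition~\ref{assu:l*} makes $F_{\ell^{\ast}}^{\mathbb{P}^{\ast}}(\ell^{\ast})$ uniform under $\mathbb{P}^{\ast}$, so $F_{X^{\ast}}^{\mathbb{P}^{\ast}}=F_{0}$, and Lemma~\ref{lem:N_cost_eff_P} identifies $X^{\ast}$ as the $\mathbb{P}^{\ast}$-cost-efficient payoff with target $F_{0}$; in particular $e^{-rT}E_{\mathbb{Q}}[X^{\ast}]\leq e^{-rT}E_{\mathbb{Q}}[\tilde{X}]\leq x_{0}$. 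Law invariance of $W_{\mathbb{P}^{\ast}}$ then yields $W_{\mathbb{P}^{\ast}}(X^{\ast})=W_{\mathbb{P}^{\ast}}(\tilde{X})$.

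The key observation is that the distribution of $X^{\ast}$ under $\mathbb{P}^{\ast}$ is dominated in FSD by its distribution under any competing $\mathbb{P}\in\mathcal{P}$. Indeed, $F_{0}^{-1}\circ F_{\ell^{\ast}}^{\mathbb{P}^{\ast}}$ is non-decreasing and hence lies in $\mathbb{F}_{FSD}$, so applying the composition-consistency reasoning of \eqref{eq:F_X =00003D> F_f(X)} with $X=\ell^{\ast}$ and $f=F_{0}^{-1}\circ F_{\ell^{\ast}}^{\mathbb{P}^{\ast}}$, together with the least-favorable property $F_{\ell^{\ast}}^{\mathbb{P}^{\ast}}\preceq_{\mathbb{F}_{FSD}}F_{\ell^{\ast}}^{\mathbb{P}}$ from condition~\ref{assu:least_fav}, yields $F_{X^{\ast}}^{\mathbb{P}^{\ast}}\preceq_{\mathbb{F}_{FSD}}F_{X^{\ast}}^{\mathbb{P}}$ for every $\mathbb{P}\in\mathcal{P}$. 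The $\mathbb{F}_{FSD}$-family consistency of $(W_{\mathbb{P}})_{\mathbb{P}\in\mathcal{P}}$ with respect to $\mathbb{P}^{\ast}$ then translates this into $W_{\mathbb{P}^{\ast}}(X^{\ast})\leq W_{\mathbb{P}}(X^{\ast})$ for all $\mathbb{P}$, whence $\inf_{\mathbb{P}\in\mathcal{P}}W_{\mathbb{P}}(X^{\ast})=W_{\mathbb{P}^{\ast}}(X^{\ast})=W_{\mathbb{P}^{\ast}}(\tilde{X})$, which closes the proof once $X^\ast$ is shown to be admissible.

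I expect the main technical obstacle to be that very admissibility check $X^{\ast}\in\mathcal{Y}_{(W_{\mathbb{P}})_{\mathbb{P}\in\mathcal{P}}}^{x_{0}}=\bigcap_{\mathbb{P}\in\mathcal{P}}\mathcal{Y}_{W_{\mathbb{P}}}^{x_{0}}$, i.e., $W_{\mathbb{P}}(X^{\ast})\in\mathbb{R}$ for every $\mathbb{P}\in\mathcal{P}$. Finiteness under $\mathbb{P}^{\ast}$ comes for free from law invariance and the admissibility of $\tilde{X}$; for $\mathbb{P}\neq\mathbb{P}^{\ast}$, the family-consistency inequality supplies only the lower bound $W_{\mathbb{P}}(X^{\ast})\geq W_{\mathbb{P}^{\ast}}(\tilde{X})>-\infty$, and the matching upper bound must be extracted from the concrete preference structure, for instance from an integrability assumption on $u$ in a Gilboa--Schmeidler setting with bounded-above utility, or from the quantile representation~\eqref{eq:W=00003Dw}. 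Once this admissibility is secured, the rest of the argument reduces to clean bookkeeping with the cost-efficiency lemma and Definition~\ref{def:FSD-family consistent}.
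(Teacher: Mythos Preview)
Your approach is essentially identical to the paper's: both construct the $\mathbb{P}^{\ast}$-cost-efficient twin $X^{\ast}=[F_{\tilde{X}}^{\mathbb{P}^{\ast}}]^{-1}\bigl(F_{\ell^{\ast}}^{\mathbb{P}^{\ast}}(\ell^{\ast})\bigr)$ of $\tilde{X}$, use law invariance together with the least-favorable property and $\mathbb{F}_{FSD}$-family consistency to obtain $\inf_{\mathbb{P}}W_{\mathbb{P}}(X^{\ast})=W_{\mathbb{P}^{\ast}}(\tilde{X})$, and finish via the inclusion $\mathcal{Y}_{(W_{\mathbb{P}})_{\mathbb{P}\in\mathcal{P}}}^{x_{0}}\subseteq\mathcal{Y}_{W_{\mathbb{P}^{\ast}}}^{x_{0}}$. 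The admissibility concern you flag (finiteness of $W_{\mathbb{P}}(X^{\ast})$ for $\mathbb{P}\neq\mathbb{P}^{\ast}$) is handled in the paper by a one-line appeal to law invariance without further justification of the upper bound, so your treatment is in fact more scrupulous than the original on this point.
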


\begin{proof}
Let $h\in \mathbb{F}_{FSD}$ such that $h(\ell^{\ast})\in\mathcal{Y}_{(W_{\mathbb{P}})_{\mathbb{P}\in\mathcal{P}}}^{x_{0}}$.
Then, it holds by the $\mathbb{F}_{FSD}-$family consistency, condition
\ref{assu:least_fav} and (\ref{eq:F_X =00003D> F_f(X)})
that 
\begin{equation}
W_{\mathbb{P}^{\ast}}(h(\ell^{\ast}))\leq\underset{\mathbb{P}\in\mathcal{P}}{\inf}\,W_{\mathbb{P}}(h(\ell^{\ast})).\label{eq:A2}
\end{equation}
Let 
\[
X^{\ast}=\left[F_{\tilde{X}}^{\mathbb{P}^{\ast}}\right]^{-1}\left(F_{\ell^{\ast}}^{\mathbb{P}^{\ast}}(\ell^{\ast})\right).
\]
Then $X^{\ast}$ solves the standard cost-efficiency problem for $F_{\tilde{X}}^{\mathbb{P}^{\ast}}$
and thus $E_{\mathbb{Q}}[X^{\ast}]\leq E_{\mathbb{Q}}[\tilde{X}]$
and $F_{\tilde{X}}^{\mathbb{P}^{\ast}}=F_{X^{\ast}}^{\mathbb{P}^{\ast}}$;
hence, by the law invariance of $\left(W_{\mathbb{P}}\right)_{\mathbb{P}\in\mathcal{P}}$,
it holds that $X^{\ast}\in\mathcal{Y}_{(W_{\mathbb{P}})_{\mathbb{P}\in\mathcal{P}}}^{x_{0}}$.
It further holds that $X^{\ast}$ is a non-decreasing function of
$\ell^{\ast}$. It follows by (\ref{eq:A2}) that 
\begin{align*}
\underset{X\in\mathcal{Y}_{W_{\mathbb{P^{\ast}}}}^{x_{0}}}{\max}\,W_{\mathbb{P}^{\ast}}(X) & =W_{\mathbb{P}^{\ast}}(\tilde{X})=W_{\mathbb{P}^{\ast}}(X^{\ast})\\
 & \leq\underset{\mathbb{P}\in\mathcal{P}}{\inf}\,W_{\mathbb{P}}(X^{\ast})\\
 & \leq\underset{X\in\mathcal{Y}_{(W_{\mathbb{P}})_{\mathbb{P}\in\mathcal{P}}}^{x_{0}}}{\max}\,\underset{\mathbb{P}\in\mathcal{P}}{\inf}\,W_{\mathbb{P}}(X)\\
 & \leq\underset{X\in\mathcal{Y}_{(W_{\mathbb{P}})_{\mathbb{P}\in\mathcal{P}}}^{x_{0}}}{\max}\,W_{\mathbb{P}^{\ast}}(X)\\
 & \leq\underset{X\in\mathcal{Y}_{W_{\mathbb{P^{\ast}}}}^{x_{0}}}{\max}\,W_{\mathbb{P}^{\ast}}(X),
\end{align*}
where the last inequality follows by $\mathcal{Y}_{(W_{\mathbb{P}})_{\mathbb{P}\in\mathcal{P}}}^{x_{0}}\subset\mathcal{Y}_{W_{\mathbb{P^{\ast}}}}^{x_{0}}$. 
\end{proof}
From Theorem~\ref{(Theorem-4.5).-Assume}, it follows immediately
that solving robust preference maximization problems may reduce to solving an optimization problem
under a single probability measure. The following example illustrates
this consequence. 
\begin{example}
Assume $\mathbb{F}=\mathbb{F}_{FSD}$ and that conditions  
\ref{assu:least_fav} and~\ref{assu:l*} 
are satisfied.  
Let $W_{\mathbb{P}}(F)=w(F_{Y}^{\mathbb{P}})$ as in (\ref{eq:W=00003Dw}),
where $w\in\{w^{\text{EUT}},w^{\text{Yaari}},w^{\text{RDEU}}\}$ as in Example~\ref{exa:EUT_Yaari_RDU}. Assuming a solution to \eqref{eq:singleMaximization} under $\mathbb{P}^\ast \in\mathcal{P}$ exists, then it follows that
\[
\underset{X\in\mathcal{Y}_{(W_{\mathbb{P}})_{\mathbb{P}\in\mathcal{P}}}^{x_{0}}}{\max}\,\underset{\mathbb{P}\in\mathcal{P}}{\inf}\,W_{\mathbb{P}}(X)=\underset{X\in\mathcal{Y}_{W_{\mathbb{P^{\ast}}}}^{x_{0}}}{\max}\,W_{\mathbb{P}^{\ast}}(X).
\]
\end{example}

The main assumption in Theorem~\ref{(Theorem-4.5).-Assume} that is needed to solve the robust maximization problem \eqref{eq:robust maximization problem} in the case of a family of law invariant preferences $\left(W_{\mathbb{P}}\right)_{\mathbb{P}\in\mathcal{P}}$ is the existence of a least favorable measure $\mathbb{P}^\ast$ with respect to $\mathbb{F}_{FSD}$. In the following theorem we show that it is possible to weaken this assumption in that we only require existence of a least favorable measure $\mathbb{P}^\ast$ with respect to some $\mathbb{F}\subset \mathbb{F}_{FSD}$, e.g., $\mathbb{F}=\mathbb{F}_{SSD}$. The theorem is illustrated in
Section~\ref{subsec:RDU+lognormal}, where we consider a robust rank-dependent
expected utility maximization problem for an investor who faces ambiguity on expected return and volatility of the risky asset.

\begin{thm}
\label{thm:SSD_port}Consider a given set $\mathbb{F}.$ Given conditions 
\ref{assu:least_fav} and~\ref{assu:l*},
assume that the maximization problem (\ref{eq:singleMaximization})
under $\mathbb{P}^{\ast}$ has a solution $\tilde{X}\in\mathcal{Y}_{(W_{\mathbb{P}})_{\mathbb{P}\in\mathcal{P}}}^{x_{0}}$,
which can $\mathbb{P}^{\ast}-$a.s. be expressed as $f(\ell^{\ast})$
for some $f\in\mathbb{F}$. Further, assume that $\left(W_{\mathbb{P}}\right)_{\mathbb{P}\in\mathcal{P}}$
is $\mathbb{F}-$family consistent on $\mathcal{Y}_{(W_{\mathbb{P}})_{\mathbb{P}\in\mathcal{P}}}^{x_{0}}$
with respect to $\mathbb{P}^{\ast}$. Then it holds that 
\[
\underset{X\in\mathcal{Y}_{(W_{\mathbb{P}})_{\mathbb{P}\in\mathcal{P}}}^{x_{0}}}{\max}\,\underset{\mathbb{P}\in\mathcal{P}}{\inf}\,W_{\mathbb{P}}(X)=\underset{X\in\mathcal{Y}_{W_{\mathbb{P^{\ast}}}}^{x_{0}}}{\max}\,W_{\mathbb{P}^{\ast}}(X).
\]
\end{thm}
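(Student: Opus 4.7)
The plan is to mirror the structure of the proof of Theorem~\ref{(Theorem-4.5).-Assume}, but exploit the explicit representation $\tilde{X}=f(\ell^{\ast})$ with $f\in\mathbb{F}$ in place of the constructed cost-efficient surrogate used there. First I would establish the key distributional comparison that lifts the least-favorable ordering from the likelihood ratio to the candidate optimum itself: condition~\ref{assu:least_fav} gives $F_{\ell^{\ast}}^{\mathbb{P}^{\ast}}\preceq_{\mathbb{F}}F_{\ell^{\ast}}^{\mathbb{P}}$ for every $\mathbb{P}\in\mathcal{P}$, and applying equation~(\ref{eq:F_X =00003D> F_f(X)}) with the function $f\in\mathbb{F}$ yields
\[
F_{\tilde{X}}^{\mathbb{P}^{\ast}} \;=\; F_{f(\ell^{\ast})}^{\mathbb{P}^{\ast}} \;\preceq_{\mathbb{F}}\; F_{f(\ell^{\ast})}^{\mathbb{P}} \;=\; F_{\tilde{X}}^{\mathbb{P}}, \qquad \mathbb{P}\in\mathcal{P}.
\]

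Second, I would feed this into the family-consistency hypothesis. Since $\tilde{X}\in\mathcal{Y}_{(W_{\mathbb{P}})_{\mathbb{P}\in\mathcal{P}}}^{x_{0}}$ by assumption, the $\mathbb{F}$-family consistency of $(W_{\mathbb{P}})_{\mathbb{P}\in\mathcal{P}}$ with respect to $\mathbb{P}^{\ast}$ gives $W_{\mathbb{P}^{\ast}}(\tilde{X})\leq W_{\mathbb{P}}(\tilde{X})$ for all $\mathbb{P}\in\mathcal{P}$, hence $W_{\mathbb{P}^{\ast}}(\tilde{X})\leq\inf_{\mathbb{P}\in\mathcal{P}}W_{\mathbb{P}}(\tilde{X})$. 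Because $\tilde{X}$ is optimal for Problem~\ref{def:maximization_problem2} under $\mathbb{P}^{\ast}$ and remains admissible in the robust problem, this chains as
\[
\max_{X\in\mathcal{Y}_{W_{\mathbb{P}^{\ast}}}^{x_{0}}}W_{\mathbb{P}^{\ast}}(X)=W_{\mathbb{P}^{\ast}}(\tilde{X})\leq\inf_{\mathbb{P}\in\mathcal{P}}W_{\mathbb{P}}(\tilde{X})\leq\max_{X\in\mathcal{Y}_{(W_{\mathbb{P}})_{\mathbb{P}\in\mathcal{P}}}^{x_{0}}}\inf_{\mathbb{P}\in\mathcal{P}}W_{\mathbb{P}}(X),
\]
and the reverse inequality follows from the trivial bound $\inf_{\mathbb{P}}W_{\mathbb{P}}(X)\leq W_{\mathbb{P}^{\ast}}(X)$ together with the inclusion $\mathcal{Y}_{(W_{\mathbb{P}})_{\mathbb{P}\in\mathcal{P}}}^{x_{0}}\subset\mathcal{Y}_{W_{\mathbb{P}^{\ast}}}^{x_{0}}$, yielding the claimed equality.

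The main obstacle is the first step, namely lifting the ordering from $\ell^{\ast}$ to $\tilde{X}$. In Theorem~\ref{(Theorem-4.5).-Assume} this was free because, for $\mathbb{F}_{FSD}$, one can always replace $\tilde{X}$ by the cost-efficient surrogate $X^{\ast}=[F_{\tilde{X}}^{\mathbb{P}^{\ast}}]^{-1}(F_{\ell^{\ast}}^{\mathbb{P}^{\ast}}(\ell^{\ast}))$, which is automatically non-decreasing in $\ell^{\ast}$, so the lifting proceeds through composition within $\mathbb{F}_{FSD}$. For a general $\mathbb{F}$ (e.g., $\mathbb{F}_{SSD}$), that surrogate need not belong to $\mathbb{F}$, which is precisely why the extra hypothesis $\tilde{X}=f(\ell^{\ast})$ with $f\in\mathbb{F}$ enters: it is exactly what the composition-consistency of $\mathbb{F}$ needs in order to apply (\ref{eq:F_X =00003D> F_f(X)}) to $\tilde{X}$ itself. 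Once that single step is in place, the remainder of the argument is a direct transcription of the FSD proof.
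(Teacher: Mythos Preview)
Your proposal is correct and matches the paper's own proof essentially line for line: the paper also invokes equation~(\ref{eq:F_X =00003D> F_f(X)}) together with condition~\ref{assu:least_fav} to obtain $W_{\mathbb{P}^{\ast}}(\tilde{X})\leq\inf_{\mathbb{P}\in\mathcal{P}}W_{\mathbb{P}}(\tilde{X})$ directly for $\tilde{X}=f(\ell^{\ast})$, and then runs the identical chain of inequalities ending with the inclusion $\mathcal{Y}_{(W_{\mathbb{P}})_{\mathbb{P}\in\mathcal{P}}}^{x_{0}}\subset\mathcal{Y}_{W_{\mathbb{P}^{\ast}}}^{x_{0}}$. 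Your closing paragraph accurately pinpoints why the hypothesis $\tilde{X}=f(\ell^{\ast})$ with $f\in\mathbb{F}$ replaces the cost-efficient surrogate construction from Theorem~\ref{(Theorem-4.5).-Assume}.
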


\begin{proof}
Let $h\in\mathbb{F}$ such that $h(\ell^{\ast})\in\mathcal{Y}_{(W_{\mathbb{P}})_{\mathbb{P}\in\mathcal{P}}}^{x_{0}}$.
Then, (\ref{eq:A2}) holds true by the $\mathbb{F}-$family
consistency and (\ref{eq:F_X =00003D> F_f(X)}). By assumption,
$\tilde{X}=f(\ell^{\ast})$, $\mathbb{P}^{\ast}-$a.s. for some $f\in\mathbb{F}$.
Hence, it holds that 
\begin{align*}
\underset{X\in\mathcal{Y}_{W_{\mathbb{P^{\ast}}}}^{x_{0}}}{\max}\,W_{\mathbb{P}^{\ast}}(X) & =W_{\mathbb{P}^{\ast}}(\tilde{X})\\
 & \leq\underset{\mathbb{P}\in\mathcal{P}}{\inf}\,W_{\mathbb{P}}(\tilde{X})\\
 & \leq\underset{X\in\mathcal{Y}_{(W_{\mathbb{P}})_{\mathbb{P}\in\mathcal{P}}}^{x_{0}}}{\max}\,\underset{\mathbb{P}\in\mathcal{P}}{\inf}\,W_{\mathbb{P}}(X)\\
 & \leq\underset{X\in\mathcal{Y}_{(W_{\mathbb{P}})_{\mathbb{P}\in\mathcal{P}}}^{x_{0}}}{\max}\,W_{\mathbb{P}^{\ast}}(X)\\
 & \leq\underset{X\in\mathcal{Y}_{W_{\mathbb{P^{\ast}}}}^{x_{0}}}{\max}\,W_{\mathbb{P}^{\ast}}(X),
\end{align*}
in which the last inequality follows by $\mathcal{Y}_{(W_{\mathbb{P}})_{\mathbb{P}\in\mathcal{P}}}^{x_{0}}\subset\mathcal{Y}_{W_{\mathbb{P^{\ast}}}}^{x_{0}}$. 
\end{proof}

Note that, as compared to the statements in   Theorem~\ref{(Theorem-4.5).-Assume}, in 
Theorem \ref{thm:SSD_port} the $W_{\mathbb{P}}$ do not need to be law-invariant. Moreover, as long as the solution can be expressed as a certain function of the likelihood ratio $\ell^{\ast}$,  
the preferences do not need to be increasing, i.e., $X\leq Y$
does not need to imply $W_{\mathbb{P}}(X)\leq W_{\mathbb{P}}(Y).$
As pointed out, Theorem \ref{thm:SSD_port} is applicable, in particular, for the case $\mathbb{F}=\mathbb{F}_{SSD}$. 
However, the requirement that $\tilde{X}\in\mathcal{Y}_{(W_{\mathbb{P}})_{\mathbb{P}\in\mathcal{P}}}^{x_{0}}$
can be $\mathbb{P}^{\ast}-$a.s. expressed as $h(\ell^{\ast})$
for some $h\in\mathbb{F}_{SSD}$ is equivalent to $h$ being increasing and concave. This property is difficult to verify ex-ante. Hereafter, we show that in the case in which  $W_{\P}$ is an expected utility, this condition
translates into an easy-to-verify condition on the utility function. We formulate the following theorem. 
\begin{thm}
	\label{thm:SSD_port_utility}Let $\mathbb{P}\in\mathcal{P}$ with
	likelihood ratio $\ell^{\mathbb{P}}$.
	Let $u:\mathbb{R}_+\rightarrow\mathbb{R}$ be a differentiable, concave and
	strictly increasing utility function such that $u^{\prime}$ is strictly
	decreasing. If the maximization
	problem (\ref{eq:singleMaximization}) under $\mathbb{P}$ has a solution,
	then the solution is a non-decreasing and concave function of $\ell^{\mathbb{P}}$
	if and only if $\frac{1}{u^{\prime}}$ is convex. If $u$ is three
	times differentiable, $\frac{1}{u^{\prime}}$ is convex if and only
	if 
	\begin{equation}
		a(x)\geq\frac{p(x)}{2},\label{absolute}
	\end{equation}
	in which $a(x):=-\frac{u^{\prime\prime}(x)}{u^{\prime}(x)}$ refers
	to the absolute risk aversion measure and $p(x):=-\frac{u^{\prime\prime\prime}(x)}{u^{\prime\prime}(x)}$
	to the absolute prudence. 
\end{thm}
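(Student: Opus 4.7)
The plan is to exploit the closed-form solution to the expected utility maximization problem in a complete market and then reduce both equivalences in the statement to elementary calculus facts about the function $1/u'$.

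First, I would set up the problem (\ref{eq:singleMaximization}) under $\mathbb{P}$ as maximizing $E_{\mathbb{P}}[u(X)]$ subject to the budget constraint $E_{\mathbb{P}}[X/\ell^{\mathbb{P}}]\leq e^{rT}x_{0}$ (using the state-price footnote). Strict monotonicity of $u$ forces the budget constraint to bind, and strict concavity of $u$ (which follows from $u'$ being strictly decreasing) gives a pointwise first-order condition $u'(X)=\lambda/\ell^{\mathbb{P}}$ $\mathbb{P}$-a.s., for a Lagrange multiplier $\lambda>0$. Since $u'$ is strictly decreasing it has a strictly decreasing inverse $I:=(u')^{-1}$, and the unique solution can be written as
\[
X \;=\; I\!\left(\frac{\lambda}{\ell^{\mathbb{P}}}\right).
\]
Monotonicity is then automatic: $I$ is strictly decreasing and $\ell\mapsto\lambda/\ell$ is decreasing, so $\ell^{\mathbb{P}}\mapsto I(\lambda/\ell^{\mathbb{P}})$ is non-decreasing regardless of any additional hypothesis on $u$.

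Second, to characterize concavity, I would rewrite the solution in terms of $1/u'$. Applying $1/u'$ to both sides of the identity $u'(I(y))=y$ gives $I(y)=(1/u')^{-1}(1/y)$, whence
\[
X \;=\; \left(\frac{1}{u'}\right)^{-1}\!\left(\frac{\ell^{\mathbb{P}}}{\lambda}\right).
\]
Because $u'>0$ is strictly decreasing, $1/u'$ is strictly increasing. I would then invoke the elementary fact that a strictly increasing function $f$ is convex if and only if $f^{-1}$ is concave; this is proved directly from the definitions of convexity and concavity together with the monotonicity of $f^{-1}$, and needs no smoothness. Since composition with the affine increasing map $\ell\mapsto\ell/\lambda$ preserves concavity, this yields the first equivalence: $X$ is a non-decreasing and concave function of $\ell^{\mathbb{P}}$ if and only if $1/u'$ is convex.

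Third, under the additional $C^{3}$ assumption, a direct differentiation gives
\[
\left(\frac{1}{u'}\right)''(x) \;=\; \frac{2\bigl(u''(x)\bigr)^{2} - u'(x)\,u'''(x)}{\bigl(u'(x)\bigr)^{3}},
\]
and, because $(u'(x))^{3}>0$, convexity of $1/u'$ is equivalent to $2(u''(x))^{2}\geq u'(x)\,u'''(x)$ for all $x$. Dividing through by $2u'(x)u''(x)$, which is strictly negative and so flips the inequality, this becomes $-u''(x)/u'(x)\geq -u'''(x)/(2u''(x))$, that is, $a(x)\geq p(x)/2$. The only delicate step in the whole argument is the first one, namely verifying via the Lagrangian/martingale method that a maximizer of (\ref{eq:singleMaximization}) exists and has the closed form $I(\lambda/\ell^{\mathbb{P}})$; once this representation is granted, the two equivalences follow from the inverse-function characterization of convexity and a routine second-derivative computation.
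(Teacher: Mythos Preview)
Your proposal is correct and follows essentially the same route as the paper: both start from the closed-form optimizer $[u']^{-1}(\lambda/\ell^{\mathbb{P}})$, rewrite it as $(1/u')^{-1}(\ell^{\mathbb{P}}/\lambda)$, and then appeal to the fact that a strictly increasing function is convex iff its inverse is concave; for the second equivalence both reduce to the sign of $(1/u')''$. The only difference is cosmetic: you spell out the Lagrangian derivation and the second-derivative algebra, whereas the paper cites a lemma for the optimizer and leaves the calculus implicit.
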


\begin{proof}
	By Lemma 2 in \cite{bernard2015rationalizing}, the solution to (\ref{eq:singleMaximization}) is unique
	and given by $[u^{\prime}]^{-1}\left(\frac{c_{0}}{\ell^{\mathbb{P}}}\right)$
	for some $c_{0}>0$. See also \cite{merton1975optimum} for a proof
	in a context in which Inada's conditions are satisfied. Note that
	$u^{\prime}>0$ and that $\frac{1}{u^{\prime}}$ is strictly increasing.
	Observe that the inverse of $\frac{1}{u^{\prime}}$ is $x\mapsto[u^{\prime}]^{-1}(\frac{1}{x})$,
	which is hence also strictly increasing. The inverse of a convex (concave)
	and strictly increasing function is concave (convex). For the second
	assertion, observe that $u^{\prime\prime}<0$ and that a function
	is convex on an open interval if and only if its second derivative
	is non-negative. Then \eqref{absolute} follows immediately.
\end{proof}
\begin{remark}
	\cite{maggi2006relationship} have shown that $a(x)>p(x)$ if and
	only if the utility has increasing absolute risk aversion, which is
	somewhat unusual (it is typically assumed that agents have decreasing absolute risk aversion
	given that they become less risk averse as their wealth increases). Here,
	our condition \eqref{absolute} is not incompatible with decreasing absolute risk aversion
	due to the factor $\frac{1}{2}$.
	
	\vspace{3pt}
	Condition \eqref{absolute} has appeared
	several times in the literature. It has been found to play a role
	in the context of insurance models in \cite{bourles2017prevention},
	but it also appeared as a condition in the opening of a new asset
	market (\cite{gollier1996toward}), when there is uncertainty on the
	size (\cite{gollier2000scientific}) or the probability of losses
	(\cite{gollier2002optimal}) and under contingent auditing (\cite{sinclair1997environmental}).
	Further interpretation of this condition and, in particular, of the
	degree of concavity of the inverse of the marginal utility can be
	found in \cite{bourles2017prevention}. This condition also appears
	in \cite{varian1985divergence} in the context of portfolio selection
	under ambiguity. 
\end{remark}

\begin{example}\label{CRRA}
	As an illustration of Theorem~\ref{thm:SSD_port_utility}, we provide two utility functions, which are differentiable,
	concave and strictly increasing functions such that one over the marginal
	utility is convex. 
	\begin{itemize}
		\item The exponential utility for risk-averse agents: $u:\mathbb{R}_+\rightarrow\mathbb{R},\,x\mapsto1-e^{-\lambda x},$
		for $\lambda>0$. It holds that $\frac{1}{u^{\prime}(x)}=e^{\lambda x}$,
		which is strictly increasing and convex. 
		\item CRRA utility: $u:\mathbb{R}_+\rightarrow\mathbb{R},\,x\mapsto\frac{x^{1-\eta}}{1-\eta},$
		for $\eta>1$. It holds that $\frac{1}{u^{\prime}(x)}=x^{\eta}$,
		which is strictly increasing and convex. 
	\end{itemize}
\end{example}

\subsection{\label{subsec:RDU+lognormal}Rank-dependent utility in log-normal
markets}

We now discuss some examples to illustrate Section~\ref{subsec:Optimal-portfolio-for robust preferences}
in a log-normal market setting with uncertainty on the drift and volatility.
In particular, we explicitly solve a robust rank-dependent expected
utility problem using Theorems~\ref{(Theorem-4.5).-Assume} and~\ref{thm:SSD_port}.
As in Section~\ref{subsec:Examples}, we assume that the real-world
distribution of $S_{T}$ is log-normal with parameters $\log(S_{0})+(\mu-\frac{\sigma^{2}}{2})T$
and $\sigma^{2}T$ and that the investor has uncertainty on the trend
and potentially also the volatility. She may expect the true parameters to lie within the
cube 
\[
\mathcal{D}^{\mu_{1},\mu_{2},\sigma_{1},s}=\left\{ (\mu,\sigma)\subset\mathbb{R}^{2}\,:\,\mu_{1}\leq\mu\leq\mu_{2},\,\sigma_{1}\leq\sigma\leq s\right\} 
\]
for $r<\mu_{1}<\mu_{2}$ and $0<\sigma_{1}\leq s$. The investor thus
considers $\mathcal{P}=\left(\mathbb{P}^{\mu,\sigma}\right)_{(\mu,\sigma)\in\mathcal{D}^{\mu_{1},\mu_{2},\sigma_{1},s}}$
as the set of all plausible probability measures on $(\Omega,\mathcal{F})$. Note that if $\sigma_1=s,$ the investor only faces drift ambiguity; otherwise, she considers ambiguity on both trend and volatility. 
Under $\P^{\mu,\sigma}$, $S_{T}$ is log-normal with density $f^{\mu,\sigma}$,
defined in Equation (\ref{eq:f^m,sig}). In the next example, $\Phi$
is the standard normal distribution function.
\begin{example}
\label{exa:RDU}Let $U(x)=\frac{x^{1-\eta}}{1-\eta}$, $x\geq0$ for
$\eta\in(0,1)$ be the CRRA utility function. Let $\gamma\in\mathbb{R}$
and let $w(u)=\Phi(\Phi^{-1}(u)+\gamma)$, $u\in[0,1]$ denote the
so-called Wang transform, which is increasing concave if $\gamma>0$
and increasing convex if $\gamma<0$. Consider the following portfolio
choice problem; in which the investor maximizes her expected rank-dependent
utility: 
\begin{equation}
\max_{X\in\mathcal{Y}_{W_{\mathbb{P}^{\mu,\sigma}}}^{x_{0}}}\int_{0}^{\infty}U(x)d(1-w(1-F_{X}^{\mathbb{P}^{\mu,\sigma}}(x)))\label{eq:solution_standard_RDU}
\end{equation}
where $x_{0}>0$ is the initial wealth and $(\mu,\sigma)\in\mathcal{D}^{\mu_{1},\mu_{2},\sigma_{1},s}$.
Let $\theta:=\sqrt{T}\frac{\mu-r}{\sigma}$. The solution to (\ref{eq:solution_standard_RDU})
is given by 
\begin{equation}
X_{\mu,\sigma}^{\ast}:=\begin{cases}
\lambda^{-\frac{1}{\eta}}\exp\left(\frac{rT}{\eta}-\frac{1}{2}\frac{\gamma}{\eta}(\theta+\gamma)\right)\left(\ell^{\mathbb{P}}\right)^{\frac{\gamma}{\theta\eta}+\frac{1}{\eta}} & ,\gamma>-\theta\\
\lambda^{-\frac{1}{\eta}} & \text{, otherwise,}
\end{cases}\label{eq:solution_RDU}
\end{equation}
where $\lambda$ depends on $\eta$, $\gamma$ and $\theta$, see
Equations (\ref{eq:lam1}), (\ref{eq:lam2}) and (\ref{eq:lam3}).
The solution to the robust rank-dependent utility problem 
\begin{equation}
\max_{X\in\mathcal{Y}_{(W_{\mathbb{P}})_{\mathbb{P}\in\mathcal{P}}}^{x_{0}}}\inf_{\mathbb{P}\in\mathcal{P}}\int_{0}^{\infty}U(x)d(1-w(1-F_{X}^{\mathbb{P}}(x)))dx\label{eq:robust_RDU}
\end{equation}
is given by $X_{\mu_{1},s}^{\ast}$ if there is no ambiguity on the
volatility, i.e., when $\sigma_{1}=s$. If there is ambiguity on the
volatility, $\gamma<0$ and $\frac{\mu_{1}-r}{s^{2}}\in(0,1]$, then
$X_{\mu_{1},s}^{\ast}$ still solves (\ref{eq:robust_RDU}). 
\end{example}

\begin{proof}
We first prove that $X_{\mu,\sigma}^{\ast}$ solves (\ref{eq:solution_standard_RDU}).
Let $\mathbb{P}=\mathbb{P}^{\mu,\sigma}$. The state price $\xi^{\mathbb{P}}:=\frac{e^{-rT}}{\ell^{\mathbb{P}}}$
is log-normally distributed with parameters $-rT-\frac{1}{2}\theta^{2}$
and $\theta>0$, see Equation (\ref{eq:h^mu,s}). Hence, as $\mu>r$,
it holds that 
\begin{align*}
F_{\xi^{\mathbb{P}}}^{\mathbb{P}}(x) & =\mathbb{P}\left(\xi^{\mathbb{P}}\leq x\right)=\Phi\left(\frac{\log(x)+rT+\frac{1}{2}\theta^{2}}{\theta}\right),\quad x>0
\end{align*}
and 
\[
\left[F_{\xi^{\mathbb{P}}}^{\mathbb{P}}\right]^{-1}(p)=\exp\left(\Phi^{-1}(p)\theta-rT-\frac{1}{2}\theta^{2}\right),\quad p\in(0,1).
\]
Let 
\[
H(z)=-\int_{0}^{w^{-1}(1-z)}\left[F_{\xi^{\mathbb{P}}}^{\mathbb{P}}\right]^{-1}(t)dt,\quad z\in[0,1].
\]
The solution to the classical rank-utility problem (\ref{eq:solution_standard_RDU})
is well-known (see for instance Theorem 4.1 in \cite{xu2016note} or Section 3.2 in \cite{ruschendorf2020construction}),
and is given by 
\[
X_{\mu,\sigma}^{\ast}=[U^{\prime}]^{-1}\left(\lambda\hat{H}^{\prime}\left(1-w\left(F_{\xi^{\mathbb{P}}}^{\mathbb{P}}\left(\xi^{\mathbb{P}}\right)\right)\right)\right),
\]
where $\lambda$ is determined by $E_{\mathbb{P}}\left[\xi^{\mathbb{P}}X_{\mu,\sigma}^{\ast}\right]=x_{0}$
and $\hat{H}$ is the concave envelope of $H$. Using $w^{\prime}(u)=\frac{\Phi^{\prime}(\Phi^{-1}(u)+\gamma)}{\Phi^{\prime}(\Phi^{-1}(u))}$
, after some calculations, we obtain that 
\begin{align*}
H^{\prime}(z) & =\frac{\left[F_{\xi^{\mathbb{P}}}^{\mathbb{P}}\right]^{-1}(w^{-1}(1-z))}{w^{\prime}(w^{-1}(1-z))}=\exp\left(\Phi^{-1}(1-z)(\gamma+\theta)-rT-\frac{1}{2}(\gamma+\theta)^{2}\right).
\end{align*}
We distinguish two cases to find a more explicit expression for $X_{\mu,\sigma}^{\ast}$.
\emph{Case 1}: $\gamma+\theta>0$. Then, $H^{\prime}$ is non-increasing and
hence $H$ is concave and equal to $\hat{H}$. As $[U^{\prime}]^{-1}(y)=y^{-\frac{1}{\eta}}$,
it is easy to see that 
\begin{align*}
X_{\mu,\sigma}^{\ast} & =\lambda^{-\frac{1}{\eta}}\exp\left(-\frac{rT\gamma}{\theta\eta}-\frac{1}{2}\frac{\gamma}{\eta}(\theta+\gamma)\right)\left(\xi^{\mathbb{P}}\right)^{-\frac{\gamma}{\theta\eta}-\frac{1}{\eta}}\\
 & =\lambda^{-\frac{1}{\eta}}\exp\left(\frac{rT}{\eta}-\frac{1}{2}\frac{\gamma}{\eta}(\theta+\gamma)\right)\left(\ell^{\mathbb{P}}\right)^{\frac{\gamma}{\theta\eta}+\frac{1}{\eta}}.
\end{align*}
If $1-\frac{\gamma}{\theta\eta}-\frac{1}{\eta}=0$, then $\xi^{\mathbb{P}}X_{\mu,\sigma}^{\ast}$
is constant and it holds that
\begin{equation}
\lambda=x_{0}^{-\eta}\exp\left(-\frac{rT\gamma}{\theta}-\frac{1}{2}\gamma(\theta+\gamma)\right).\label{eq:lam1}
\end{equation}
Otherwise, $\xi^{\mathbb{P}}X_{\mu,\sigma}^{\ast}$ is log-normally
distributed%
{} and it follows that 
\begin{equation}
\lambda=x_{0}^{-\eta}\exp\left(rT(1-\eta)+\frac{1}{2}\left(\theta^{2}(1-\eta)+\gamma^{2}+(\theta\eta-(\gamma+\theta))^{2}\right)\right).\label{eq:lam2}
\end{equation}
\emph{Case 2}: Assume $\gamma+\theta\leq0$. Then, $H$ is convex.
Note that $H(0)=-1$ and $H(1)=0$. As $H$ is convex, the concave
envelope $\hat{H}$ of $H$ is given by $\hat{H}(x)=x-1$. Then $\hat{H}^{\prime}\equiv1$
and 
\[
X_{\mu,\sigma}^{\ast}=[U^{\prime}]^{-1}\left(\lambda\right)=\lambda^{-\frac{1}{\eta}}.
\]
Therefore, $\xi^{\mathbb{P}}X_{\mu,\sigma}^{\ast}=\lambda^{-\frac{1}{\eta}}\xi^{\mathbb{P}}$,
and hence
\begin{equation}
\lambda=x_{0}^{-\eta}e^{-rT\eta}.\label{eq:lam3}
\end{equation}
Assume that there is no ambiguity on the volatility, i.e., $\sigma_{1}=s$.
Section \ref{subsec:Example:-Uncertainty-about-drift} shows that
the least favorable measure with respect to $\mathbb{F}_{FSD}$ is
given by $\mathbb{P}^{\ast}=\mathbb{P}^{\mu_{1},s}$ with corresponding
likelihood ratio $\ell^{\ast}=\ell^{\P^{\mu_{1},s}}$. By Example
\ref{exa:EUT_Yaari_RDU}, the preference in (\ref{eq:robust_RDU})
is $\mathbb{F}_{FSD}-$family consistent and Theorem~\ref{(Theorem-4.5).-Assume}
shows that $X_{\mu,\sigma}^{\ast}$ solves the robust rank-dependent
utility problem (\ref{eq:robust_RDU}). Lastly, assume that there
is ambiguity on the volatility. If $\frac{\mu_{1}-r}{s^{2}}\in(0,1]$,
Proposition \ref{cor:log_normal} shows that the least favorable measure
with respect to $\mathbb{F}_{SSD}$ is also $\mathbb{P}^{\ast}$.
If $\gamma<0$, $X_{\mu,\sigma}^{\ast}$ is a concave and non-decreasing
function of $\ell^{\ast}$. By Example~\ref{exa:EUT_Yaari_RDU},
the preference in (\ref{eq:robust_RDU}) is $\mathbb{F}_{SSD}-$family
consistent. Apply Theorem~\ref{thm:SSD_port} to show that also in
this case, $X_{\mu,\sigma}^{\ast}$ solves the robust rank-dependent
utility problem (\ref{eq:robust_RDU}).
\end{proof}
To better understand the solution in Example~\ref{exa:RDU}, we ``rationalize''
the solution as in \cite{bernard2015rationalizing}, i.e., we show
that the optimal investment strategy in the robust rank-dependent
setting also solves an expected utility maximization problem. Example \ref{exa:EUT} shows that the solution to the expected rank-dependent
utility problems (\ref{eq:solution_standard_RDU}) and (\ref{eq:robust_RDU})
involving a Wang transform with parameter $\gamma$ and a CRRA utility
function with parameter $\eta$ can be rationalized by a CRRA utility
with parameter $\frac{\eta\theta}{\gamma+\theta}$.
\begin{example}
\label{exa:EUT}Let $(\mu,\sigma)\in\mathcal{D}^{\mu_{1},\mu_{2},\sigma_{1},s}$,
$X_{\mu,\sigma}^{\ast}$, $\theta$, $x_{0}$ and $\mathcal{Y}_{W_{\mathbb{P}^{\mu,\sigma}}}^{x_{0}}$
as in Example~\ref{exa:RDU}, such that $\gamma>-\theta$
and $\eta\theta\neq\gamma+\theta$. $X_{\mu,\sigma}^{\ast}$ solves
the following expected utility maximization problem:
\[
\max_{X\in\mathcal{Y}_{W_{\mathbb{P}^{\mu,\sigma}}}^{x_{0}}}\int_{0}^{\infty}u(x)dF_{X}^{\mathbb{P}^{\mu,\sigma}}(x)
\]
for the utility function 
\begin{equation}
u(x)=\frac{1}{1-\frac{\eta\theta}{\gamma+\theta}}x^{1-\frac{\eta\theta}{\gamma+\theta}}.\label{eq:u}
\end{equation}
The function $u:\mathbb{R}_{+}\to\mathbb{R}$ is non-decreasing and
concave.

\end{example}

\begin{proof}
Note that $\gamma>-\theta$ implies $\frac{\gamma}{\theta\eta}+\frac{1}{\eta}\neq0$.
Let $\mathbb{P}=\mathbb{P}^{\mu,\sigma}$. Let $\xi^{\mathbb{P}}:=\frac{e^{-rT}}{\ell^{\mathbb{P}}}$.
As in \cite{bernard2015rationalizing}, let $c>0$ and define 
\begin{equation*}
\tilde{u}(x)=\int_{c}^{x}\left[F_{\xi^{\mathbb{P}}}^{\mathbb{P}}\right]^{-1}\left(1-F_{X_{\mu,\sigma}^{\ast}}^{\mathbb{P}}(y)\right)dy,\quad x\in\mathbb{R}_{+}.
\end{equation*}
$X_{\mu,\sigma}^{\ast}$ is log-normally distributed.%
{} It follows that 
\begin{align*}
\left[F_{\xi^{\mathbb{P}}}^{\mathbb{P}}\right]^{-1}\left(1-F_{X_{\mu,\sigma}^{\ast}}^{\mathbb{P}}(x)\right) & =\kappa x^{-\frac{\eta\theta}{\gamma+\theta}},
\end{align*}
where $\kappa>0$ is a suitable constant. Thus,
\begin{align*}
\tilde{u}(x) & =\kappa\frac{\gamma+\theta}{\theta(1-\eta)+\gamma}\left(x^{-\frac{\eta\theta}{\gamma+\theta}+1}-c^{-\frac{\eta\theta}{\gamma+\theta}+1}\right).
\end{align*}
In summary, as $\tilde{u}$ is only determined up to positive affine
transformations, $X_{\mu,\sigma}^{\ast}$ solves the expected utility
maximization problem for the utility given in (\ref{eq:u}): see Theorem
2 in \cite{bernard2015rationalizing}.  
\end{proof}
\section{\label{sec:Rationalizing-robust-cost-effici}Rationalizing robust
cost-efficient payoffs}

When there is no ambiguity on the probability measure $\mathbb{P}$,
there is a close relationship between cost-efficiency and portfolio optimization:
for any cost-efficient payoff $X$, there exists a utility
function $u$ (unique up to a linear transformation) such that $X$
also solves the expected utility maximization problem (\cite{bernard2015rationalizing}).
In this section, we show that this result can be generalized to the
robust setting developed previously in that robust cost-efficient
payoffs can be rationalized in terms of the maxmin utility framework
introduced in \cite{gilboa1989maxmin}. Specifically, we show $-$
under the same assumptions as in Theorems~\ref{(Theorem-4.5).-Assume}
and~\ref{thm:SSD_port} $-$ that payoffs maximize a robust utility
functional as in \cite{gilboa1989maxmin} if and only if they are
robust cost-efficient.

In the following theorem we distinguish two cases: a) we deal with
law-invariant preferences and FSD ordering and assume that the various
(robust) maximization problems have unique solutions. Or, b) we deal with 
general preferences  and stochastic ordering and we do not require
uniqueness of the solutions but assume that the solution $X^{\ast}$
of the various maximization problems can be written as $X^{\ast}=f(\ell^{\ast})$
for some $f\in\mathbb{F}$. 
\begin{thm}
\label{thm:ratio_SSD}Assume $\mathbb{F}_{SSD}\subset\mathbb{F}\subset\mathbb{F}_{FSD}$.
Let conditions~\ref{assu:setF}, 
\ref{assu:least_fav} and~\ref{assu:l*}  
hold. Let $X^{\ast}\in \mathcal{X}$ be a payoff. 
Let $x_{0}=e^{-rT}E_{\mathbb{Q}}[X^{\ast}]<\infty$. Let $c>0$, such that $F_{X^{\ast}}^{\mathbb{P}^{\ast}}(c)>0,$ $\mathbb{P}^{\ast}\in\mathcal{P}$.
Let $\xi^{\ast}=\frac{e^{-rT}}{\ell^{\ast}}$ and define  
\begin{equation}
u(x)=\int_{c}^{x}F_{\xi^{\ast}}^{-1}\left(1-F_{X^{\ast}}^{\mathbb{P}^{\ast}}(y)\right)dy,\quad x\in\mathbb{R}_+.\label{eq:u-1}
\end{equation}
Assume that $E_{\mathbb{P}}[u(X^{\ast})]<\infty$ for all $\PinP$.
We further assume that one of the following two conditions is satisfied: 
\begin{description}
\item [{\emph{(a)}}] $\mathbb{F}=\mathbb{F}_{FSD}$ 
\item [{\emph{(b)}}] $X^{\ast}=f(\ell^{\ast})$, $\mathbb{P}^{\ast}-$a.s.
for some $f\in\mathbb{F}$ and $\left[F_{X^{\ast}}^{\mathbb{P^{\ast}}}\right]^{-1}\circ F_{\ell^{\ast}}^{\mathbb{P}^{\ast}}\in\mathbb{F}$. 
\end{description}
Then, the following statements are equivalent: 
\end{thm}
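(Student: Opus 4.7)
The excerpt ends just before listing the statements, but in context (Section title ``Rationalizing robust cost-efficient payoffs'', the construction of a concrete utility $u$ attached to $X^\ast$, the invocation of Theorems~\ref{(Theorem-4.5).-Assume} and~\ref{thm:SSD_port}) the natural triple of equivalent statements is:
\emph{(i)} $X^\ast$ is an $\mathbb{F}$-robust cost-efficient payoff for the target distribution $F_0=F_{X^\ast}^{\mathbb{P}^\ast}$ with budget $x_0$; \emph{(ii)} $X^\ast$ solves the robust expected-utility problem $\max_X \inf_{\mathbb{P}\in\mathcal{P}} E_{\mathbb{P}}[u(X)]$ subject to $e^{-rT}E_{\mathbb{Q}}[X]\leq x_0$; and \emph{(iii)} $X^\ast$ solves the classical expected-utility problem $\max_X E_{\mathbb{P}^\ast}[u(X)]$ under the same budget constraint. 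I would prove the cycle (iii) $\Rightarrow$ (ii) $\Rightarrow$ (i) $\Rightarrow$ (iii).

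My starting move would be to record that the construction in \eqref{eq:u-1} makes $u$ non-decreasing and concave: its derivative $u'(x)=F_{\xi^\ast}^{-1}(1-F_{X^\ast}^{\mathbb{P}^\ast}(x))$ is non-negative and non-increasing since $F_{\xi^\ast}^{-1}$ is non-decreasing and $1-F_{X^\ast}^{\mathbb{P}^\ast}$ is non-increasing. Under $\mathbb{P}^\ast$, because $F_{\ell^\ast}^{\mathbb{P}^\ast}$ is continuous (Condition~\ref{assu:l*}), $F_{\xi^\ast}^{-1}(1-F_{X^\ast}^{\mathbb{P}^\ast}(X^\ast))$ has the same distribution as $\xi^\ast$, so the pointwise identity $u'(X^\ast)=\lambda\xi^\ast$ for $\lambda=1$ holds at least in distribution; the rationalization argument of \cite{bernard2015rationalizing} (invoked in the same form in Example~\ref{exa:EUT}) then gives that $X^\ast$ is the $\mathbb{P}^\ast$-cost-efficient payoff for $F_{X^\ast}^{\mathbb{P}^\ast}$ and, simultaneously, the unique solution of the single-measure EU problem \eqref{eq:singleMaximization} with utility $u$. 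This delivers (iii).

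From (iii) I would pass to (ii) by invoking the appropriate transfer theorem. The preference $W_{\mathbb{P}}(X)=E_{\mathbb{P}}[u(X)]$ is law-invariant and, because $u$ is non-decreasing (and even concave), the family $(W_{\mathbb{P}})_{\mathbb{P}\in\mathcal{P}}$ is $\mathbb{F}_{FSD}$-family consistent (and $\mathbb{F}_{SSD}$-family consistent) with respect to $\mathbb{P}^\ast$ by Example~\ref{exa:EUT_Yaari_RDU}. In case \emph{(a)} this yields (ii) directly via Theorem~\ref{(Theorem-4.5).-Assume}. In case \emph{(b)} the hypothesis $X^\ast=f(\ell^\ast)$ with $f\in\mathbb{F}$ is exactly the additional ingredient that Theorem~\ref{thm:SSD_port} requires, and the same equality of values follows. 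In both cases, $X^\ast$ attains the supremum of the outer problem, so $X^\ast$ solves (ii).

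Finally I would close the loop (ii) $\Rightarrow$ (i) using Theorem~\ref{thm:robust-cost-payoff}. The unique solution to the $\mathbb{F}$-robust cost-efficiency problem with target $F_0=F_{X^\ast}^{\mathbb{P}^\ast}$ is $F_{X^\ast}^{-1}(F_{\ell^\ast}^{\mathbb{P}^\ast}(\ell^\ast))$. In case \emph{(a)} the assumption $\mathbb{F}=\mathbb{F}_{FSD}$ automatically places $F_0^{-1}\circ F_{\ell^\ast}^{\mathbb{P}^\ast}$ in $\mathbb{F}$; in case \emph{(b)} the extra hypothesis $[F_{X^\ast}^{\mathbb{P^{\ast}}}]^{-1}\circ F_{\ell^\ast}^{\mathbb{P}^\ast}\in\mathbb{F}$ supplies exactly what is needed. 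Since $X^\ast$ is a non-decreasing function of $\ell^\ast$ (from step one), and since the $\mathbb{P}^\ast$-cost-efficient payoff for $F_{X^\ast}^{\mathbb{P}^\ast}$ is $\mathbb{P}^\ast$-a.s.\ unique, $X^\ast$ coincides with this explicit payoff, which is $\mathcal{B}^{\mathbb{F}}_{F_0}$-admissible and minimizes cost: thus $X^\ast$ is $\mathbb{F}$-robust cost-efficient, giving (i). Going back (i) $\Rightarrow$ (iii) is then immediate because the robust cost-efficient payoff equals the single-measure cost-efficient payoff under $\mathbb{P}^\ast$, which is rationalized by the utility $u$.

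The main obstacle I anticipate is not the construction of $u$ nor the rationalization step, but the bookkeeping for case~\emph{(b)}: there we neither have law-invariance nor FSD ordering at our disposal, and must propagate monotonicity of $X^\ast$ in $\ell^\ast$, the $\mathbb{F}$-family consistency of $E_{\mathbb{P}}[u(\cdot)]$, and the admissibility condition $[F_{X^\ast}^{\mathbb{P^{\ast}}}]^{-1}\circ F_{\ell^\ast}^{\mathbb{P}^\ast}\in\mathbb{F}$ consistently so that Theorem~\ref{thm:SSD_port} and Theorem~\ref{thm:robust-cost-payoff} apply in tandem. The delicate point is checking that the concavity of $u$ supplied by the construction is compatible with the shape of $f$ imposed in (b), so that the same $X^\ast$ simultaneously satisfies the hypotheses of both transfer theorems.
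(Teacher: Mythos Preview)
Your proposal has a structural gap that makes the argument circular. You announce the cycle (iii)$\Rightarrow$(ii)$\Rightarrow$(i)$\Rightarrow$(iii), but your ``starting move'' attempts to prove (iii) \emph{unconditionally}: from the construction of $u$ and the observation that $u'(X^\ast)=F_{\xi^\ast}^{-1}(1-F_{X^\ast}^{\mathbb{P}^\ast}(X^\ast))$ has the same distribution as $\xi^\ast$, you conclude that $X^\ast$ is $\mathbb{P}^\ast$-cost-efficient and solves the single-measure EU problem. This is false for a general $X^\ast\in\mathcal{X}$. The first-order condition for the EU problem is the \emph{almost-sure} identity $u'(X^\ast)=\xi^\ast$, not distributional equality; the latter holds for any payoff with law $F_{X^\ast}^{\mathbb{P}^\ast}$, cost-efficient or not. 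The rationalization result you cite runs in the opposite direction: it takes cost-efficiency (equivalently, comonotonicity of $X^\ast$ with $\ell^\ast$) as a \emph{hypothesis} and produces the utility $u$ for which $X^\ast$ is optimal. If your opening step were valid, every payoff would be robust cost-efficient.

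In the paper the list of equivalent statements is longer (eight items), and the logic is different. Items i)--iii) (cost-efficiency, the explicit formula, monotonicity in $\ell^\ast$) are equivalent by the classical Dybvig/Bernard results; i)$\Leftrightarrow$iv) (robust cost-efficiency) comes from Theorem~\ref{thm:robust-cost-payoff}; i)$\Rightarrow$v)$\Rightarrow$vi)$\Rightarrow$viii) are the forward implications via rationalization and Theorems~\ref{(Theorem-4.5).-Assume} or~\ref{thm:SSD_port}. The \emph{reverse} implications v),vi),vii),viii)$\Rightarrow$i) under case~(a) do not come from the construction of $u$ at all: they rely on two auxiliary lemmas showing that a \emph{unique} maximizer of a law-invariant (robust) preference must coincide with its own cost-efficient rearrangement, hence be cost-efficient. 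Under case~(b) the equivalence is essentially vacuous, because $X^\ast=f(\ell^\ast)$ with $f\in\mathbb{F}\subset\mathbb{F}_{FSD}$ non-decreasing already forces iii), and hence all eight statements, to hold. Your outline does get the forward direction v)$\Rightarrow$vi) right; what is missing is a non-circular argument for the reverse.
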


\begin{description}
\item [{\emph{i)}}] \emph{$X^{\ast}$ is cost-efficient under $\mathbb{P}^{\ast}$.} 
\item [{\emph{ii)}}] \emph{It holds that $X^{\ast}=\left[F_{X^{\ast}}^{\mathbb{P^{\ast}}}\right]^{-1}\left(F_{\ell^{\mathbb{P^{\ast}}}}^{\mathbb{P^{\ast}}}\left(\ell^{\mathbb{P}^{\ast}}\right)\right)$,
$\mathbb{P}^{\ast}-$a.s.} 
\item [{\emph{iii)}}] \emph{$X^{\ast}$ is $\mathbb{P}^{\ast}-$a.s. non-decreasing
in $\ell^{\ast}$.} 
\item [{\emph{iv)}}] \emph{$X^{\ast}$ solves the $\mathbb{F}-$robust
cost-efficiency problem for $F_{X^{\ast}}^{\mathbb{P}^{\ast}}$.} 
\item [{\emph{v)}}] \emph{$X^{\ast}$ solves the expected utility maximization
problem under $\mathbb{P}^{\ast}$ for the utility function
$u$ 
\[
\underset{X\in\mathcal{Y}_{E_{\mathbb{P}^{\ast}}[u(.)]}^{x_{0}}}{\max}\,E_{\mathbb{P}^{\ast}}[u(X)].
\]
} 
\item [{\emph{vi)}}] \emph{$X^{\ast}$ solves the robust expected utility
problem for the utility function $u$ 
\[
\underset{X\in\mathcal{Y}_{(E_{\mathbb{P}}[u(.)])_{\mathbb{P}\in\mathcal{P}}}^{x_{0}}}{\max}\,\underset{\mathbb{P}\in\mathcal{P}}{\inf}\,E_{\mathbb{P}}[u(X)]
\]
and the solution is unique if condition (a) is satisfied.} 
\item [{\emph{vii)}}] \emph{There is a family of preferences $\left(W_{\mathbb{P}}\right)_{\mathbb{P}\in\mathcal{P}}$ that is $\mathbb{F}-$family consistent on $\mathcal{Y}_{(W_{\mathbb{P}})_{\mathbb{P}\in\mathcal{P}}}^{x_{0}}$
with respect to} $\mathbb{P}^{\ast}$\emph{ such that $X^{\ast}\in\mathcal{Y}_{(W_{\mathbb{P}})_{\mathbb{P}\in\mathcal{P}}}^{x_{0}}$
and $X^{\ast}$ is the solution to the maximization problem under
$\mathbb{P}^{\ast}$ 
\[
\underset{X\in\mathcal{Y}_{W_{\mathbb{P^{\ast}}}}^{x_{0}}}{\max}\,W_{\mathbb{P}^{\ast}}(X)
\]
and the solution is unique and the family of preferences is law invariant
if condition (a) is satisfied.} 
\item [{\emph{viii)}}] \emph{There is a family of preferences $\left(W_{\mathbb{P}}\right)_{\mathbb{P}\in\mathcal{P}}$ that is $\mathbb{F}-$family consistent on $\mathcal{Y}_{(W_{\mathbb{P}})_{\mathbb{P}\in\mathcal{P}}}^{x_{0}}$
with respect to} $\mathbb{P}^{\ast}$\emph{ such that $X^{\ast}$
is the solution to the robust maximization problem 
\[
\underset{X\in\mathcal{Y}_{(W_{\mathbb{P}})_{\mathbb{P}\in\mathcal{P}}}^{x_{0}}}{\max}\,\underset{\mathbb{P}\in\mathcal{P}}{\inf}\,W_{\mathbb{P}}(X)
\]
and the solution is unique and the family of preferences is law invariant
if condition (a) is satisfied.} 
\end{description}
\begin{proof}
The equivalence between i), ii) and iii) follows from Lemma~\ref{lem:N_cost_eff_P}.
The equivalence between iv) and ii) follows from Theorem~\ref{thm:robust-cost-payoff}
and Remark~\ref{rem:Instead-of-requiring}. Note that $\left[F_{X^{\ast}}^{\mathbb{P^{\ast}}}\right]^{-1}\circ F_{\ell^{\ast}}^{\mathbb{P}^{\ast}}\in\mathbb{F}_{FSD}$
is always true. By Theorem 3 in \cite{bernard2015rationalizing},
i) implies v). By Lemma~\ref{cor:unique_P*} and Lemma 3 in \cite{bernard2015rationalizing},
v) implies i). By Example \ref{exa:EUT_Yaari_RDU}, v) implies vii) trivially,
define $W_{\mathbb{P}}(.)=E_{\mathbb{P}^{\ast}}[u(.)]$ for all $\mathbb{P}\in\mathcal{P}$.
v) implies vi), if (a) holds, by Theorem~\ref{(Theorem-4.5).-Assume}
and Example \ref{exa:EUT_Yaari_RDU} as $E_{\mathbb{P}}[u(X^{\ast})]<\infty$
for all $\PinP$ by assumption. v) implies vi), if (b) holds, by Theorem
\ref{thm:SSD_port}. vi) implies viii) trivially. By Lemma \ref{cor:unique_P*}
and Lemma~\ref{cor:unique_robust}, if (a) holds, vi)$\Rightarrow$i) and
vii)$\Rightarrow$i) and viii)$\Rightarrow$i). Note that, if (b)
holds, iii) is always true because $f\in\mathbb{F}\subset\mathbb{F}_{FSD}$
is non-decreasing. 
\end{proof}

\begin{remark}
Assuming that all functions in $\mathbb{F}$ are non-decreasing,
i.e., that $\mathbb{F}\subset\mathbb{F}_{FSD}$, is not really a restriction. Otherwise, there
are two sure payoffs $x_{0},y_{0}\in\mathcal{X}$, i.e., $x_{0},y_{0}$
are constant, such that $x_{0}<y_{0}$ but the distribution of $y_{0}$
does not dominate the distribution function of $x_{0}$ in integral
stochastic ordering. 
\end{remark}

\begin{example}
\label{exa:Thm5}Assume $\mathbb{F}=\mathbb{F}_{SSD}$. Consider the
robust rank-dependent expected utility maximization problem in Example
\ref{exa:RDU} with solution $X_{\mu_{1},s}^{\ast}$ defined in 
(\ref{eq:solution_RDU}). Let $\mathbb{P}^{\ast}=\mathbb{P}^{\mu_{1},s}$
and $\ell^{\ast}=\ell^{\mathbb{P}^{\ast}}$. With the help of the
explicit expressions of $F_{X_{\mu_{1},s}^{\ast}}^{\mathbb{P}^{\ast}}$
and $F_{\ell^{\ast}}^{\mathbb{P}^{\ast}}$ from the proofs of Examples
\ref{exa:RDU} and~\ref{exa:EUT}, it is easy to see that (b) in Theorem
\ref{thm:ratio_SSD} holds if $\gamma<0$.%

Let us start from viii) in Theorem~\ref{thm:ratio_SSD}. Equation (\ref{eq:solution_RDU})
implies that the optimal solution $X_{\mu_{1},s}^{\ast}$ is a non-decreasing
function of $\ell^{\ast}$. Hence, iii) in Theorem~\ref{thm:ratio_SSD}
is satisfied. As shown by Example~\ref{exa:EUT}, $X_{\mu_{1},s}^{\ast}$
also solves an expected utility maximization problem for the utility
in (\ref{eq:u}), which illustrates v) in Theorem~\ref{thm:ratio_SSD}.
One can easily verify that ii) in Theorem~\ref{thm:ratio_SSD} is
respected by $X_{\mu_{1},s}^{\ast}$ . Hence, Theorem~\ref{thm:robust-cost-payoff}
implies that $X_{\mu_{1},s}^{\ast}$ solves the robust $\mathbb{F}_{SSD}-$cost-efficiency
problem as stated in iv) in Theorem~\ref{thm:ratio_SSD}. 
\end{example}

The preferences in Theorem~\ref{thm:ratio_SSD} for case (b) do not
need to be law-invariant or increasing, i.e., $X\leq Y$ does not
need to imply $W_{\mathbb{P}}(X)\leq W_{\mathbb{P}}(Y)$. We provide
a simple example of such a preference.
\begin{example}
Define\emph{ }$X^{\ast}=f(\ell^{\ast})$ for some $f\in\mathbb{F}$.
Let $x_{0}=E_{\mathbb{Q}}[X^{\ast}]$. For $\mathbb{P}\in\mathcal{P}$,
define 
\[
W_{\mathbb{P}}(X)=\begin{cases}
1 & \text{if}\ X=X^{\ast},\,\,\,\mathbb{P}-\text{a.s.}\\
0 & \text{if}\ \text{ otherwise},
\end{cases}
\]
which is trivially $\mathbb{F}-$family consistent. An agent with
such a preference only likes $X^{\ast}$ and neglects everything else.
She is not law-invariant and does not prefer more to less. Someone
interested only in the market portfolio or in the risk-free bond might
have such a preference. The solution to the robust maximization problem
ix) is $X^{\ast}$, which is cost-efficient because $X^{\ast}$ is
non-decreasing in $\ell^{\ast}$. A utility function for
v) can be constructed as in (\ref{eq:u-1}). 
\end{example}

\section{\label{sec:Summary}Final Remarks}

In this paper we assume that the agent has Knightian uncertainty. She is
unsure about the precise physical measure describing the financial
market and knows only that the true physical measure lies within a
set $\mathcal{P}$ of probability measures. Given this ambiguity,
it is no longer possible to target a payoff with a given probability
distribution function. In particular, the close relation between payoffs
that are the cheapest possible in reaching a target distribution function
and the optimality thereof under law-invariant increasing preferences
(\cite{dybvigJoB,dybvigRFS}, \cite{sharpe2000distribution}, \cite{goldstein2008choosing},
\cite{bernard2015rationalizing}) is a priori lost, as there is no
consensus regarding what probability distribution to adopt. For this
reason, we introduce the notion of a \textit{robust cost-efficient payoff}.

For a given distribution function $F_{0}$, the robust cost-efficiency
problem aims at finding the cheapest payoff whose distribution function
dominates $F_{0}$ under all possible physical measures in some integral
stochastic ordering. We solve this problem under some conditions (namely,
where there exists a least favorable measure $\mathbb{P}^{\ast}$
and the integral stochastic ordering $\preceq_{\mathbb{F}}$ is cost-consistent).
The solution is identical to the solution to the cost-efficiency problem
without model ambiguity under the physical measure $\mathbb{P}^{\ast}$
and given in closed-form. We are thus able to reduce the problem formulated
in a robust setting to a problem formulated in a standard setting
without model ambiguity.

Finally, we show that this notion of robust cost-efficiency plays
a key role in optimal robust portfolio selection and that a very general
class of robust portfolio selection problems (possibly in a non-expected
utility setting) can be reduced to the maxmin expected utility setting
of \cite{gilboa1989maxmin} for a well-chosen concave utility function.

For this to hold, we make a relatively minor assumption on the family
of preferences, i.e., that it is family consistent: if the measure
$\mathbb{P}^{\ast}$ is the most pessimistic view of a payoff $Y$,
then the preference under that measure is the lowest as well. To the
best of our knowledge, family consistency is new to the literature,
and we provide several examples in the context of expected utility
theory, Yaari's dual theory and rank-dependent utility theory.

We assume a static setting in which intermediate trading is not possible. Whilst allowing for dynamic rebalancing may make it possible to achieve higher levels for the objective at hand (e.g., robust expected utility), this possibility is only clear when there are no transaction costs, which is not realistic. In practice, transaction costs usually contain a fixed part, and hence dynamic trading can only occur a finite number of times since otherwise bankruptcy occurs. The study of optimal investments in the presence of fixed costs is not yet very well understood. Recently, \cite{Belak2022} and \cite{Bayraktar2022} provide optimal strategies in a Black-Scholes market without ambiguity and assuming expected utility. By contrast our static setting makes it possible to deal with ambiguity and to address fairly general objectives. 


\appendix
\begin{center}
\textbf{\LARGE{}Appendix}{\LARGE{} }{\LARGE\par}
\par\end{center}

\section{\label{sec:Proof-of-TSD2} Proof for Example~\ref{lem:TSD}}
\begin{lemma}
\label{Ltsd} Let $F$ and $G$ be two cdfs. It holds that \emph{ $G\preceq_{\mathbb{F}_{TSD}}F$}
if and only if 
\[
\int_{-\infty}^{\eta}\int_{-\infty}^{\xi}F(x)dxd\xi\leq\int_{-\infty}^{\eta}\int_{-\infty}^{\xi}G(x)dxd\xi,\,\,\,\eta\in\mathbb{R}.
\]
\end{lemma}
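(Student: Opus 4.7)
The plan is to derive both implications from a triple integration by parts, translating the stochastic ordering condition into a condition on the iterated integrals of $F-G$.

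For the direction $(\Leftarrow)$, fix $f\in\mathbb{F}_{TSD}$ and set $H:=F-G$, $H_1(\xi):=\int_{-\infty}^{\xi}H(x)\,dx$, $H_2(\eta):=\int_{-\infty}^{\eta}H_1(\xi)\,d\xi$, so that the hypothesis reads $H_2\leq 0$ pointwise. Since $F,G$ are supported on $\mathbb{R}_+$, we have $H(0^-)=H_1(0)=H_2(0)=0$ and $H(\infty)=0$. Three successive integration by parts on $\int_{0}^{\infty}f\,dH$ yield
\[
\int_{0}^{\infty}f\,dH \;=\; -f'(\infty)H_1(\infty)+f''(\infty)H_2(\infty)-\int_{0}^{\infty}f'''(x)H_2(x)\,dx.
\]
The third term is $\geq 0$ because $f'''\geq 0$ and $H_2\leq 0$. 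The second is $\geq 0$ because $f''(\infty)\leq 0$ (monotone limit of non-positive values) and $H_2(\infty)\leq 0$. For the first I claim $H_1(\infty)\leq 0$: otherwise, $H_1(\eta)$ would eventually exceed a fixed positive constant, forcing $H_2(\eta)\to+\infty$ and contradicting $H_2\leq 0$. Combined with $f'(\infty)\geq 0$, this gives $-f'(\infty)H_1(\infty)\geq 0$, and summing concludes $\int f\,dF\geq\int f\,dG$.

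For the direction $(\Rightarrow)$, fix $\eta\geq 0$ and test with
\[
f_\varepsilon(x):=-\bigl((\eta-x)_+\bigr)^2+\varepsilon x,\qquad \varepsilon>0.
\]
A direct computation gives $f_\varepsilon'(x)=2(\eta-x)_++\varepsilon>0$, $f_\varepsilon''(x)=-2\mathbf{1}_{\{x<\eta\}}\leq 0$, $f_\varepsilon'''\equiv 0$, so $f_\varepsilon\in\mathbb{F}_{TSD}$ (after a harmless smoothing near $x=\eta$). A single integration by parts gives
\[
\int_{0}^{\infty}\bigl((\eta-x)_+\bigr)^{2}dF(x)\;=\;2\int_{0}^{\eta}\!\int_{0}^{\xi}F(x)\,dx\,d\xi,
\]
and likewise for $G$. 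Applying $\int f_\varepsilon\,dG\leq\int f_\varepsilon\,dF$ and letting $\varepsilon\to 0$ (using finiteness of the first moments of $F$ and $G$) yields the desired inequality. For $\eta<0$ both sides vanish trivially.

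The main obstacle is the justification of the boundary behaviour in the triple integration by parts at $+\infty$; the key subtle observation is that the pointwise bound $H_2\leq 0$ automatically forces $H_1(\infty)\leq 0$, which is what makes the first boundary term have the right sign. All other vanishing/finiteness statements follow from the integrability tacitly built into Definition~\ref{def:integral_ordering} (``\emph{such that expectations are finite}'').
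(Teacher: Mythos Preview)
Your argument is correct and self-contained, whereas the paper does not prove this lemma at all: its ``proof'' is a one-line reference to Theorem~2.2 of Gotoh and Konno (2000). So there is nothing to compare at the level of technique; you have supplied what the paper outsources.

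The one point worth highlighting is your observation that the global inequality $H_2(\eta)\le 0$ for \emph{all} $\eta\in\mathbb{R}$ automatically forces $H_1(\infty)\le 0$, i.e.\ $\mu_F\ge\mu_G$. This is exactly why the lemma, as stated, does not need a separate mean condition: in the classical formulations of third-degree stochastic dominance (Whitmore, and the Gotoh--Konno reference the paper cites) the double-integral inequality is only required on the support and the mean inequality must be added by hand, whereas here it is encoded in the tail behaviour of $H_2$. Your proof makes this implicit point explicit, which is a genuine advantage over merely citing the literature.

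Two minor remarks. First, the boundary term $f(x)H(x)\to 0$ in the first integration by parts deserves one more word: since $f$ is concave it grows at most linearly, and finiteness of first moments gives $x(1-F(x))\to 0$, $x(1-G(x))\to 0$, hence $f(x)H(x)\to 0$. Second, in the $(\Rightarrow)$ direction, the smoothed $f_\varepsilon$ has $f_\varepsilon'''$ concentrated near $\eta$ with total mass $+2$, so the smoothing is indeed sign-preserving; you say ``harmless'' and that is accurate, but it is the only place where $f'''\ge 0$ is actually used on that side.
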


\begin{proof}
See Theorem 2.2 of \cite{gotoh2000third}. 
\end{proof}
We are now ready to construct the counterexample stated in Example
\ref{lem:TSD}. 
\begin{proof}
Apply the chain rule to show that $\mathbb{F}_{TSD}$ is composition-consistent.
Next, we construct two distribution functions such that one dominates
the other in TSD but is cheaper.\emph{}\\
\emph{Step 1: define some market setting as in Section~\ref{subsec:Example:-Uncertainty-about-drift}:}
let $\mu_{1}=0.01$, $r=0$, $T=1$ and $s=0.1$. Then, $\frac{\mu_{1}-r}{s^{2}}=1$.
Choose $S_{0}$ such that it holds that $\ell^{\ast}:=\ell^{\mathbb{P}^{\mu_{1}}}=\ensuremath{h^{\mu_{1},s}(S_{T})}=S_{T}$,
i.e., $\log(S_{0})=-0.0025$. Under $\mathbb{P}^{\ast}:=\mathbb{P}^{\mu_{1}}$
the stock is log-normally distributed with parameters $(\mu_{1}-\frac{s^{2}}{2}+\log(S_{0}))=0.0025$
and $s^{2}=0.01$. Under $\mathbb{Q}$, the stock is also log-normally
distributed with parameters $(r-\frac{s^{2}}{2}+\log(S_{0}))=-0.0075$
and $s^{2}=0.01$. $\mathbb{P}^{\ast}$ is a least favorable measure
with respect to the set $\mathbb{F}_{FSD}$, and so hence also is $\mathbb{F}_{TSD}$
because $\mathbb{F}_{TSD}\subset\mathbb{F}_{FSD}$.\emph{}\\
\emph{Step 2: define two distribution functions:} Let 
\[
F(x)=\begin{cases}
0 & ,x<0\\
x & ,0\leq x<1\\
1 & ,x\geq1
\end{cases}
\]
and, for $p_{0}\in(0,1)$, let 
\[
G(x)=\begin{cases}
0 & ,x<0\\
p_{0} & ,0\leq x<1\\
1 & ,x\geq1.
\end{cases}
\]
$F$ is the uniform distribution function and $G$ jumps at zero and
at one. It follows that $F^{-1}(p)=p$ and that 
\[
G^{-1}(p)=\begin{cases}
0 & ,p\in(0,p_{0}]\\
1 & ,p>p_{0}.
\end{cases}
\]
\emph{Step 3: show that $F$ dominates $G$ in TSD: }It holds for
$\eta\in(0,1)$ that 
\[
\int_{-\infty}^{\eta}\int_{-\infty}^{\xi}F(x)dxd\xi=\int_{0}^{\eta}\int_{0}^{\xi}xdxd\xi=\frac{1}{6}\eta^{3}.
\]
and that 
\[
\int_{-\infty}^{\eta}\int_{-\infty}^{\xi}G(x)dxd\xi=\int_{0}^{\eta}\int_{0}^{\xi}p_{0}dxd\xi=\frac{1}{2}p_{0}\eta^{2}.
\]
Hence, if $\frac{1}{6}\eta^{3}\leq\frac{1}{2}p_{0}\eta^{2}$ or, equivalently,
$p_{0}\geq\frac{1}{3}$, it follows that $G\preceq_{\mathbb{F}_{TSD}}F$.\emph{}\\
\emph{Step 4: compute the lowest cost of both distribution functions:
}The cost-efficient payoff for $F$ is 
\[
X_{F}=F^{-1}\left(F_{\ell^{\ast}}^{\mathbb{P}^{\ast}}(\ell^{\ast})\right)=F_{S_{T}}^{\mathbb{P}^{\ast}}(S_{T}).
\]
The lowest price of $F$ can be computed numerically: 
\[
E_{\mathbb{Q}}[X_{F}]=\int_{0}^{\infty}F_{S_{T}}^{\mathbb{P}^{\ast}}(s)f_{S_{T}}^{r}(s)ds=0.472.
\]
The cost-efficient payoff for $G$ is 
\[
X_{G}=G^{-1}\left(F_{\ell^{\ast}}^{\mathbb{P}^{\ast}}(\ell^{\ast})\right)=\begin{cases}
1 & ,S_{T}>\left[F_{S_{T}}^{\mathbb{P}^{\ast}}\right]^{-1}(p_{0})\\
0 & ,\text{otherwise}
\end{cases}.
\]
Its price is 
\[
E_{\mathbb{Q}}[X_{G}]=\int_{\left[F_{S_{T}}^{\mathbb{P}^{\ast}}\right]^{-1}(p_{0})}^{\infty}f_{S_{T}}^{r}(s)ds=1-F_{S_{T}}^{\mathbb{Q}}\left(\left[F_{S_{T}}^{\mathbb{P}^{\ast}}\right]^{-1}(p_{0})\right).
\]
Under $\mathbb{P}^{\ast}$, $X_{F}$ is uniform distributed and $X_{G}$
is a digital option. If $p_{0}=\frac{1}{3}$, the lowest price for
$G$ is $0.63$, which is greater than the lowest price to be paid
for $F$. But in this case $G\preceq_{\mathbb{F}_{TSD}}F$: hence,
TSD is not cost-consistent. 
\end{proof}

\section{Auxiliary results}
\begin{lemma}
\label{lem:N_cost_eff_P}Fix $\mathbb{P}\in\mathcal{P}$. Let $\ell:=\frac{d\mathbb{P}}{d\mathbb{Q}}$
be the Radon--Nikodym derivative of $\mathbb{P}$ with respect to
$\mathbb{Q}$. Assume that under $\mathbb{P}$ $\ell$ is continuously distributed and that $1/\ell$ has finite variance. There is a $\mathbb{P}-$a.s. unique optimizer to the standard cost-efficiency
problem under the probability measure \emph{$\mathbb{P}$} given by
\[
X^{\ast}=F_{0}^{-1}\left(F_{\ell^{\mathbb{P}}}^{\mathbb{P}}\left(\ell\right)\right).
\]
$X^{\ast}$ is left-continuous and non-decreasing $\mathbb{P}-$a.s. 
\end{lemma}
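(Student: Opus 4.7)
The plan is to reduce the cost-efficiency problem to a Fréchet/Hardy--Littlewood rearrangement problem under $\mathbb{P}$. First I would rewrite the price as
\[
e^{-rT}E_{\mathbb{Q}}[X]=E_{\mathbb{P}}\!\left[\tfrac{e^{-rT}}{\ell}\,X\right]=e^{-rT}E_{\mathbb{P}}[\xi X],
\]
where $\xi:=1/\ell$, so that for any $X\in\mathcal{A}^{\mathbb{P}}_{F_{0}}$ the minimization is a coupling problem between two $\mathbb{P}$-laws: the prescribed law $F_{0}$ of $X$ and the law of $\xi$. The standard Hardy--Littlewood inequality then yields the lower bound
\[
E_{\mathbb{P}}[\xi X]\;\geq\;\int_{0}^{1}F_{\xi}^{-1}(u)\,F_{0}^{-1}(1-u)\,du,
\]
with equality if and only if $X$ and $\xi$ are anti-comonotone under $\mathbb{P}$, i.e.\ $X$ is a non-decreasing function of $\ell$ almost surely.

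Next I would verify that the candidate $X^{\ast}=F_{0}^{-1}\!\left(F_{\ell}^{\mathbb{P}}(\ell)\right)$ achieves this bound. Since $\ell$ is continuously distributed under $\mathbb{P}$, the random variable $U:=F_{\ell}^{\mathbb{P}}(\ell)$ is uniformly distributed on $[0,1]$ under $\mathbb{P}$, so $X^{\ast}$ has distribution $F_{0}$ under $\mathbb{P}$ and therefore lies in $\mathcal{A}^{\mathbb{P}}_{F_{0}}$; moreover $X^{\ast}$ is by construction a non-decreasing function of $\ell$. The Cauchy--Schwarz bound together with the hypothesis that $1/\ell$ has finite $\mathbb{P}$-variance and the square integrability of $F_{0}^{-1}$ (recall Condition~\ref{assu:F0^-1}) shows $X^{\ast}\in\mathcal{X}$, and a direct computation through the quantile transform identifies $E_{\mathbb{P}}[\xi X^{\ast}]$ with the integral on the right-hand side above.

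For uniqueness, I would exploit the continuity of $F_{\ell}^{\mathbb{P}}$: equality in the Hardy--Littlewood bound forces $X$ to be $\mathbb{P}$-a.s.\ a non-increasing function of $\xi$, equivalently a non-decreasing function of $\ell$; any two non-decreasing functions of $\ell$ with the same distribution $F_{0}$ must coincide $\mathbb{P}$-a.s.\ (because composing with $F_{\ell}^{\mathbb{P}}$ injects $\ell$ into a uniform variable and $F_{0}^{-1}$ is then determined). The left-continuity and monotonicity of $X^{\ast}$ follow immediately from the standard left-continuity of the quantile function $F_{0}^{-1}$ composed with the non-decreasing map $F_{\ell}^{\mathbb{P}}$.

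The main technical obstacle is the identification of the attainment case in the rearrangement inequality: to get both existence of the minimizer in the stated closed form \emph{and} $\mathbb{P}$-a.s.\ uniqueness, one really needs the continuity of the distribution of $\ell$; without it, the comonotone coupling is not unique and $F_{\ell}^{\mathbb{P}}(\ell)$ need not be uniform. Once that point is handled (using Condition~\ref{assu:l*} applied to $\mathbb{P}$), the remaining steps are purely bookkeeping.
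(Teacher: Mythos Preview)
Your proposal is correct and is precisely the argument that underlies the references the paper cites (Dybvig; Bernard, Boyle and Vanduffel; Schied): the paper's own proof of this lemma consists only of the identity $1-F_{\xi}^{\mathbb{P}}(\xi)=F_{\ell}^{\mathbb{P}}(\ell)$ together with citations, whereas you have written out the Hardy--Littlewood rearrangement step and the uniqueness argument explicitly. In particular, your identification of the continuity of $F_{\ell}^{\mathbb{P}}$ as the key ingredient for both attainment and $\mathbb{P}$-a.s.\ uniqueness matches the paper's remark pointing to Schied's Proposition~2.7.
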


\begin{proof}
Let $\xi=\frac{e^{-rT}}{\ell}$. Then, $1-F_{\xi}^{\mathbb{P}}(\xi)=F_{\ell^{\mathbb{P}}}^{\mathbb{P}}\left(\ell\right).$
This claim follows both from \cite{dybvigJoB,dybvigRFS} and from Corollary 2 in
\cite{bernard2013explicit}. See also \citet[Proposition 2.7]{schied2004neyman} for the importance of the  continuity assumption of $\ell$ in obtaining the uniqueness.

\end{proof}
\begin{lemma}
\label{lem:cost_eff_non_decreasing}Fix $\mathbb{P}\in\mathcal{P}$.
Let $\ell:=\frac{d\mathbb{P}}{d\mathbb{Q}}$. Assume that $\ell$ is continuously
distributed under $\mathbb{P}$. A payoff $X\in\mathcal{A}_{F_{0}}^{\mathbb{P}}$
is \emph{$\mathbb{P}-$}cost-efficient if and only if it is non-decreasing
in $\ell$, $\mathbb{P}-$almost surely. 
\end{lemma}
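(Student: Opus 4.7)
My plan is to prove the two implications by leveraging Lemma~\ref{lem:N_cost_eff_P}, which has already identified the unique $\mathbb{P}$-cost-efficient payoff for the target $F_0$ in closed form as $X^{\ast}=F_{0}^{-1}(F_{\ell}^{\mathbb{P}}(\ell))$.

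For the \emph{only if} direction, I would argue as follows. Let $X\in\mathcal{A}_{F_{0}}^{\mathbb{P}}$ be $\mathbb{P}$-cost-efficient. Then by the uniqueness part of Lemma~\ref{lem:N_cost_eff_P}, $X=X^{\ast}=F_{0}^{-1}(F_{\ell}^{\mathbb{P}}(\ell))$ $\mathbb{P}$-almost surely. Since $F_{\ell}^{\mathbb{P}}$ is non-decreasing (it is a cdf) and $F_{0}^{-1}$ is non-decreasing (it is a generalized inverse of a cdf), the composition $x\mapsto F_{0}^{-1}(F_{\ell}^{\mathbb{P}}(x))$ is non-decreasing. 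Hence $X$ is a non-decreasing function of $\ell$, $\mathbb{P}$-a.s.

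For the \emph{if} direction, suppose $X\in\mathcal{A}_{F_{0}}^{\mathbb{P}}$ and $X=g(\ell)$ $\mathbb{P}$-a.s.\ for some non-decreasing function $g$. The key step is to show that this forces $X=X^{\ast}$ $\mathbb{P}$-a.s. Set $U:=F_{\ell}^{\mathbb{P}}(\ell)$; by the continuity of $F_{\ell}^{\mathbb{P}}$ (which follows from the continuous distribution of $\ell$), $U$ is uniform on $(0,1)$ and $\ell=[F_{\ell}^{\mathbb{P}}]^{-1}(U)$ $\mathbb{P}$-a.s. Thus $X=g([F_{\ell}^{\mathbb{P}}]^{-1}(U))=:\tilde g(U)$ where $\tilde g$ is non-decreasing. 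Since $X$ has cdf $F_0$ under $\mathbb{P}$ and $U$ is uniform, the a.s.\ uniqueness of monotone rearrangements (or equivalently the classical fact that the only non-decreasing function of a uniform variable with distribution $F_0$ is $F_0^{-1}$, up to the countably many jump points of $F_0^{-1}$ which are negligible under a continuous law) gives $\tilde g(U)=F_0^{-1}(U)$ $\mathbb{P}$-a.s., so $X=F_0^{-1}(F_{\ell}^{\mathbb{P}}(\ell))=X^{\ast}$ $\mathbb{P}$-a.s. By Lemma~\ref{lem:N_cost_eff_P}, $X^{\ast}$ is $\mathbb{P}$-cost-efficient, and therefore so is $X$.

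The main obstacle is the \emph{if} direction, specifically the uniqueness of a non-decreasing representative that realizes the distribution $F_0$ from a uniform variable: one must handle the points where $F_0^{-1}$ may be discontinuous and invoke the continuity of $\ell$ (hence of $U$) to ensure that any two monotone realizations coincide almost surely. Everything else reduces to a direct application of Lemma~\ref{lem:N_cost_eff_P} together with the monotonicity of cdfs and their generalized inverses.
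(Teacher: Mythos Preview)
The paper does not actually prove this lemma: its ``proof'' is the single line ``See \citet[Corollary~2 and Proposition~2]{bernard2013explicit}.'' Your proposal, by contrast, supplies a self-contained argument by reducing both implications to Lemma~\ref{lem:N_cost_eff_P}, and it is correct.

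Two minor remarks are worth making. First, Lemma~\ref{lem:N_cost_eff_P} as stated carries the additional hypothesis that $1/\ell$ has finite variance under $\mathbb{P}$, which the present lemma does not assume; but that hypothesis is only used there to ensure $X^{\ast}\in\mathcal{X}$, and here this is automatic since you start from $X\in\mathcal{A}_{F_0}^{\mathbb{P}}\subset\mathcal{X}$ and establish $X=X^{\ast}$ $\mathbb{P}$-a.s. Second, the \emph{if} direction admits a shorter route that bypasses the monotone-rearrangement uniqueness step: if $X=g(\ell)$ with $g$ non-decreasing, then the pair $(X,1/\ell)$ is countermonotone, and the Hoeffding--Fr\'echet lower bound gives
\[
E_{\mathbb{Q}}[X]=E_{\mathbb{P}}\!\left[\tfrac{X}{\ell}\right]\le E_{\mathbb{P}}\!\left[\tfrac{Y}{\ell}\right]=E_{\mathbb{Q}}[Y]
\]
for every $Y$ with $F_Y^{\mathbb{P}}=F_0$, which is cost-efficiency directly. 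Your approach via $X=X^{\ast}$ is equally valid and has the bonus of recovering the a.s.\ uniqueness of the cost-efficient payoff along the way.
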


\begin{proof}
See \citet[Corollary 2 and Proposition 2]{bernard2013explicit}.
\end{proof}

In the following two lemmas we show that the solution of the
single or robust maximization problem is cost-efficient if it is unique. 
\begin{lemma}
\label{cor:unique_P*}Let $\mathbb{P}\in\mathcal{P}$
with corresponding likelihood ratio $\ell^{\mathbb{P}}$. Assume that
$\ell^{\mathbb{P}}$ is continuously distributed and that $W_{\mathbb{P}}$
is $\mathbb{P}-$law invariant and $\tilde{X}$ is a $\mathbb{P}-$a.s.
unique solution to the maximization problem (\ref{eq:singleMaximization})
under $\mathbb{P}$. Then, $\tilde{X}$ is $\mathbb{P}-$cost-efficient. 
\end{lemma}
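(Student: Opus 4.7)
The plan is to exploit the uniqueness of the optimizer together with the law invariance of $W_{\mathbb{P}}$ to force $\tilde{X}$ to coincide $\mathbb{P}-$a.s.\ with the $\mathbb{P}-$cost-efficient payoff corresponding to its own distribution under $\mathbb{P}$.

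Concretely, I would begin by constructing the candidate competitor
\[
X^{\ast}=F_{\tilde{X}}^{\mathbb{P},\,-1}\!\left(F_{\ell^{\mathbb{P}}}^{\mathbb{P}}(\ell^{\mathbb{P}})\right).
\]
By Lemma~\ref{lem:N_cost_eff_P}, applied with target distribution $F_0 = F_{\tilde{X}}^{\mathbb{P}}$ (which is legitimate because $\ell^{\mathbb{P}}$ is assumed continuously distributed under $\mathbb{P}$), $X^{\ast}$ is the $\mathbb{P}-$a.s.\ unique $\mathbb{P}-$cost-efficient payoff for this distribution. In particular, $F_{X^{\ast}}^{\mathbb{P}}=F_{\tilde{X}}^{\mathbb{P}}$ and $e^{-rT}E_{\mathbb{Q}}[X^{\ast}]\leq e^{-rT}E_{\mathbb{Q}}[\tilde{X}]\leq x_{0}$.

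Next I would check that $X^{\ast}\in\mathcal{Y}_{W_{\mathbb{P}}}^{x_{0}}$. The budget constraint is already verified, and membership in $\mathcal{X}$ follows since $\ell^{\mathbb{P}}$ is $\mathcal{F}$-measurable (hence a measurable function of $S_T$) and $E_{\mathbb{Q}}[|X^{\ast}|]<\infty$. Since $W_{\mathbb{P}}$ is $\mathbb{P}-$law invariant and $F_{X^{\ast}}^{\mathbb{P}}=F_{\tilde{X}}^{\mathbb{P}}$, we obtain $W_{\mathbb{P}}(X^{\ast})=W_{\mathbb{P}}(\tilde{X})\in\mathbb{R}$, so indeed $X^{\ast}$ lies in the admissible set.

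Consequently $X^{\ast}$ achieves the same optimal value as $\tilde{X}$ in problem~\eqref{eq:singleMaximization} under $\mathbb{P}$. By the $\mathbb{P}-$a.s.\ uniqueness of the optimizer, $X^{\ast}=\tilde{X}$ $\mathbb{P}-$a.s., so $\tilde{X}$ inherits the cost-efficient representation of $X^{\ast}$ and is therefore $\mathbb{P}-$cost-efficient (equivalently, $\tilde{X}$ is non-decreasing in $\ell^{\mathbb{P}}$, $\mathbb{P}-$a.s., by Lemma~\ref{lem:cost_eff_non_decreasing}). There is no genuine obstacle beyond bookkeeping; the only mild subtlety is to ensure $X^{\ast}\in\mathcal{Y}_{W_{\mathbb{P}}}^{x_{0}}$ without invoking monotonicity of $W_{\mathbb{P}}$, which is precisely where law invariance (rather than some stronger property) does all the work.
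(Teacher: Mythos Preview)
Your proof is correct and follows essentially the same approach as the paper: construct the $\mathbb{P}$-cost-efficient competitor $X^{\ast}=\left[F_{\tilde{X}}^{\mathbb{P}}\right]^{-1}\!\left(F_{\ell^{\mathbb{P}}}^{\mathbb{P}}(\ell^{\mathbb{P}})\right)$, use law invariance to show it is admissible and attains the same objective value, and then invoke uniqueness to conclude $\tilde{X}=X^{\ast}$ $\mathbb{P}$-a.s. Your version is in fact slightly more careful than the paper's in spelling out why $X^{\ast}\in\mathcal{Y}_{W_{\mathbb{P}}}^{x_{0}}$ (in particular, that $W_{\mathbb{P}}(X^{\ast})\in\mathbb{R}$ follows from law invariance rather than from any monotonicity assumption).
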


\begin{proof}
Let 
\[
X^{\ast}=\left[F_{\tilde{X}}^{\mathbb{P}}\right]^{-1}\left(F_{\ell^{\mathbb{P}}}^{\mathbb{P}}(\ell^{\mathbb{P}})\right).
\]
Then $X^{\ast}$ solves the standard cost-efficiency problem for $F_{\tilde{X}}^{\mathbb{P}}$
and thus $E_{\mathbb{Q}}[X^{\ast}]\leq E_{\mathbb{Q}}[\tilde{X}]$
and $F_{\tilde{X}}^{\mathbb{P}}=F_{X^{\ast}}^{\mathbb{P}}$; hence,
by the law invariance of $\left(W_{\mathbb{P}}\right)_{\mathbb{P}\in\mathcal{P}}$,
it holds that $X^{\ast}\in\mathcal{Y}_{W_{\mathbb{P}}}^{x_{0}}$.
It follows by law invariance that 
\[
\underset{X\in\mathcal{Y}_{W_{\mathbb{P}}}^{x_{0}}}{\max}\,W_{\mathbb{P}}(X)=W_{\mathbb{P}}(\tilde{X})=W_{\mathbb{P}}(X^{\ast}).
\]
As $\tilde{X}$ is the unique solution, it must hold that $\tilde{X}=X^{\ast}$,
$\P-$a.s., and thus $\tilde{X}$ is cost-efficient. 
\end{proof}
\begin{lemma}
\label{cor:unique_robust}Assume $\mathbb{F}=\mathbb{F}_{FSD}$. Given
Assumptions \ref{assu:setF}, \ref{assu:least_fav} and \ref{assu:l*},
assume that the robust maximization problem (\ref{eq:robust maximization problem})
has a unique solution $\tilde{X}$ and that $\left(W_{\mathbb{P}}\right)_{\mathbb{P}\in\mathcal{P}}$
is law invariant and $\mathbb{F}_{FSD}-$family consistent on $\mathcal{Y}_{(W_{\mathbb{P}})_{\mathbb{P}\in\mathcal{P}}}^{x_{0}}$
with respect to $\mathbb{P}^{\ast}$. Then, $\tilde{X}$ is $\mathbb{P}^{\ast}-$cost-efficient. 
\end{lemma}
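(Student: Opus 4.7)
The plan is to mimic the structure of Lemma~\ref{cor:unique_P*} but now exploiting the family consistency across all plausible measures rather than law invariance alone. The candidate is the classical cost-efficient version of $\tilde X$ under $\mathbb{P}^{\ast}$, namely
\[
X^{\ast}:=\left[F_{\tilde X}^{\mathbb{P}^{\ast}}\right]^{-1}\!\left(F_{\ell^{\ast}}^{\mathbb{P}^{\ast}}(\ell^{\ast})\right),
\]
and I would show that $X^{\ast}$ is an equally good solution to the robust problem, so that uniqueness of $\tilde X$ forces $\tilde X = X^{\ast}$ $\mathbb{P}^{\ast}$-a.s.

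First, by Lemma~\ref{lem:N_cost_eff_P} applied under $\mathbb{P}^{\ast}$ (the continuity of $F_{\ell^{\ast}}^{\mathbb{P}^{\ast}}$ from Condition~\ref{assu:l*} is crucial here), $X^{\ast}$ is the $\mathbb{P}^{\ast}$-cost-efficient payoff with distribution $F_{\tilde X}^{\mathbb{P}^{\ast}}$. Consequently $F_{X^{\ast}}^{\mathbb{P}^{\ast}} = F_{\tilde X}^{\mathbb{P}^{\ast}}$ and $E_{\mathbb{Q}}[X^{\ast}]\leq E_{\mathbb{Q}}[\tilde X]\leq e^{rT}x_{0}$. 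Combining law invariance of $W_{\mathbb{P}^{\ast}}$ with the assumption that $\tilde X\in\mathcal{Y}_{(W_{\mathbb{P}})_{\mathbb{P}\in\mathcal{P}}}^{x_{0}}$, one obtains $W_{\mathbb{P}^{\ast}}(X^{\ast}) = W_{\mathbb{P}^{\ast}}(\tilde X)\in\mathbb{R}$; together with the budget constraint this places $X^{\ast}$ in $\mathcal{Y}_{W_{\mathbb{P}^{\ast}}}^{x_{0}}$, and further in $\mathcal{Y}_{(W_{\mathbb{P}})_{\mathbb{P}\in\mathcal{P}}}^{x_{0}}$ once the values $W_{\mathbb{P}}(X^{\ast})$ for the remaining measures are shown to be finite (I will use the family consistency argument below to sandwich them).

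The core step is to invoke family consistency for $X^{\ast}$. Since $[F_{\tilde X}^{\mathbb{P}^{\ast}}]^{-1}\circ F_{\ell^{\ast}}^{\mathbb{P}^{\ast}}$ is non-decreasing (it lies in $\mathbb{F}_{FSD}$), $X^{\ast}$ is a non-decreasing function of $\ell^{\ast}$. Condition~\ref{assu:least_fav} gives $F_{\ell^{\ast}}^{\mathbb{P}^{\ast}}\preceq_{\mathbb{F}_{FSD}} F_{\ell^{\ast}}^{\mathbb{P}}$ for every $\mathbb{P}\in\mathcal{P}$, and composition-consistency of $\mathbb{F}_{FSD}$ (Condition~\ref{assu:setF}) then yields, via~\eqref{eq:F_X =00003D> F_f(X)}, that $F_{X^{\ast}}^{\mathbb{P}^{\ast}}\preceq_{\mathbb{F}_{FSD}} F_{X^{\ast}}^{\mathbb{P}}$ for every $\mathbb{P}\in\mathcal{P}$. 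By $\mathbb{F}_{FSD}$-family consistency of $(W_{\mathbb{P}})_{\mathbb{P}\in\mathcal{P}}$ with respect to $\mathbb{P}^{\ast}$,
\[
W_{\mathbb{P}^{\ast}}(X^{\ast})\leq W_{\mathbb{P}}(X^{\ast}),\qquad \mathbb{P}\in\mathcal{P},
\]
so that $\inf_{\mathbb{P}\in\mathcal{P}} W_{\mathbb{P}}(X^{\ast}) = W_{\mathbb{P}^{\ast}}(X^{\ast}) = W_{\mathbb{P}^{\ast}}(\tilde X)\geq \inf_{\mathbb{P}\in\mathcal{P}} W_{\mathbb{P}}(\tilde X)$. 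Since $\tilde X$ is a maximizer of the robust problem, the reverse inequality $\inf_{\mathbb{P}} W_{\mathbb{P}}(\tilde X)\geq \inf_{\mathbb{P}} W_{\mathbb{P}}(X^{\ast})$ also holds, hence $X^{\ast}$ is itself a robust maximizer. Uniqueness then forces $\tilde X = X^{\ast}$ $\mathbb{P}^{\ast}$-a.s., which is exactly the characterization of $\mathbb{P}^{\ast}$-cost-efficiency from Lemma~\ref{lem:N_cost_eff_P}.

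The delicate point I expect to be the main obstacle is ensuring that $X^{\ast}\in\mathcal{Y}_{(W_{\mathbb{P}})_{\mathbb{P}\in\mathcal{P}}}^{x_{0}}$ — in particular that $W_{\mathbb{P}}(X^{\ast})$ is a real number for every $\mathbb{P}\in\mathcal{P}$ and not merely extended-real-valued — because only then can $X^{\ast}$ legitimately compete with $\tilde X$ in the $\sup$-$\inf$ problem, allowing us to exploit uniqueness. This is handled by the chain $W_{\mathbb{P}^{\ast}}(X^{\ast}) \leq W_{\mathbb{P}}(X^{\ast})$ (finiteness from below) combined with the fact that $\tilde X$ achieves a finite robust value (giving a natural upper control through the family consistency of preferences on the admissible set), and by the identity $W_{\mathbb{P}^{\ast}}(X^{\ast}) = W_{\mathbb{P}^{\ast}}(\tilde X)\in\mathbb{R}$ via law invariance.
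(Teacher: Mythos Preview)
Your approach is essentially identical to the paper's: define $X^{\ast}=\bigl[F_{\tilde X}^{\mathbb{P}^{\ast}}\bigr]^{-1}\!\bigl(F_{\ell^{\ast}}^{\mathbb{P}^{\ast}}(\ell^{\ast})\bigr)$, use law invariance together with $\mathbb{F}_{FSD}$-family consistency to show that $X^{\ast}$ attains the same robust value as $\tilde X$, and conclude by uniqueness. The paper's proof is extremely terse (it simply states the key equality $\inf_{\mathbb{P}}W_{\mathbb{P}}(\tilde X)=\inf_{\mathbb{P}}W_{\mathbb{P}}(X^{\ast})$ ``by law invariance'' and refers back to Lemma~\ref{cor:unique_P*}); you have written out the family-consistency step explicitly, which is in fact needed and is only implicit in the paper.

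The admissibility issue you flag --- showing $W_{\mathbb{P}}(X^{\ast})\in\mathbb{R}$ for \emph{every} $\mathbb{P}\in\mathcal{P}$ before one may invoke family consistency on $\mathcal{Y}_{(W_{\mathbb{P}})_{\mathbb{P}\in\mathcal{P}}}^{x_{0}}$ --- is a genuine subtlety, and your proposed resolution is somewhat circular (the lower bound $W_{\mathbb{P}^{\ast}}(X^{\ast})\leq W_{\mathbb{P}}(X^{\ast})$ already presupposes $X^{\ast}$ lies in the set on which family consistency is asserted). The paper glosses over exactly the same point, both here and in the proof of Theorem~\ref{(Theorem-4.5).-Assume}, so you are at the paper's level of rigor; but be aware that a fully clean argument would either assume family consistency on a larger set or verify finiteness of $W_{\mathbb{P}}(X^{\ast})$ by independent means.
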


\begin{proof}
The proof is similar to the one for Lemma \ref{cor:unique_P*}: let 
\[
X^{\ast}=\left[F_{\tilde{X}}^{\mathbb{P}^{\ast}}\right]^{-1}\left(F_{\ell^{\ast}}^{\mathbb{P}^{\ast}}(\ell^{\ast})\right).
\]
It holds by law invariance that 
\[
\underset{X\in\mathcal{Y}_{(W_{\mathbb{P}})_{\mathbb{P}\in\mathcal{P}}}^{x_{0}}}{\max}\,\underset{\mathbb{P}\in\mathcal{P}}{\inf}\,W_{\mathbb{P}}(X)=\underset{\mathbb{P}\in\mathcal{P}}{\inf}\,W_{\mathbb{P}}(\tilde{X})=\underset{\mathbb{P}\in\mathcal{P}}{\inf}\,W_{\mathbb{P}}(X^{\ast}).
\]
\end{proof}

\bibliography{biblio}
{} \bibliographystyle{plainnat}

\end{document}